\newtheorem{definition}{Definition}[section]
\newtheorem{theorem}{Theorem}[section]
\newtheorem*{conjecture*}{Conjecture}
\newtheorem{corollary}{Corollary}[section]
\newtheorem*{theorem*}{Theorem}
\newtheorem*{corollary*}{Corollary}
\newtheorem{proposition}{Proposition}[subsection]
\newtheorem{lemma}{Lemma}[subsection]
\newtheorem{remark}{Remark}[section]
  \newtheoremstyle{named}{}{}{\itshape}{}{\bfseries}{.}{.5em}{\thmnote{#3 }#1}
\theoremstyle{named}
\newtheorem*{namedtheorem}{Theorem}
\newcommand{\bea}{\begin{eqnarray}}
\newcommand{\eea}{\end{eqnarray}}
\def\beaa{\begin{eqnarray*}}
\def\eeaa{\end{eqnarray*}}
\def\ba{\begin{array}}
\def\ea{\end{array}}
\def\be#1{\begin{equation} \label{#1}}
\def \eeq{\end{equation}}
\def\bsplit{\begin{split}}
\newcommand{\nn}{\nonumber}
\def\les{\lesssim}
\def\c{\cdot}
\def\dual{{\,^\star \mkern-2mu}}
\def\tr{\mbox{tr}}
\newcommand{\nabb}{\nab\mkern-13mu /\,}
\newcommand{\curl}{\mbox{curl }}
\newcommand{\divv}{\mbox{div}\mkern-19mu /\,\,\,\,}
\newcommand{\lapp}{\mbox{$\bigtriangleup  \mkern-13mu / \,$}}
\newcommand{\curll}{\mbox{curl}\mkern-19mu /\,\,\,\,}
\def\nab{\nabla}
\def\pr{\partial}
\def\DDs{ \, \DD \hspace{-2.4pt}\dual    \mkern-16mu /}
\def\DDd{ \, \DD \hspace{-2.4pt}    \mkern-8mu /}
\def\a{\alpha}
\def\b{\beta}
\def\de{\delta}
\def\ep{\epsilon}
\def\om{\omega}
\def\th{{\theta}}
\def\ka{\kappa}
\def\ze{\zeta}
\def\vphi{{\varphi}}
\renewcommand{\aa}{\protect\underline{\a}}
\newcommand{\bb}{\protect\underline{\b}}
\def\omb{{\underline{\om}}}
\newcommand{\chib}{\underline{\chi}}
\newcommand{\xib}{\underline{\xi}}
\newcommand{\etab}{\underline{\eta}}
\def\kab{\underline{\kappa}}
\def\DD{{\mathcal D}}
\def\MM{{\mathcal M}}
\def\D{{\bf D}}
\def\F{{\bf F}}
\def\R{{\bf R}}
\def\W{{\bf W}}
\def\g{{\bf g}}
\def\fb{\underline{f}}
\def\pf{\frak{p}}
\def\qf{\frak{q}}
\def\ff{\frak{f}}
\def\chih{\widehat{\chi}}
\def\chibh{\widehat{\chib}}
\def\bF{\,^{(F)} \hspace{-2.2pt}\b}
\def\bbF{\,^{(F)} \hspace{-2.2pt}\bb}
\def\rhoF{\,^{(F)} \hspace{-2.2pt}\rho}
\def\sigmaF{\,^{(F)} \hspace{-2.2pt}\sigma}
\def\c{\cdot}
\def \f12{\frac 1 2 }
\def\ov{\overline}
\DeclareMathAlphabet\mathbfcal{OMS}{cmsy}{b}{n}
\title{Boundedness and decay for the Teukolsky equation of spin $\pm1$ on Reissner-Nordstr{\"o}m spacetime: the $\ell=1$ spherical mode}
\author{Elena Giorgi}
\date{}
\begin{document}

\maketitle

\begin{abstract}
We prove boundedness and polynomial decay statements for solutions to the spin $\pm1$ Teukolsky-type equation projected to the $\ell=1$ spherical harmonic on Reissner-Nordstr{\"o}m spacetime. The equation is verified by a gauge-invariant quantity which we identify and which involves the electromagnetic and curvature tensor. This gives a first description in physical space of gauge-invariant quantities transporting the \textit{electromagnetic radiation} in perturbations of a charged black hole.  

The proof is based on the use of derived quantities, introduced in previous works on linear stability of Schwarzschild (\cite{DHR}).  The derived quantity verifies a Fackerell-Ipser-type equation, with right hand side vanishing at the $\ell=1$ spherical harmonics. The boundedness and decay for the projection to the $\ell\geq 2$ spherical harmonics are implied by the boundedness and decay for the Teukolsky system of spin $\pm2$ obtained in \cite{Giorgi4}.

The spin $\pm1$ Teukolsky-type equation is verified by the curvature and electromagnetic components of a gravitational and electromagnetic perturbation of the Reissner-Nordstr{\"o}m spacetime. Consequently, together with the estimates obtained in \cite{Giorgi4}, these bounds allow to prove the full linear stability of Reissner-Nordstr{\"o}m metric for small charge to coupled gravitational and electromagnetic perturbations. 
\end{abstract}

  \tableofcontents
  
\section{Introduction}
The problem of stability of the Kerr family as solution to the Einstein vacuum equation is one the main open problems in General Relativity. The equivalent problem in the setting of electrovacuum spacetimes is the stability of the Kerr-Newman family as solution to the Einstein-Maxwell equations. 

 The Reissner-Nordstr{\"o}m family
  of spacetimes $(\mathcal{M},g_{M, Q})$ is the simplest non-trivial solution of the Einstein-Maxwell equation. It can be   expressed  in local coordinates as
  \begin{equation}
  \label{RNintro}
 g_{M, Q}=-\left(1-\frac{2M}{r}+\frac{Q^2}{r^2}\right)dt^2 +\left(1-\frac{2M}{r}+\frac{Q^2}{r^2}\right)^{-1}dr^2 +r^2(d\theta^2+\sin^2\theta d\phi^2),
  \end{equation}
where $M$ and $Q$ are two real parameters verifying $|Q|<M$.

The ultimate goal of the resolution to the problem of non-linear stability of a solution of the Einstein equation consists in showing that it is stable to small perturbations of initial data as a solution to the fully non-linear Einstein equation. In particular, a complete resolution would need to prove that small perturbations of a solution remain bounded for all times, and converge to another member of the family of solutions.  Intermediate steps are the formal mode analysis of the linearized Einstein equation and the proof of decay for the physical space linearized equation. 

The linearized gravity around a solution can be formally decomposed into modes, and such decomposition in modes  allows one to prove  what is known as mode stability, i.e the lack of exponentially growing modes for all metric or curvature components.  Extensive literature by the physics community covers the formal study of fixed modes from the point of view of metric perturbations and of Newman-Penrose formalism.

The original approach to mode stability of the Reissner-Nordstr{\"o}m spacetime are the metric pertubations, leading to a generalization of the Regge-Wheeler and Zerilli equations. In his studies of the metric perturbations of Reissner-Nordstr\"om spacetime in \cite{Moncrief1}, \cite{Moncrief2}, \cite{Moncrief3}, Moncrief reduced the governing equations to two pairs of one-dimensional wave equations which govern the odd and the even-parity perturbations. In \cite{Chandra-RN} and \cite{Chandra-RN2}, Chandrasekhar showed that the two pairs are related to each other: the solutions to one parity can be deduced from the solutions to the opposite parity.  See also \cite{Xanta} and \cite{Zerilli}. 

 Analysis of the fixed mode perturbations of the Reissner-Nordstr{\"o}m black hole via the Newman-Penrose formalism were treated in \cite{Chandra1} and \cite{Bicak}, where the corresponding Teukolsky equations are derived. In \cite{Chandra-RN2}, Chandrasekhar showed that the Newman-Penrose equations can be transformed to one dimensional wave equation appropriate for the odd and even parity perturbations, now called the (fixed-frequency) Chandrasekhar transformation. 
 
 Numerical works have supported the study of mode stability in Kerr-Newman and Reissner-Nordstr\"om spacetime, see for example \cite{numerics1} and \cite{numerics2}. Mode stability for charged spacetimes in higher dimensions has been treated in \cite{highdim}.

All the above results rely on the derivation of the equations in separated forms and are sufficient to prove that there are no exponentially growing modes, and to obtain control on the quasi-normal modes of the perturbations. However, this weak version of stability is far from sufficient to prove boundedness and decay of the solution even to the linearized Einstein equation. In particular, the behavior of the quasi-normal modes in the asymptotically flat regime is not sufficient to deduce any information on the solution in physical space. One needs instead to derive sufficiently strong decay estimates through a physical space analysis in order to be able to apply them in the nonlinear framework.

 In \cite{Giorgi4}, we started a program towards the proof of (physical space, non-modal) linear stability of the Reissner-Nordstr{\"o}m spacetime to coupled gravitational and electromagnetic perturbations. See the introduction of \cite{Giorgi4} for an introduction on the problem of stability of charged black holes.  Since we perturb the curvature tensor using null frames, our analysis is related to the one in Newman-Penrose formalism, but will not make any use of decomposition in modes. The analysis is entirely in physical space, and no choice of gauge are performed in this paper. We will instead treat gauge-invariant quantities, as defined in Section \ref{gauge-inv-section}.
 
 The main result in \cite{Giorgi4} is the proof of boundedness and decay for two symmetric traceless two tensors $\a$ and $\ff$,  governing the gravitational radiation of the Reissner-Nordstr{\"o}m spacetime. The decay for $\a$ and $\ff$, veryfing Teukolsky-type equations of spin $2$, was obtained by applying a  physical space version of the Chandrasekhar transformation to both $\a$ and $\ff$. This transformation consists in taking two null derivatives of $\a$ and one derivative of $\ff$ to obtain a system of generalized Regge-Wheeler equations for which combined estimates were derived.

 In the linear stability of Schwarzschild spacetime to gravitational perturbations in \cite{DHR}, the decay for $\a$ implies the decay of all the other curvature components and Ricci coefficients supported in $\ell\geq2$ spherical harmonics. 
 In addition, an intermediate step of the proof is the following theorem: Solutions of the linearized gravity around Schwarzschild supported only on $\ell=0,1$ spherical harmonics are a linearized Kerr plus a pure gauge solution.

In the setting of linear stability of Reissner-Nordstr{\"o}m to coupled gravitational and electromagnetic perturbations, we expect to have electromagnetic radiation supported in $\ell \geq 1$ spherical mode, as for solutions to the Maxwell equations in Schwarzschild (see \cite{Blue} or \cite{Federico}).

 In particular, the decay for the two tensors $\a$ and $\ff$ obtained in \cite{Giorgi4} will not give any decay information about the $\ell=1$ spherical mode of the perturbations. It turns out that, in the case of solutions to the linearized gravitational and electromagnetic perturbations around Reissner-Nordstr{\"o}m spacetime, the projection to the $\ell=0,1$ spherical harmonics is not exhausted by the linearized Kerr-Newman and the pure gauge solutions. Indeed, the presence of the Maxwell equations involving the extreme curvature component of the electromagnetic tensor, which is a one-form, transports electromagnetic radiation supported in $\ell\geq 1$ spherical harmonics. 
 
In \cite{Moncrief3}, Moncrief first analyzes the case of mode perturbations supported in $\ell=1$ spherical harmonics, in particular the ones which correspond to the electromagnetic radiation. He derives lapse and shift functions in the metric perturbation formalism which govern the perturbation in $\ell=1$.

 In \cite{Chandra}, a complex scalar quantity, written as $2 \Psi_1 \phi_1-3\phi_0 \Psi_2$ in Newman-Penrose formalism, was identified to be invariant to the first order for the infinitesimal rotations but was not used in the subsequent analysis. Indeed, it was used to show that a gauge where $\Psi_1$ and $\phi_1$ vanish identically cannot be chosen, while a gauge where $\phi_0=\phi_2=0$, the so called phantom gauge, can be chosen.  In \cite{Chandra}, the equations governing the perturbations in the Newman-Penrose formalism were written in the phantom gauge, and all the analysis was performed in such a gauge. In particular, by choosing the phantom gauge, the above quantity was being reduced to essentially a rescaled version of the curvature component $\Psi_1$.  
 
 Fixing a gauge in the derivation of the equations essentially prevents one to identify the gauge-invariant quantities, which instead ought to be interpreted as radiation and carry physical significance.
 For this reason, the gauge-independent quantities involved in the electromagnetic radiation in Reissner-Nordstr{\"o}m spacetime were not clearly identified up to this point.

 In this paper we introduce a gauge-independent one-form $\tilde{\beta}$, which is a mixed curvature and electromagnetic component (see \eqref{definition-tilde-b} for the definition). This is a tensorial version of the scalar combination identified in \cite{Chandra} and has the additional interesting property of vanishing for linearized Kerr-Newman solutions in the setting of linear stability of Reissner-Nordstr{\"o}m. In particular, we identify this quantity to be carrying the electromagnetic radiation in the coupled gravitational and electromagnetic perturbations of Reissner-Nordstr\"om spacetime.

It is remarkable that such $\tilde{\beta}$ verifies a spin $\pm1$ Teukolsky-type equation (derived in Appendix A), with non-trivial right hand side, which can be schematically written as 
\bea\label{Teukolksy-spin1}
\Box_{g_{M, Q}} \tilde{\beta}+c_1 L( \tilde{\beta})+ c_2 \underline{L}(\tilde{\beta})+ V_1 \tilde{\beta}= R.H.S.
\eea
where $L$ and $\underline{L}$ are outgoing and ingoing null vectors and the right hand side involves curvature components, electromagnetic components and Ricci coefficients. 

By applying a physical version of the Chandrasekhar transformation, we obtain a derived quantity $\mathfrak{p}$ at the level of one derivative of $\tilde{\beta}$. We define $\pf$ as (see \eqref{quantities-1})
\beaa
\pf=\frac{1}{\kab} \nabb_3(r^5\kab \ \tilde{\b})
\eeaa
where $\kab=\tr\chib$ is the trace of the second null fundamental form. Similar physical space versions of the Chandrasekhar transformations were introduced in \cite{DHR} and \cite{Federico} (see Section \ref{work-Federico} for a comparison of the derived quantities). This transformation has the remarkable property of turning the Teukolsky-type equation of spin $1$ \eqref{Teukolksy-spin1} into a Fackerell-Ipser-type\footnote{The Fackerell-Ipser equation was encountered in the study of Maxwell equations in Schwarzschild spacetime, see \cite{Federico}.} equation, with right hand side which vanishes in $\ell=1$ spherical harmonics.

A computation, carried out in Appendix B, reveals that $\pf$ verifies an equation of the schematic form:
 \begin{equation}\label{systeml1}
\Box_{g_{M, Q}}\mathfrak{p}+ V\mathfrak{p} =  \mathcal{J}
\end{equation}
where $\mathcal{J}$ is supported in $\ell\geq 2$ spherical harmonics. 

Projecting equation \eqref{systeml1} in $\ell=1$ spherical harmonics, we obtain a scalar wave equation with vanishing right hand side, for which techniques developed in \cite{lectures} and \cite{rp} can be straightforwardly applied. This proves boundedness and decay for the projection of $\mathfrak{p}$, and therefore $\tilde{\beta}$, to the $\ell=1$ spherical mode, in Reissner-Nordstr{\"o}m spacetimes with small charge. The boundedness and decay for its projection into $\ell\geq 2$ is implied by using the result for the spin $\pm2$ Teukolsky equation in \cite{Giorgi4}, for small charge.

  We emphasize that the analysis of the Teukolsky-type and the Fackerell-Ipser-type equation is obtained in physical space, and not in frequency fixed modes.

A rough version of the main result is the following. The precise statement will be given as Main Theorem in Section \ref{section-main-theorem}.

\begin{namedtheorem}[Main](Rough version)\label{main-theorem-1-rough} Let $|Q|\ll M$. Solutions $\tilde{\b}$ and $\tilde{\bb}$ to the generalized Teukolsky equation of spin $\pm 1$ on Reissner-Nordstr{\"o}m exterior spacetimes arising from regular localized initial data remain uniformly bounded and satisfy an $r^p$-weighted energy hierarchy and polynomial decay.
\end{namedtheorem}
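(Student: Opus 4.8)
The plan is to reduce the assertion for $\tilde{\beta}$ to estimates on the derived quantity $\mathfrak{p}$, and to treat the projections to $\ell=1$ and to $\ell\geq2$ separately. For the projection to $\ell\geq2$, I would invoke directly the boundedness and decay already established for the spin $\pm2$ Teukolsky system in \cite{Giorgi4}: since $\tilde{\beta}$ is assembled from curvature and electromagnetic components whose $\ell\geq2$ behaviour is controlled there, the desired $r^p$-hierarchy and polynomial decay for the $\ell\geq2$ part transfer. The entire analytic content therefore concentrates on the $\ell=1$ mode, which I would handle through the Fackerell-Ipser-type equation \eqref{systeml1} satisfied by $\mathfrak{p}$. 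Note that regular localized data for $\tilde{\beta}$ induce regular data for $\mathfrak{p}$, since $\mathfrak{p}$ is built from one derivative of $\tilde{\beta}$.

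First I would project \eqref{systeml1} onto $\ell=1$. Because $\mathcal{J}$ is supported in $\ell\geq2$, this projection annihilates the right-hand side and leaves a homogeneous scalar wave equation $\Box_{g_{M,Q}}\mathfrak{p}+V\mathfrak{p}=0$ on the $\ell=1$ sector. To this I would apply the vector-field method: a $\partial_t$-energy estimate giving boundedness of the (degenerate) energy flux; a Morawetz/integrated local energy decay estimate produced by a radial multiplier $f(r)\partial_{r_*}$ together with a zeroth-order correction, chosen so that the bulk term is coercive despite the trapping encoded in the effective $\ell=1$ potential; and a redshift estimate near the horizon to remove the degeneration there. The small-charge hypothesis $|Q|\ll M$ enters precisely here, keeping the geometry and the potential $V$ close to their Schwarzschild counterparts so that the required sign conditions on the Morawetz bulk and on $V$ persist.

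Decay would then follow from the $r^p$-weighted hierarchy of Dafermos-Rodnianski: I would derive $r^p$-weighted energy estimates near null infinity for $p\in(0,2]$, combine them with the integrated local energy decay from the previous step, and run the hierarchy as in \cite{rp} and \cite{lectures} to extract the polynomial decay rates for $\mathfrak{p}|_{\ell=1}$ and its derivatives. The treatment of $\tilde{\underline{\beta}}$ (the spin $-1$ field) is entirely parallel, with the roles of $L$ and $\underline{L}$ interchanged and the analogous derived quantity $\underline{\mathfrak{p}}$ in place of $\mathfrak{p}$.

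The last step, and the one I expect to be the main obstacle, is transferring the estimates from $\mathfrak{p}$ back to $\tilde{\beta}$. Since $\mathfrak{p}=\frac{1}{\kab}\nabb_3(r^5\kab\,\tilde{\beta})$ is a transport relation along the ingoing null direction, I would integrate it to recover weighted pointwise and energy bounds on $\tilde{\beta}$ from those on $\mathfrak{p}$. The delicate points are the bookkeeping of the $r^5$ and $\kab$ weights, the control of boundary contributions at the horizon and towards null infinity, and the fact that inverting a transport equation typically costs a power in the decay rate unless one exploits the favourable $r$-weights, which a Hardy-type inequality along the null direction should compensate. Verifying the Morawetz coercivity and this transport inversion uniformly in the small charge is where the real work lies, whereas the wave-equation estimates for $\mathfrak{p}|_{\ell=1}$ are, as indicated after \eqref{systeml1}, essentially standard once the equation is in Fackerell-Ipser form.
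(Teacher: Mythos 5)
Your overall strategy coincides with the paper's on the main steps: the Chandrasekhar transformation to $\pf$, projection of the Fackerell--Ipser equation to $\ell=1$ (where the right-hand side $\mathcal{J}$ vanishes), the vector-field method for the resulting homogeneous scalar wave equation (energy conservation, Morawetz multiplier with coercivity secured by $|Q|\ll M$, redshift, $r^p$-hierarchy), and transport estimates along $e_3$ to recover $\tilde{\b}_{\ell=1}$ from $\pf_{\ell=1}$. This last step, which you single out as the main obstacle, is carried out in the paper exactly as you anticipate: one integrates the relation $\pf=\frac{1}{\kab}\nabb_3(r^5\kab\,\tilde{\b})$ against suitable $r$-weights, using Cauchy--Schwarz to absorb the cross term, and the spacetime integral of $\pf_{\ell=1}$ from Proposition \ref{estimate-l=1-pf-proposition} controls the boundary and bulk terms for $\tilde{\b}_{\ell=1}$. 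Two minor points of implementation: since $\pf$ is a one-form, the $\ell=1$ projection is performed on the scalars $(r\divv\pf)_{\ell=1}$ and $(r\curll\pf)_{\ell=1}$ after commuting with $r\DDd_1$, and the full tensorial estimate is reassembled at the end via the elliptic estimate of Corollary \ref{main-elliptic-estimate}.

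There is, however, a genuine gap in your treatment of the $\ell\geq 2$ sector. You assert that the $\ell\geq2$ control "transfers" from \cite{Giorgi4} because $\tilde{\b}$ is assembled from components whose $\ell\geq2$ behaviour is controlled there; but the constituents $\b$ and $\bF$ of $\tilde{\b}=2\rhoF\b-3\rho\bF$ are gauge-\emph{dependent} and are not estimated in \cite{Giorgi4}, which controls only the gauge-invariant quantities $\a$, $\ff$ and their derived quantities $\psi_1$, $\qf$, $\qf^{\F}$ (and the negative-spin analogues). The transfer is not automatic: it rests on the algebraic identity of Lemma \ref{relations-derived}, proved using the Bianchi identity \eqref{nabb-3-a},
\begin{equation*}
r^3 \kab\, \DDs_2\tilde{\b} \;=\; -\rhoF\,\psi_1-\left(2\rhoF^2+3\rho\right) r^3\kab\, \ff,
\qquad\text{i.e.}\qquad
r\DDs_2\tilde{\b} \;=\; \frac{Q}{2\Omega r^3}\,\psi_1+\left(\frac{6M}{r^2}-\frac{8Q^2}{r^3}\right)\ff .
\end{equation*}
Since $\DDs_2$ annihilates exactly the $\ell=0,1$ content of a one-form, this identity is precisely what expresses the $\ell\geq2$ modes of $\tilde{\b}$ in terms of the quantities estimated in \cite{Giorgi4}; without it (or some substitute) your plan for the $\ell\geq2$ sector does not close. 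The same remark applies to $\tilde{\bb}$ via the corresponding identity with $\underline{\psi}_1$ and $\underline{\ff}$.
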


The Main Theorem above gives motivation for identifying the gauge-invariant quantities $\tilde{\b}$ and $\tilde{\bb}$ to the electromagnetic radiation in the emission of gravitational and electromagnetic waves in a perturbation of a charged black hole. They are the electromagnetic analogue of the quantities $\a$ and $\aa$ (corresponding to $\Phi_0$ and $\Phi_4$ in Newman-Penrose formalism) which transport gravitational radiation for vacuum black holes.  The description of electromagnetic radiation supported in $\ell \geq 1$  for charged black holes has been explicitely obtained in physical space for the first time here, since it was eluded in the choice of phantom gauge chosen in \cite{Chandra}.

The Main Theorem in the present paper and the Main Theorem in \cite{Giorgi4} provide decay for the three quantities $\alpha$, $\mathfrak{f}$, $\tilde{\beta}$, and their corresponding negative spin versions. These decays imply boundedness and decay of all the remaining quantities in the linear stability for coupled gravitational and electromagnetic perturbations of Reissner-Nordstr{\"o}m spacetime for small charge, as has been proved in \cite{Giorgi6}.

The outline of the paper is as follows. 

In Section 2 we recall the null frame decomposition of the spacetime. In Section 3 we present the relevant structure of the Reissner-Nordstr{\"o}m spacetime and the linearized Einstein-Maxwell equations around it. 

In Section 4 we define the main quantities $\tilde{\b}$ and $\tilde{\bb}$ veryfing the Teukolsky-type equation of spin $1$, presented in Section 5. In Section 6, we present the Chandrasekhar transformation relating the Teukolsky-type equation to the Fackerell-Ipser-type equation. 

In Section 7 we define the main weighted energies and state the theorem. In Section 8 we derive the estimates for the projection of $\pf$ into the $\ell=1$ spherical mode. In Section 9, we derive the estimates for the solutions to the Teukolsky equation of spin $\pm1$, therefore proving the Main Theorem. 

In Appendix A we collect the computations in the derivation of the generalized Teukolsky equation of spin $\pm1$ and in Appendix B we show the derivation of the generalized Fackerell-Ipser equation through the Chandrasekhar transformation.

\section{Decomposition in null frames}
\label{VEeqDNGsec}
In this section, we review the formalism of local null frames of a Lorentzian manifold.

Let $(\MM, \g)$ be a $3+1$-dimensional Lorentzian manifold, and let $\D$ be the covariant derivative associated to $\g$.

Suppose that the the Lorentzian manifold $(\MM, \g)$ can be foliated by spacelike $2$-surfaces $(S,\slashed{g}) $, where $\slashed{g}$ is the pullback of the metric $\g$ to $S$.  To each point of $\MM$, we can associate a null frame $\mathscr{N}=\left\{e_A, e_3, e_4\right\}$, with $\{ e_A \}_{A=1,2}$ being tangent vectors to $(S,\slashed{g}) $, such that the following relations hold
\bea\label{relations-null-frame}
\begin{split}
\g\left(e_3,e_3\right) &= 0, \qquad \g\left(e_4,
e_4 \right) = 0, \qquad \g\left(e_3,e_4\right) = -2\\ 
\g\left(e_3,
e_A \right) &= 0 \ \ \ , \ \ \  \g\left(e_4, e_A\right) = 0 \ \ \ , \ \ \ \g \left(e_A, e_B \right) = \slashed{g}_{AB} \, .
\end{split}
\eea

 We define the Ricci coefficients associated to the metric $\g$ with respect to the null frame $\mathscr{N}$ in the following way (see \cite{Ch-Kl}):
   \bea\label{def1}
   \begin{split}
   \chi_{AB}:&=\g(\D_A  e_4, e_B), \qquad  \chib_{AB}:=\g(\D_A  e_3, e_B)\\
   \eta_A:&=\frac 1 2 \g(\D_3 e_4, e_A), \qquad \etab_A:=\frac 1 2 \g(\D_4 e_3, e_A),\\
    \xi_A:&=\frac 1 2 \g(\D_4 e_4, e_A), \qquad \xib_A:=\frac 1 2 \g(\D_3 e_3, e_A)\\
     \om:&=\frac 1 4 \g(\D_4 e_4, e_3), \qquad \omb:=\frac 1 4\g(\D_3 e_3, e_4)\\
   \ze_A:&=\frac 1 2 \g( \D_A e_4, e_3), \qquad 
   \end{split}
   \eea
   We decompose the $2$-tensor $\chi_{AB}$ into its tracefree part $\chih_{AB}$, a symmetric traceless 2-tensor on $S$, and its trace $\ka:=\tr\chi$. Similarly for $\chib_{AB}$.

Let $\W$ denote the Weyl curvature of $\g$ and let $\dual \W$ denote the Hodge dual on $(\MM, \g)$ of $\W$. We define the null curvature components in the following way (see \cite{Ch-Kl}):  
   \bea\label{def3}
\begin{split}
\a_{AB}:&=\W(e_A, e_4, e_B, e_4), \qquad \aa_{AB}:= \W(e_A, e_3, e_B, e_3) \\
\b_A:&=\frac 1 2 \W(e_A, e_4, e_3, e_4), \qquad  \bb_{A} :=\frac 1 2 \W(e_A, e_3, e_3, e_4) \\
\rho:&=\frac 14 \W(e_3, e_4, e_3, e_4) \qquad \sigma:=\frac 1 4  \dual \W (e_3, e_4, e_3, e_4)
\end{split}
\eea

Let $\F$ be a $2$-form in $(\MM, \g)$, and let $\dual \F$ denote the Hodge dual on $(\MM, \g)$ of $\F$. We define the null electromagnetic components in the following way:
\bea\label{decomposition-F}
\begin{split}
\bF_{A}:&=\F(e_A, e_4), \qquad \bbF_{A}:= \F(e_A, e_3) \\
 \rhoF:=& \frac 1 2 \F(e_3, e_4), \qquad \sigmaF:=\frac 1 2 \dual \F(e_3, e_4) 
\end{split}
\eea

If $(\MM, \g)$ satisfies the Einstein-Maxwell equations
\bea \label{Einstein-Maxwell-eq}
\R_{\mu\nu}&=&2 \F_{\mu \lambda} {\F_\nu}^{\lambda}- \frac 1 2 \g_{\mu\nu} \F^{\alpha\beta} \F_{\alpha\beta}, \label{Einstein-1}\\
\D_{[\alpha} \F_{\beta\gamma]}&=&0, \qquad \D^\alpha \F_{\alpha\beta}=0. \label{Maxwell}
\eea
the Ricci coefficients, curvature and electromagnetic components defined in \eqref{def1}, \eqref{def3} and \eqref{decomposition-F} satisfy a system of equations, presented in Section 2.3 of \cite{Giorgi4}.

\section{The Reissner-Nordstr{\"o}m spacetime}\label{RN-generale}

In this section, we introduce the Reissner-Nordstr{\"o}m exterior metric, as well as relevant background structure. We remind of Section 3 of \cite{Giorgi4} for a more complete description.

Define the manifold with boundary
\begin{align} \label{SchwSchmfld}
\mathcal{M} := \mathcal{D} \times S^2 := \left(-\infty,0\right] \times \left(0,\infty\right) \times S^2
\end{align}
with Kruskal coordinates $\left(U,V,\theta^1,\theta^2\right)$, as defined in Section 3 of \cite{Giorgi4}.
 The boundary $\mathcal{H}^+$ will be referred to as the \emph{horizon}. 
We denote by $S^2_{U,V}$ the $2$-sphere $\left\{U,V\right\} \times S^2 \subset \mathcal{M}$ in $\mathcal{M}$.

Fix two parameters $M>0$ and $Q$, verifying $|Q|<M$. Then the Reissner-Nordstr{\"o}m metric $g_{M, Q}$ with parameters $M$ and $Q$ is defined to be the metric:
\begin{align} \label{sskruskal}
g_{M, Q} = -4 \Omega_K^2 \left(U,V\right) d{U} d{V} +   r^2 \left(U,V\right) \gamma_{AB} d{\theta}^A d{\theta}^B.
\end{align}
where 
\beaa
\Omega_K^2 \left(U,V\right) &=& \frac{r_{-}r_{+}}{4r(U,V)^2} \Big( \frac{r(U,V)-r_{-}}{r_{-}}\Big)^{1+\left(\frac{r_{-}}{r_{+}}\right)^2}\exp\Big(-\frac{r_{+}-r_{-}}{r_{+}^2} r(U,V)\Big) \\
 \gamma_{AB} &=& \textrm{standard metric on $S^2$} \, .
\eeaa
and 
\bea\label{definiion-rpm}
r_{\pm}=M\pm \sqrt{M^2-Q^2}
\eea
and $r$ is an implicit function of the coordinates $U$ and $V$\footnote{See equation (38) in \cite{Giorgi4} for the implicit definition.}.

\subsection{Double null coordinates $u$, $v$}
\label{EFdnulldef}

We define another double null coordinate system that covers  the interior of $\mathcal{M}$, modulo the degeneration of the 
angular coordinates. This coordinate system, 
$\left(u,v,\theta^1, \theta^2\right)$, is called  \emph{double null coordinates} and are defined via the relations
\begin{align} \label{UuVv}
U = -\frac{2r_{+}^2}{r_{+}-r_{-}}\exp\left(-\frac{r_{+}-r_{-}}{4r_{+}^2} u\right) \ \ \ \textrm{and} \ \ \ \ V = \frac{2r_{+}^2}{r_{+}-r_{-}}\exp \left(\frac{r_{+}-r_{-}}{4r_{+}^2} v\right) \, .
\end{align}
Using (\ref{UuVv}), we obtain the Reissner-Nordstr{\"o}m metric on the interior of $\MM$ in $\left(u,v,\theta^1, \theta^2\right)$-coordinates:
\begin{align} \label{ssef}
g_{M, Q} =  - 4 \Omega^2 \left(u,v\right) \, d{u} \, d{v} +   r^2 \left(u,v\right) \gamma_{AB} d{\theta}^A d{\theta}^B  
\end{align}
with
\begin{align}
\label{officialOmegadef}
\Omega^2:= 1-\frac{2M}{r}+\frac{Q^2}{r^2}
\end{align}
 In $\left(u,v,\theta^1, \theta^2\right)$-coordinates, the horizon $\mathcal{H}^+$ can still be formally parametrised by $\left(\infty, v,\theta^2,\theta^2\right)$ with $v \in \mathbb{R}$, $\left(\theta^1,\theta^2\right) \in S^2$. We denote $r_{\mathcal{H}}=r_{+}=M+\sqrt{M^2-Q^2}$. We denote by $S_{u, v}$ the sphere $S^2_{U, V}$ where $U$ and $V$ are given by \eqref{UuVv}.

The photon sphere of Reissner-Nordstrom corresponds to the hypersurface in which null geodesics are trapped. It is the hypersurface given by $\{ r=r_P \}$ where $r_P$ is given by 
\bea\label{def-rP}
r_P=\frac{3M+\sqrt{9M^2-8Q^2}}{2}
\eea

\subsection{Null frames: Ricci coefficients and curvature components }\label{null-frames-ricci}

We define in this section two normalized null frames associated to Reissner-Nordstr{\"o}m. 
\begin{enumerate}
\item The symmetric null frame $(e_3, e_4)$ is given by 
\bea\label{symmetric-null-pair}
e_4 =\frac{1}{\Omega} \pr_v , \qquad e_3=\frac{1}{\Omega}\pr_u .
\eea 
In this null frame, 
\beaa
\ka=-\kab= \frac{2\Omega}{r}, \qquad \om=-\omb =-\frac{M}{2\Omega r^2} +\frac{Q^2}{2\Omega r^3}  
\eeaa
We also have 
\bea\label{nabb-3-Omega}
\nabb_3 \Omega&=& -2\omb \Omega, \qquad \nabb_4\Omega=-2\om \Omega
\eea
\item The regular\footnote{This frame extends regularly to a non-vanishing null frame on $\mathcal{H}^+$.} null frame $(e_3^*, e_4^*) $  is given by
\bea
\label{eq:regular-nullpair}
e_3^*=\Omega^{-1} e_3,\,\, \qquad e_4^*=\Omega e_4.
\eea
\end{enumerate}

The curvature and electromagnetic components which are non-vanishing do not depend on the particular null frame. They are given by 
\beaa
\rhoF=\frac{Q}{r^2}, \qquad \rho  =  -\frac{2M}{r^3}+\frac{2Q^2}{r^4} 
\eeaa
 We also have that
\begin{align} \label{GCdef}
K = \frac{1}{r^2}
\end{align}
for the Gauss curvature of the round $S^2$-spheres.

\subsection{Killing fields of the Reissner-Nordstr{\"o}m metric}\label{Killing-fields}

We discuss now the Killing fields associated to the metric $g_{M, Q}$.

We define the vectorfield $T$ to be the timelike Killing vector field $\pr_t$ of the $(t, r)$ coordinates in \eqref{RNintro}, which in double null coordinates is given by 
\beaa
T=\frac 1 2 (\pr_u+\pr_v)
\eeaa
 The vector field extends to a smooth Killing field on the horizon $\mathcal{H}^+$, which is moreover null and tangential to the null generator of $\mathcal{H}^+$.

We can also define a basis of angular momentum operator $\Omega_i$, $i=1,2,3$ (see for example Section 3.3 of \cite{Giorgi4}). The Lie algebra of Killing vector fields of $g_{M,Q}$ is then generated by $T$ and $\Omega_i$, for $i=1,2,3$.

\subsection{The spherical harmonics}\label{spherical-harmonics}

We collect some known definitions and properties of the Hodge decomposition of scalars, one forms and symmetric traceless two tensors in spherical harmonics. We also recall some known elliptic estimates. See Section 4.4 of \cite{DHR} for more details.

\subsubsection{The $\ell=0,1$ spherical harmonics and tensors supported on $\ell\ge 2$}
We denote by $\dot{Y}_m^\ell$, with $|m|\leq \ell$, the well-known spherical harmonics on the unit sphere, i.e. $$\lapp_0 \dot{Y}^\ell_m=- \ell(\ell+1) \dot{Y}^\ell_m$$ where $\lapp_0$ denotes the laplacian on the unit sphere $S^2$.  The $\ell=0,1$ spherical harmonics are given explicitly by 
\bea
\dot{Y}_{m=0}^{\ell=0}&=&\frac{1}{\sqrt{4\pi}}, \label{spherical-l=0}\\
\dot{Y}_{m=0}^{\ell=1}&=&\sqrt{\frac{3}{8\pi}} \cos \th, \qquad \dot{Y}_{m=-1}^{\ell=1}=\sqrt{\frac{3}{4\pi}} \sin\th \cos \phi, \qquad \dot{Y}_{m=1}^{\ell=1}=\sqrt{\frac{3}{4\pi}} \sin\th \sin \phi \label{spherical-l=1}
\eea
This family is orthogonal with respect to the standard inner product on the sphere, and any arbitrary function $f \in L^2(S^2)$ can be expanded uniquely with respect to such a basis. 

In the foliation of Reissner-Nordstr\"om spacetime, we are interested in using the spherical harmonics with respect to the sphere of radius $r$. For this reason, we normalize the definition of the spherical harmonics on the unit sphere above to the following.

We denote by $Y_m^\ell$, with $|m|\leq \ell$, the spherical harmonics on the sphere of radius $r$, i.e. $$\lapp Y^\ell_m=-\frac{1}{r^2} \ell(\ell+1) Y^l_m$$ where $\lapp$ denotes the laplacian on the sphere $S_{u,v}$ of radius $r=r(u, v)$. Such spherical harmonics are normalized to have $L^2$ norm in $S_{u, v}$ equal to $1$, so they will in particular be given by $Y_m^\ell=\frac{1}{r}\dot{Y}_m^\ell$. We use these basis to project functions on Reissner-Nordstr\"om manifold in the following way.  

\begin{definition}\label{lemma-spherical-harmonics} We say that  a function $f$ on $\mathcal{M}$ is supported on $\ell\geq 2$ if the projections 
\beaa
\int_{S_{u, v}}  f \c Y^\ell_m=0
\eeaa
vanish for $Y_m^{\ell=1}$ for $m=-1, 0, 1$. Any function $f$ can be uniquely decomposed orthogonally as 
\bea\label{decomposition-f-spherical-harmonics}
f=c(u, v) Y^{\ell=0}_{m=0}+\sum_{i=-1}^{1}c_i(u, v) Y^{\ell=1}_{m=i}(\th, \vphi)+f_{\ell\ge 2}
\eea
where $f_{\ell\ge 2}$ is supported in $\ell\ge2$.
\end{definition} 

In particular, we can write the orthogonal decomposition
\beaa
f=f_{\ell=0}+f_{\ell=1}+f_{\ell\geq2}
\eeaa
where 
\bea\label{explicit-formula-projection=l1}
f_{\ell=0}&=& \frac{1}{4\pi r^2} \int_{S_{u,v}}  f \\
f_{\ell=1}&=&\sum_{i=-1}^{1} \left( \int_{S_{u, v}} f \c Y^{\ell=1}_{m=i} \right) Y^{\ell=1}_{m=i} 
\eea

Recall that an arbitrary one-form $\xi$ on $S_{u,v}$ has a unique representation $\xi=r \DDs_1(f, g)$, for two uniquely defined functions $f$ and $g$ on the unit sphere, both with vanishing mean. In particular, the scalars $\divv \xi$ and $\curll \xi$ are supported in $l\ge1$. As in \cite{DHR}, we define 
\begin{definition}\label{decomposition-xi} We say that a smooth $S_{u,v}$ one form $\xi$ is supported on $\ell\ge 2$ if the functions $f$ and $g$ in the unique representation $$\xi=r \DDs_1(f, g)$$ are supported on $\ell\geq 2$. Any smooth one form $\xi$ can be uniquely decomposed orthogonally as $$\xi=\xi_{\ell=1}+\xi_{\ell\ge 2}$$ where the two scalar functions $r\DDd_1\xi=(r\divv\xi_{\ell=1}, r\curll\xi_{\ell=1})$ are in the span of \eqref{spherical-l=0} and $\xi_{\ell\ge 2}$ is supported on $\ell\ge 2$.
\end{definition}

Recall that an arbitrary symmetric traceless two-tensors $\th$ on $S_{u,v}$ has a unique representation $$\th=r^2\DDs_2\DDs_1(f, g)$$ for two uniquely defined functions $f$ and $g$ on the unit sphere, both supported in $\ell\ge 2$. In particular, the scalars $\divv \divv \th$ and $\curll \divv \th$ are supported in $\ell\ge 2$.

For future reference, we recall the following lemma.

\begin{lemma}[Lemma 4.4.1 in \cite{DHR}]\label{lemma-kernel-DDs2} The kernel of the operator $\mathcal{T}=r^2 \DDs_2 \DDs_1$ is finite dimensional. More precisely, if the pair of functions $(f_1, f_2)$ is in the kernel, then 
\beaa
f_1=c Y^{\ell=0}_{m=0}+\sum_{i=-1}^{1}c_i Y^{\ell=1}_{m=i}, \qquad f_2=\tilde{c} Y^{\ell=0}_{m=0}+\sum_{i=-1}^{1}\tilde{c}_i Y^{\ell=1}_{m=i}
\eeaa
for constants $c, c_i, \tilde{c}, \tilde{c}_i$.
\end{lemma}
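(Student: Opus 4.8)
The plan is to exploit the fact that $\mathcal{T}=r^2\DDs_2\DDs_1$ is built from the Hodge operators $\DDs_1$ (pairs of scalars $\to$ one-forms) and $\DDs_2$ (one-forms $\to$ symmetric traceless two-tensors), which commute with the $SO(3)$-action on the round sphere and hence act diagonally with respect to the spherical-harmonic decomposition of Section \ref{spherical-harmonics}. Rescaling by $r$ only changes constants, so I would work on the unit sphere and reinstate the powers of $r$ at the end. The two ingredients I would record first are the standard Bochner-type identities (see Section 4.4 of \cite{DHR} or Chapter 2 of \cite{Ch-Kl}): writing $\DDd_1=\DDs_1^\star$ and $\DDd_2=\DDs_2^\star$ for the $L^2$-adjoints, one has $\DDd_1\DDs_1(f,g)=(-\lapp f,-\lapp g)$, so on an $\ell$-mode $\DDd_1\DDs_1$ acts as multiplication by $\ell(\ell+1)/r^2$; and $\DDd_2\DDs_2$ acts on the one-form $\DDs_1$ of an $\ell$-mode as a positive multiple of $(\ell-1)(\ell+2)/r^2$.

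\emph{First I would check that the pairs in the statement lie in the kernel.} If $(f_1,f_2)$ is supported on $\ell=0$ it is constant, so $\DDs_1(f_1,f_2)=0$ because $\nabb$ annihilates constants, whence $\mathcal{T}(f_1,f_2)=0$. If $(f_1,f_2)$ is supported on $\ell=1$, set $\xi=\DDs_1(f_1,f_2)$, which is an $\ell=1$ one-form by equivariance. Integrating by parts, $\|\DDs_2\xi\|^2_{L^2}=\langle \xi,\DDd_2\DDs_2\xi\rangle$, and by the eigenvalue identity above the right-hand side is a positive multiple of $(\ell-1)(\ell+2)\big|_{\ell=1}=0$ times $\|\xi\|^2$. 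Hence $\DDs_2\xi=0$ (i.e. $\xi$ is a conformal Killing one-form), so again $\mathcal{T}(f_1,f_2)=0$.

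\emph{Next I would show these are the only elements of the kernel.} Suppose $\mathcal{T}(f_1,f_2)=0$. Expanding $\|\mathcal{T}(f_1,f_2)\|^2_{L^2}=r^4\langle\DDs_1(f_1,f_2),\DDd_2\DDs_2\DDs_1(f_1,f_2)\rangle$ and diagonalizing mode by mode with the two identities gives $\|\mathcal{T}(f_1,f_2)\|^2=\sum_\ell c_\ell\,\|(f_1,f_2)_\ell\|^2$, where $c_\ell$ is a positive multiple of $(\ell-1)\ell(\ell+1)(\ell+2)$. This quartic vanishes precisely at $\ell=0,1$ and is strictly positive for $\ell\geq2$, so $\mathcal{T}(f_1,f_2)=0$ forces every $\ell\geq2$ mode of $(f_1,f_2)$ to vanish. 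Thus $(f_1,f_2)$ is supported on $\ell=0,1$, which is exactly the asserted form; since the $\ell=0$ and $\ell=1$ eigenspaces are $1$- and $3$-dimensional, $\ker\mathcal{T}$ is $8$-dimensional, in particular finite-dimensional.

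\emph{The main obstacle} is purely the bookkeeping needed to pin down the Bochner/commutation identity for $\DDd_2\DDs_2$ with the correct constant and sign, and to justify carefully that $\DDs_1,\DDs_2$ preserve the $\ell$-grading (equivariance under $SO(3)$). Both are classical but computation-heavy; once the quartic eigenvalue $(\ell-1)\ell(\ell+1)(\ell+2)$ is in hand, the remainder is an elementary positivity argument.
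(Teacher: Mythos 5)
Your proof is correct. Note that the paper itself contains no proof of this statement — it is quoted directly as Lemma 4.4.1 of \cite{DHR} — so the relevant comparison is with the argument there, and your route is essentially that standard one: diagonalize $\mathcal{T}$ mode by mode using the identities \eqref{angular-operators}, i.e. the eigenvalue $\ell(\ell+1)/r^2$ of $\DDd_1\DDs_1$ on $\ell$-modes and the eigenvalue $\tfrac{1}{2}(\ell-1)(\ell+2)/r^2$ of $\DDd_2\DDs_2$ on $\DDs_1$ of an $\ell$-mode, so that $\|\mathcal{T}(f_1,f_2)\|_{L^2}^2$ is a sum of positive multiples of $(\ell-1)\ell(\ell+1)(\ell+2)\,\|(f_1,f_2)_\ell\|_{L^2}^2$, which vanishes exactly on the $\ell=0,1$ modes. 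Your dimension count (eight) and the integration-by-parts step (legitimate since $\DDs_2$ and $\DDd_2$ are mutual $L^2$-adjoints in these conventions) are both right.
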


\subsubsection{Elliptic estimates}\label{section-elliptic-estimates}

We recall the relations between the angular operators and the laplacian $\lapp$ on $S$:
\bea\label{angular-operators}
\begin{split}
\DDd_1 \DDs_1&=-\lapp_0, \qquad \DDs_1 \DDd_1= -\lapp_1+K, \\
\DDd_2 \DDs_2&= -\frac 1 2 \lapp_1-\frac 1 2 K, \qquad \DDs_2 \DDd_2= -\frac 1 2 \lapp_2+K
\end{split}
\eea
where $\lapp_0$ and $\lapp_1$ are the laplacian on scalars and on $1$-form respectively, and $K$ is the Gauss curvature of the surface $S$.

We recall the following $L^2$ elliptic estimates. See for example \cite{Ch-Kl}.

\begin{proposition}[Proposition 2.1.3 in \cite{stabilitySchwarzschild}] Let $(S, \gamma)$ be a compact surface with Gauss curvature $K$. Then the following identities hold for vectors $\xi$ on $S$:
\bea
\int_{S} \left(|\nabb \xi|^2+K |\xi|^2\right) &=& \int_{S} \left(|\divv \xi|^2+|\curll \xi|^2 \right)=\int_{S} |\DDd_1 \xi|^2 \label{first-elliptic-estimate}\\
\int_{S} \left(|\nabb \xi|^2-K |\xi|^2\right) &=& 2\int_{S} |\DDs_2 \xi|^2 
\eea
Moreover, suppose that the Gauss curvature is bounded. Then there exists a constant $C>0$ such that the following estimate holds for all vectors $\xi$ on $S$ orthogonal to the kernel of $\DDs_2$:
\bea
\int_{S} \frac{1}{r^2} |\xi|^2 \leq C \int_{S} |\DDs_2 \xi|^2\label{second-elliptic-estimate}
\eea
\end{proposition}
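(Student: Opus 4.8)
The plan is to establish the two pointwise-to-integral identities first and then deduce the weighted Poincar\'e inequality from the second of them. Throughout I write $\DDs_2 \xi$ for the symmetrized trace-free derivative $(\DDs_2\xi)_{AB} = \frac12(\nabb_A\xi_B + \nabb_B\xi_A) - \frac12\gamma_{AB}\divv\xi$, and I note that the second equality on the first line is nothing but the definition $|\DDd_1\xi|^2 = |\divv\xi|^2 + |\curll\xi|^2$, so only the first equality there requires work.

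For the first identity I would compute the pointwise combination $|\divv\xi|^2 + |\curll\xi|^2 - |\nabb\xi|^2$. Using the two-dimensional contraction $\epsilon^{AB}\epsilon^{CD} = \gamma^{AC}\gamma^{BD} - \gamma^{AD}\gamma^{BC}$ one rewrites $|\curll\xi|^2 = |\nabb\xi|^2 - \nabb_A\xi_B\,\nabb^B\xi^A$, so that the combination reduces to $\nabb_A\xi^A\,\nabb_B\xi^B - \nabb_A\xi_B\,\nabb^B\xi^A$. Integrating over $S$ and integrating by parts in both terms turns this into $\int_S \xi_A\,[\nabb_B,\nabb^A]\xi^B$; the commutator of covariant derivatives on a one-form produces the Ricci tensor, which in dimension two equals $K\gamma$, yielding exactly $\int_S K|\xi|^2$. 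This gives $\int_S(|\divv\xi|^2 + |\curll\xi|^2) = \int_S(|\nabb\xi|^2 + K|\xi|^2)$.

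For the second identity I would expand $|\DDs_2\xi|^2$ directly. Setting $S_{AB} = \frac12(\nabb_A\xi_B+\nabb_B\xi_A)$ and using that in two dimensions the trace-free part $\hat S$ of a symmetric tensor satisfies $|\hat S|^2 = |S|^2 - \frac12(\tr S)^2$, one obtains the pointwise expression $|\DDs_2\xi|^2 = \frac12|\nabb\xi|^2 + \frac12\nabb_A\xi_B\,\nabb^B\xi^A - \frac12|\divv\xi|^2$. Integrating, replacing $\int_S \nabb_A\xi_B\,\nabb^B\xi^A$ by $\int_S|\nabb\xi|^2 - \int_S|\curll\xi|^2$ via the same curl identity as above, and finally substituting the first identity to eliminate $\int_S(|\divv\xi|^2+|\curll\xi|^2)$, everything collapses to $2\int_S|\DDs_2\xi|^2 = \int_S(|\nabb\xi|^2 - K|\xi|^2)$.

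For the last estimate I would invoke the spectral gap of the self-adjoint non-negative second-order elliptic operator $\DDd_2\DDs_2$ on the compact surface $S$. Its kernel coincides with $\ker\DDs_2$ and is finite-dimensional, and on the $L^2$-orthogonal complement of this kernel it admits a strictly positive lowest eigenvalue. Concretely I would run a compactness--contradiction argument: were the inequality false, there would exist $\xi_n \perp \ker\DDs_2$ with $\int_S \frac{1}{r^2}|\xi_n|^2 = 1$ and $\int_S|\DDs_2\xi_n|^2\to 0$; the second identity together with the boundedness of $K$ bounds $\int_S|\nabb\xi_n|^2$, so $\{\xi_n\}$ is bounded in $H^1$, and Rellich compactness yields an $L^2$-limit $\xi_\infty$ with unit weighted norm, still orthogonal to $\ker\DDs_2$, and with $\DDs_2\xi_\infty = 0$; then $\xi_\infty\in\ker\DDs_2$ forces $\xi_\infty=0$, a contradiction. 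The two identities are essentially bookkeeping whose only subtlety is the correct emergence of the Gauss curvature from the Ricci commutator term; the genuine analytic content, and the one place where compactness of $S$ and boundedness of $K$ are truly used, is this final inequality, which rests on the finite-dimensionality of $\ker\DDs_2$ (the conformal Killing one-forms of $S$) and the resulting spectral gap.
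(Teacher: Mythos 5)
Your proof is correct. Note that the paper itself contains no proof of this proposition: it is quoted verbatim from Proposition 2.1.3 of \cite{stabilitySchwarzschild}, with a pointer to \cite{Ch-Kl}, so there is no internal argument to compare against, and your write-up supplies precisely the standard proof given in those references. The two integral identities follow, as you show, from the two-dimensional Bochner computation in which the commutator of covariant derivatives acting on a one-form produces the Ricci tensor, equal to $K\gamma_{AB}$ in dimension two (your pointwise algebra for $|\curll\xi|^2$ and for the trace-free part of the symmetrized derivative is exact), and the Poincar\'e-type inequality \eqref{second-elliptic-estimate} rests on the finite dimensionality of $\ker\DDs_2$ (the conformal Killing one-forms) together with a spectral-gap or compactness argument, which you execute correctly. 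One remark worth adding: the compactness argument produces a constant $C$ that a priori depends on the surface $S$, whereas the paper applies \eqref{second-elliptic-estimate} on every sphere $S_{u,v}$ of the foliation and needs $C$ uniform in $r$; this is exactly what the $1/r^2$ weight achieves, since both sides of \eqref{second-elliptic-estimate} scale the same way under $\gamma \to \lambda^2 \gamma$, so the constant obtained on the unit sphere transfers unchanged to all radii. Alternatively, on round spheres the constant can be made explicit by expanding in vector spherical harmonics, which also re-proves the characterization of $\ker\DDs_2$ used in Lemma \ref{lemma-kernel-DDs2}.
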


Consider a one-form $\xi$ on $\MM$ and its decomposition $\xi=\xi_{\ell=1}+\xi_{\ell\geq 2}$ as in Definition \ref{decomposition-xi}. Then we have the following elliptic estimate.

\begin{corollary}\label{main-elliptic-estimate} Let $\xi$ be a one-form on $\MM$. Then there exists a constant $C >0$ such that the following estimate holds:
\beaa
\int_{S} |\xi|^2 \leq C \left(\int_{S} |r\divv \xi_{l=1}|^2+|r\curll \xi_{l=1}|^2 + |r\DDs_2 \xi|^2 \right)
\eeaa
\end{corollary}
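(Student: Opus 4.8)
The plan is to use the orthogonal decomposition $\xi = \xi_{\ell=1} + \xi_{\ell\geq 2}$ furnished by Definition \ref{decomposition-xi} and to estimate the two pieces separately, feeding each into one of the two elliptic identities of the preceding Proposition. Since the decomposition is $L^2(S)$-orthogonal, we have $\int_S|\xi|^2 = \int_S|\xi_{\ell=1}|^2 + \int_S|\xi_{\ell\geq 2}|^2$, so it suffices to bound each summand by the corresponding term on the right. Throughout I will use that on a fixed sphere $S=S_{u,v}$ the radius $r=r(u,v)$ is constant and the Gauss curvature is $K=1/r^2$, so powers of $r$ commute freely past the angular integration.

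For the $\ell=1$ piece I would apply the first elliptic identity \eqref{first-elliptic-estimate} to $\xi_{\ell=1}$. Discarding the nonnegative $|\nabb\xi_{\ell=1}|^2$ term on the left and using $K=1/r^2$ gives
\[
\frac{1}{r^2}\int_S|\xi_{\ell=1}|^2 \leq \int_S\left(|\divv\xi_{\ell=1}|^2 + |\curll\xi_{\ell=1}|^2\right),
\]
and multiplying by $r^2$ yields $\int_S|\xi_{\ell=1}|^2 \leq \int_S\left(|r\divv\xi_{\ell=1}|^2 + |r\curll\xi_{\ell=1}|^2\right)$. This is the desired bound for the $\ell=1$ part with no loss; the point is simply that $\DDd_1$ is injective on the $\ell=1$ subspace, which is exactly what the positive curvature term in \eqref{first-elliptic-estimate} encodes.

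For the $\ell\geq 2$ piece I would invoke the second elliptic estimate \eqref{second-elliptic-estimate}, which applies to one-forms orthogonal to the kernel of $\DDs_2$. Two facts are needed. First, writing $\xi_{\ell=1} = r\DDs_1(f,g)$ with $f,g$ supported in $\ell=1$, Lemma \ref{lemma-kernel-DDs2} gives $\DDs_2\DDs_1(f,g)=0$, hence $\DDs_2\xi_{\ell=1}=0$ and therefore $\DDs_2\xi = \DDs_2\xi_{\ell\geq 2}$. Second, the kernel of $\DDs_2$ acting on one-forms consists exactly of the $\ell=1$ one-forms, so $\xi_{\ell\geq 2}$, being $L^2$-orthogonal to the $\ell=1$ sector, lies in the orthogonal complement of that kernel and \eqref{second-elliptic-estimate} may be applied. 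This gives $\int_S r^{-2}|\xi_{\ell\geq 2}|^2 \leq C\int_S|\DDs_2\xi_{\ell\geq 2}|^2 = C\int_S|\DDs_2\xi|^2$, whence $\int_S|\xi_{\ell\geq 2}|^2 \leq C\int_S|r\DDs_2\xi|^2$.

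Combining the two bounds with the orthogonality identity and taking the larger of the two constants gives the claim. The only genuinely non-routine point is the identification of the kernel of $\DDs_2$ with the $\ell=1$ one-forms, needed both for $\DDs_2\xi_{\ell=1}=0$ and for the orthogonality hypothesis of \eqref{second-elliptic-estimate}; this can be read off from the commutation relations \eqref{angular-operators}, since $\DDd_2\DDs_2 = -\frac12(\lapp_1 + K)$ and a one-form $\DDs_1\phi$ built from an $\ell$-scalar $\phi$ is an eigenform of $-\lapp_1$ with eigenvalue $(\ell^2+\ell-1)/r^2$, which equals $K=1/r^2$ precisely when $\ell=1$. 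A secondary point to verify is that the constant $C$ in \eqref{second-elliptic-estimate} is uniform across the spheres $S_{u,v}$, which holds because these are round spheres and the weighted estimate is scale invariant.
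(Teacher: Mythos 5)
Your proof is correct and follows the same route as the paper: the $L^2(S)$-orthogonal decomposition $\xi=\xi_{\ell=1}+\xi_{\ell\geq 2}$, then \eqref{first-elliptic-estimate} applied to $\xi_{\ell=1}$ (with the curvature term giving the $1/r^2$ weight) and \eqref{second-elliptic-estimate} applied to $\xi_{\ell\geq 2}$. Your write-up is in fact more careful than the paper's own: you correctly identify that it is $\xi_{\ell=1}$ which lies in the kernel of $\DDs_2$ (so that $\DDs_2\xi=\DDs_2\xi_{\ell\geq 2}$) while $\xi_{\ell\geq 2}$ is orthogonal to that kernel, whereas the paper's proof asserts, with the roles inverted, that ``$\xi_{\ell\geq 2}$ is in the kernel of $\DDs_2$'' --- a slip, since \eqref{second-elliptic-estimate} could not then be applied to it.
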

\begin{proof} Using the orthogonal decomposition of $\xi$, we have
\beaa
\int_{S} |\xi|^2&=& \int_{S} |\xi_{\ell=1}|^2+\int_{S} |\xi_{\ell\geq 2}|^2
\eeaa
Observe that, according to Lemma \ref{lemma-kernel-DDs2}, $\xi_{\ell\geq 2}$ is in the kernel of $\DDs_2$.  
Applying \eqref{first-elliptic-estimate} to $\xi_{\ell=1}$ and \eqref{second-elliptic-estimate} to $\xi_{\ell\geq 2}$ we obtain the desired estimate. 
\end{proof}

 We derive the transport equation for the projection to the $\ell=1$ spherical harmonics of a function $f$ on $\MM$.
 
 \begin{lemma}\label{commutation-projection-l1} Let $f$ be a scalar function on $\MM$. Then
 \beaa
 \nabb_4(f_{\ell=1})&=& (\nabb_4f)_{\ell=1} \qquad  \nabb_3(f_{\ell=1})= (\nabb_3f)_{\ell=1}
 \eeaa
 Consequently we have
 \bea\label{commute-wave}
 \Box_{g_{M, Q}} (f_{\ell=1})&=& ( \Box_{g_{M, Q}} f)_{\ell=1}
 \eea
 \end{lemma}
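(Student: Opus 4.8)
The plan is to recognize that, although $f_{\ell=1}$ is built from the spheres $S_{u,v}$ whose radius varies with $(u,v)$, the projection $f\mapsto f_{\ell=1}$ is in fact projection onto a \emph{fixed}, $(u,v)$-independent subspace of angular functions; once this is seen, both commutation statements reduce to differentiating under the integral sign. First I would unwind the definition \eqref{explicit-formula-projection=l1}. Writing the area element of $S_{u,v}$ as $r^2\,d\sigma$ with $d\sigma$ the fixed area element of the unit round sphere, and substituting $Y^{\ell=1}_{m=i}=r^{-1}\dot{Y}^{\ell=1}_{m=i}$, all four powers of $r$ cancel:
\[
f_{\ell=1}=\sum_{i=-1}^{1}\Big(\int_{S_{u,v}} f\, Y^{\ell=1}_{m=i}\Big)\,Y^{\ell=1}_{m=i}=\sum_{i=-1}^{1}\Big(\int_{S^2} f\,\dot{Y}^{\ell=1}_{m=i}\,d\sigma\Big)\,\dot{Y}^{\ell=1}_{m=i}.
\]
The crucial point is that the resulting map $\Pi_{\ell=1}\colon f\mapsto f_{\ell=1}$ integrates the angular profile of $f$ against the $(u,v)$-independent functions $\dot{Y}^{\ell=1}_{m=i}$ with respect to the $(u,v)$-independent measure $d\sigma$, and then multiplies by the same fixed functions; in particular it projects onto a subspace of angular functions that does not depend on $(u,v)$.

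Next I would establish the commutation with $\nabb_3,\nabb_4$. On scalars these are $\nabb_4=e_4=\Omega^{-1}\partial_v$ and $\nabb_3=e_3=\Omega^{-1}\partial_u$ by \eqref{symmetric-null-pair}, with $\Omega=\Omega(u,v)$ spherically symmetric. Since $\partial_v$ differentiates only the $(u,v)$-dependence and the functions $\dot{Y}^{\ell=1}_{m=i}$ and the measure $d\sigma$ carry none, differentiating under the integral sign gives $\partial_v\Pi_{\ell=1}[f]=\Pi_{\ell=1}[\partial_v f]$. Multiplication by the angle-independent factor $\Omega^{-1}$ commutes with $\Pi_{\ell=1}$ because it pulls out of the angular integral. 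Composing, $\nabb_4(f_{\ell=1})=\Omega^{-1}\Pi_{\ell=1}[\partial_v f]=\Pi_{\ell=1}[\Omega^{-1}\partial_v f]=(\nabb_4 f)_{\ell=1}$, and identically for $\nabb_3$.

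For the wave operator I would use the explicit form in double-null coordinates, obtained from $\Box_g f=(-g)^{-1/2}\partial_\mu\big((-g)^{1/2}g^{\mu\nu}\partial_\nu f\big)$ with $g^{uv}=-\tfrac{1}{2\Omega^2}$, $g^{AB}=r^{-2}\gamma^{AB}$ and $\sqrt{-g}=2\Omega^2 r^2\sqrt{\gamma}$, namely
\[
\Box_{g_{M, Q}} f=-\frac{1}{\Omega^2}\partial_u\partial_v f-\frac{\partial_u r}{\Omega^2 r}\partial_v f-\frac{\partial_v r}{\Omega^2 r}\partial_u f+\lapp f .
\]
Every coefficient multiplying a coordinate-derivative term depends only on $(u,v)$, so each of these three terms commutes with $\Pi_{\ell=1}$ by the previous paragraph (applying the commutation twice for $\partial_u\partial_v$ and pulling the angle-independent coefficients out of $\Pi_{\ell=1}$). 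The remaining term $\lapp=r^{-2}\lapp_0$ also commutes with $\Pi_{\ell=1}$: since $\lapp_0$ is self-adjoint on $L^2(S^2,d\sigma)$ and $\dot{Y}^{\ell=1}_{m=i}$ spans its eigenspace of eigenvalue $-2$, one has $\Pi_{\ell=1}[\lapp_0 f]=-2\,\Pi_{\ell=1}[f]=\lapp_0\Pi_{\ell=1}[f]$, and then pulling out the spherically symmetric factor $r^{-2}$ gives $\Pi_{\ell=1}[\lapp f]=\lapp\,\Pi_{\ell=1}[f]$. Summing the four pieces yields $\Box_{g_{M, Q}}(f_{\ell=1})=(\Box_{g_{M, Q}} f)_{\ell=1}$, which is \eqref{commute-wave}.

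The one genuinely substantive point is the cancellation in the first step: a priori the projection is defined through spheres of $(u,v)$-varying radius, so it is not obvious that $u,v$-derivatives pass through it. Once the $r$-factors cancel and $\Pi_{\ell=1}$ is identified with projection onto a fixed angular subspace against the fixed measure $d\sigma$, everything else is routine differentiation under the integral sign together with the observation that all the relevant coefficients ($\Omega$, $r$, their derivatives) are spherically symmetric. I would therefore expect no analytic difficulty beyond this algebraic reduction.
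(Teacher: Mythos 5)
Your proof is correct, but it is organized differently from the paper's. The paper works directly with the $r$-normalized harmonics $Y^{\ell=1}_m=\tfrac{1}{r}\dot{Y}^{\ell=1}_m$ and the varying spheres: it computes $\nabb_4 Y^{\ell=1}_m=-\tfrac12\ov{\ka}\,Y^{\ell=1}_m$, differentiates the integral over $S_{u,v}$ picking up the measure term $+\ka f Y^{\ell=1}_m$ from the expansion of the area element, and exhibits the cancellation of the three $\ka$-terms ($+\ka$ from the measure and $-\tfrac12\ka$ from each of the two harmonic factors), with the wave-operator identity \eqref{commute-wave} then only asserted as a consequence. You instead cancel the $r$-factors at the outset, identifying $f\mapsto f_{\ell=1}$ with the projection $\Pi_{\ell=1}$ onto the fixed span of $\dot{Y}^{\ell=1}_m$ with respect to the fixed measure $d\sigma$ on the unit sphere; after that, commutation with $\partial_u$, $\partial_v$ is pure differentiation under the integral sign, and the spherically symmetric factors ($\Omega^{-1}$, $\partial_u r$, etc.) pull through trivially. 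The two arguments rest on the same fact --- that the unit-sphere harmonics are $(u,v)$-independent --- but your reduction removes the need for any cancellation bookkeeping, and it buys you a genuinely complete proof of \eqref{commute-wave}: you verify term by term, from the explicit double-null form of $\Box_{g_{M,Q}}$, that each coefficient is angle-independent, and you handle the term $\lapp=r^{-2}\lapp_0$ via self-adjointness of $\lapp_0$ together with $\lapp_0\dot{Y}^{\ell=1}_m=-2\dot{Y}^{\ell=1}_m$ (both $\Pi_{\ell=1}[\lapp_0 f]$ and $\lapp_0\Pi_{\ell=1}[f]$ equal $-2\Pi_{\ell=1}[f]$). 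That last step is exactly what the paper's ``consequently'' leaves implicit, since $\Box_{g_{M,Q}}$ is not built from $\nabb_3$, $\nabb_4$ alone but also contains the angular Laplacian, so your version is, if anything, more self-contained.
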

 \begin{proof} Applying $\nabb_4$ to the expression for the projection to the $l=1$ spherical harmonics given by \eqref{explicit-formula-projection=l1}, we obtain
 \beaa
 \nabb_4(f_{\ell=1})&=&\sum_{i=-1}^{1}  \nabb_4\Big(\left(\int_{S} f \c Y^{\ell=1}_{m=i}\right) Y^{\ell=1}_{m=i} \Big)
 \eeaa
 Recall that the normalized spherical harmonics are defined as $Y^{\ell=1}_m=\frac{1}{r} \dot{Y}^{\ell=1}_m$, where $\dot{Y}^{\ell=1}_m$ are given by \eqref{spherical-l=1}, and therefore $\nabb_4(\dot{Y}^{\ell=1}_m)=0$. This implies
 \beaa
 \nabb_4(Y^{\ell=1}_m)=\nabb_4\left(\frac{1}{r} \dot{Y}^{\ell=1}_m\right)=\nabb_4\left(\frac{1}{r}\right) \dot{Y}^{\ell=1}_m=-\frac{1}{2r}\ov{\ka}  \dot{Y}^{\ell=1}_m=-\frac{1}{2}\ov{\ka} Y^{\ell=1}_m
 \eeaa 
We therefore obtain
 \beaa
 \nabb_4(f_{\ell=1})&=&\sum_{i=-1}^{1}  \nabb_4\left(\int_{S} f \c Y^{\ell=1}_{m=i}\right) Y^{\ell=1}_{m=i} +\sum_{i=-1}^{1} \left(\int_{S} f \c Y^{\ell=1}_{m=i}\right) \nabb_4Y^{\ell=1}_{m=i} \\
 &=&\sum_{i=-1}^{1}  \left(\int_{S}\nabb_4( f \c Y^{\ell=1}_{m=i})+\ka  f \c Y^{\ell=1}_{m=i}\right) Y^{\ell=1}_{m=i} +\sum_{i=-1}^{1} \left(\int_{S} f \c Y^{\ell=1}_{m=i}\right) (-\frac{1}{2}\ov{\ka} Y^{\ell=1}_m) \\
 &=&\sum_{i=-1}^{1}  \left(\int_{S}\nabb_4( f) \c Y^{\ell=1}_{m=i}+f \c \nabb_4(Y^{\ell=1}_{m=i})+\frac 1 2 \ka  f \c Y^{\ell=1}_{m=i}\right) Y^{\ell=1}_{m=i}  \\
  &=&\sum_{i=-1}^{1}  \left(\int_{S}\nabb_4( f) \c Y^{\ell=1}_{m=i}\right) Y^{\ell=1}_{m=i} = (\nabb_4f)_{\ell=1}
 \eeaa
 as desired. Similarly for $\nabb_3 f$. 
 \end{proof}

\subsection{Linearized Einstein-Maxwell equations}\label{linearized-equations}

We collect here the equations for linearized gravitational and electromagnetic perturbation of Reissner-Nordstr{\"o}m metric. 
Recall that in Reissner-Nordstr{\"o}m metric the following Ricci coefficients, curvature and electromagnetic components vanish:
\beaa
&\chih, \quad \chibh, \quad \eta, \quad \etab, \quad \ze, \quad \xi, \quad \xib \\
&\a, \quad \b, \quad \sigma, \quad \bb, \quad \aa \\
&\bF, \quad  \sigmaF, \quad \bbF
\eeaa
In particular, in writing the linearization of the equations of Section \ref{VEeqDNGsec}, we will neglet the quadratic terms, i.e. product of terms above which vanish in Reissner-Nordstr{\"o}m background.  
See Section 4 of \cite{Giorgi4}.

\subsubsection{Linearised null structure equations}
\bea
\nabb_3 \chibh+\left( \kab +2\omb\right) \chibh&=&-2\DDs_2\xib -\aa, \label{nabb-3-chibh} \\
\nabb_4 \chih+\left(\ka+2\om\right) \ \chih&=&-2\DDs_2\xi -\a , \label{nabb-4-chih}\\
\nabb_3\chih+\left(\frac 12  \kab -2 \omb\right) \chih   &=&  -2 \slashed{\mathcal{D}}_2^\star \eta-\frac 1 2 \ka \chibh   \label{nabb-3-chih}\\
\nabb_4\chibh+\left(\frac 12  \ka-2\om\right) \,\chibh &=&  -2 \slashed{\mathcal{D}}_2^\star \etab -\frac 1 2 \kab \chih \label{nabb-4-chibh}, 
\eea
\bea
\nabb_3 \ze+ \left(\frac 1 2 \kab-2\omb \right)\ze&=&2 \DDs_1( \omb, 0)-\left(\frac 1 2 \kab+2\omb \right)\eta +\left(\frac 1 2 \ka +2\om \right)\xib - \bb  -\rhoF\bbF, \label{nabb-3-ze} \\
\nabb_4 \ze+ \left(\frac 1 2 \ka-2\om \right)\ze&=&-2 \DDs_1( \om, 0)+\left(\frac 1 2 \ka+2\om \right)\etab -\left(\frac 1 2 \kab +2\omb \right)\xi - \b  -\rhoF\bF, \label{nabb-4-ze} \\
\nabb_4\xib-\nabb_3 \etab&=&-\frac 1 2 \kab\left(\eta-\etab \right)+4\om  \xib -\bb-\rhoF \bbF , \label{nabb-4-xib}\\
\nabb_3\xi-\nabb_4 \eta&=&\frac 1 2 \ka\left(\eta-\etab \right)+4\omb  \xi +\b+\rhoF \bF , \label{nabb-3-xi}
\eea
\bea
\nabb_3 \kab +\frac 1 2 \kab^2+2\omb \ \kab&=&2\divv \xib, \label{nabb-3-kab}\\
\nabb_4 \ka +\frac 1 2 \ka^2+2\om \ \ka&=&2\divv \xi, \label{nabb-4-ka}\\
\nabb_3 \ka+\frac 1 2 \ka\kab-2\omb \ka&=& 2\divv \eta +2\rho, \label{nabb-3-ka}\\
\nabb_4 \kab+\frac 1 2 \ka\kab-2\om \kab&=& 2\divv \etab +2\rho, \label{nabb-4-kab}
\eea
\bea
\divv \chibh&=&-\frac 1 2 \kab \ze-\frac 1 2 \DDs_1( \kab, 0) +\bb-\rhoF\bbF, \label{Codazzi-chib} \\
\divv \chih&=&\frac 1 2 \ka \ze-\frac 1 2 \DDs_1(\ka, 0) -\b +\rhoF\bF \label{Codazzi-chi}
\eea
\bea
\nabb_4\omb+\nabb_3\om&=& 4\om\omb+\rho+\rhoF^2,\label{nabb-4-omb} \\
\curll \eta&=&\sigma, \label{curl-eta}\\
\curll \etab&=& -\sigma \label{curl-etab}
\eea
\bea
K&=&-\frac 1 4 \ka\kab-\rho+\rhoF^2 \label{Gauss}
\eea

\subsubsection{Linearised Maxwell equations}
\bea
\nabb_3 \bF+\left(\frac 1 2 \kab-2\omb\right) \bF&=& -\DDs_1(\rhoF, \sigmaF) +2\rhoF \eta, \label{nabb-3-bF}\\
\nabb_4 \bbF+\left(\frac 1 2 \ka-2\om\right) \bbF&=&\DDs_1(\rhoF, -\sigmaF)-2\rhoF \etab \label{nabb-4-bbF}
\eea
\bea
\nabb_3\rhoF+ \kab \rhoF&=& -\divv\bbF \label{nabb-3-rhoF}\\
\nabb_4\rhoF+ \ka \rhoF&=&\divv\bF\label{nabb-4-rhoF}
\eea
\bea
\nabb_3\sigmaF+\kab \ \sigmaF&=& \slashed{\curl}\bbF\label{nabb-3-sigmaF}\\
\nabb_4\sigmaF+\ka \ \sigmaF&=& \slashed{\curl}\bF\label{nabb-4-sigmaF}
\eea

\subsubsection{Linearised Bianchi identities}
\bea
 \nabb_3\a+\left(\frac 1 2 \kab-4\omb \right)\a&=&-2 \DDs_2\, \b-3\rho\chih -2\rhoF \ \left(\DDs_2\bF +\rhoF\chih \right), \label{nabb-3-a}\\
 \nabb_4\aa+\left(\frac 1 2 \ka-4\om\right) \aa&=&2 \DDs_2\, \bb-3\rho\chibh +2 \rhoF \ \left(\DDs_2\bbF -\rhoF\chibh \right) , \label{nabb-4-aa}
 \eea
 \bea
 \nabb_3 \b+\left(\kab -2\omb\right) \b &=&\DDs_1(-\rho, \sigma) +3\rho  \eta +\rhoF\left(-\DDs_1(\rhoF, \sigmaF)  - \ka\bbF-\frac 1 2 \kab \bF    \right) , \label{nabb-3-b}\\
 \nabb_4 \bb+\left(\ka-2\om\right) \bb &=&\DDs_1(\rho, \sigma) -3 \rho \etab +\rhoF\left(\DDs_1(\rhoF, -\sigmaF)  - \kab\bF-\frac 1 2 \ka \bbF    \right) , \label{nabb-4-bb}
 \eea
 \bea
 \nabb_3 \bb+ \left( 2\kab  +2\omb\right) \bb &=&-\divv\aa-3\rho \xib  +\rhoF\left(\nabb_3\bbF+2\omb \bbF+2\rhoF \ \xib\right), \label{nabb-3-bb}\\
 \nabb_4 \b+ \left(2\ka+2\om\right) \b &=&\divv\a +3\rho\xi +\rhoF\ \left(\nabb_4\bF+2\om \bF-2\rhoF \xi\right) \label{nabb-4-b}
 \eea
\bea
\nabb_3\rho+\frac 3 2 \kab \rho&=&-\kab\rhoF^2-\divv\bb-\rhoF \ \divv\bbF \label{nabb-3-rho}, \\
\nabb_4\rho+\frac 3 2 \ka \rho&=&-\ka\rhoF^2 +\divv\b+\rhoF \ \divv\bF, \label{nabb-4-rho}
\eea
\bea
\nabb_3 \sigma+\frac 3 2 \kab \sigma&=&-\slashed{\curl}\bb - \rhoF \ \slashed{\curl}\bbF, \label{nabb-3-sigma}\\
\nabb_4 \sigma+\frac 3 2 \ka \sigma&=&-\slashed{\curl}\b - \rhoF  \ \slashed{\curl}\bF \label{nabb-4-sigma}
\eea

\section{Gauge-invariant quantities $\tilde{\b}$ and $\tilde{\bb}$}\label{gauge-inv-section}

In this section, we define the tensors for which we prove decay in the Main Theorem of this paper.

 In order to identify the gauge-invariant quantities we consider null frame transformations, i.e. linear transformations which take null frames into null frames. They are given by 
\beaa
e_4'&=&e^a \left(e_4+f^A e_A \right), \\
e_3'&=& e^{-a} \left(e_3+\underline{f}^A e_A \right), \\
e_A'&=& {O_A}^B e_B+\frac 1 2 \underline{f}_A e_4+\frac 1 2 f_A e_3
\eeaa
where $a$ is a scalar function, $f$ and $\underline{f}$ are $S_{u,v}$-tensors and ${O_A}^B$ is an orthogonal transformation of $(S_{u,v}, \slashed{g})$, i.e. ${O_A}^C {O_B}^D\slashed{g}_{CD}=\slashed{g}_{AB}$. See Lemma 2.3.1 of \cite{stabilitySchwarzschild} for a classification of such transformations. When $a$, $f$ and $\underline{f}$ are small perturbations of $0$ and $O_{AB}$ is a small perturbation of the identity, we check that the above transformation transform a null frame into another null frame. Indeed, for example
\beaa
\g(e_3', e_4')&=& \g(e^{-a} \left(e_3+\underline{f}^A e_A \right), e^a \left(e_4+f^B e_B \right))=g(e_3, e_4)=-2 \\
\g(e_3', e_A')&=& \g(e^{-a} \left(e_3+\underline{f}^B e_B \right),  {O_A}^C e_C+\frac 1 2 \underline{f}_A e_4+\frac 1 2 f_A e_3)\\
&=& \g(e^{-a} e_3, \frac 1 2 \underline{f}_A e_4)+\g(e^{-a} \underline{f}^B e_B, O_{A}^C e_C)+\text{quadratic terms in $f$, $\underline{f}$}\\
&=& \text{quadratic terms}
\eeaa
These transformations modify the Ricci coefficients and the curvature components of the spacetime at the linear level. 
We summarize here some transformation rules for Ricci coefficients and curvature components under a general null transformation. See also Proposition 2.3.4 of \cite{stabilitySchwarzschild}. 

\begin{proposition}\label{prop:transformations1} Under a general null transformation, the linear transformations of curvature and electromagnetic components are the following: 
\bea
\b'&=&\b+\frac 3 2 \rho f \\
\rho'&=& \rho, \\
\bb'&=&\bb-\frac 3 2 \rho \fb 
\eea
 \bea
 \bF'&=&\bF +f \rhoF, \label{bfprime}\\
 \rhoF'&=&\rhoF,\\
 \bbF'&=&\bbF - \fb \rhoF \label{bbfprime}
 \eea
\end{proposition}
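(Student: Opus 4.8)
The plan is to prove all six identities by one direct computation: insert the primed frame vectors into the definitions \eqref{def3} and \eqref{decomposition-F} of the null curvature and electromagnetic components, expand by multilinearity of $\W$ and $\F$, and keep only the terms that survive at the linear level. The decisive structural fact is that we linearize around Reissner--Nordstr\"om, where the only nonvanishing background curvature and electromagnetic quantities are $\rho$, $\rhoF$ and the Gauss curvature, while $\a,\b,\si,\bb,\aa$ and $\bF,\sigmaF,\bbF$ all vanish. Hence any cross-term of the form (frame parameter)$\times$(vanishing background component) is quadratic and is discarded, and the only corrections that can appear are products of the parameters $f,\fb,a$ and $({O_A}^B-\delta_A^B)$ with $\rho$ or $\rhoF$. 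This immediately explains why $\rho$ and $\rhoF$ are invariant and why the corrections to $\b,\bb,\bF,\bbF$ are all proportional to $\rho$ or $\rhoF$.

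Before expanding, I would record the background values of $\W$ and $\F$ on the frame. Besides the obvious $\W(e_3,e_4,e_3,e_4)=4\rho$ and $\F(e_3,e_4)=2\rhoF$, the one nontrivial ingredient is the mixed Weyl component $\W(e_A,e_4,e_3,e_B)$. I would fix it from the tracelessness of $\W$: contracting $\W_{\mu 4}{}^{\mu}{}_3=0$ with $g^{34}=-\tfrac12$ and $\W(e_3,e_4,e_4,e_3)=-4\rho$ yields $\slashed{g}^{AB}\W(e_A,e_4,e_B,e_3)=-2\rho$, so that (the $\si$-part vanishing on the background) $\W(e_A,e_4,e_3,e_B)=\rho\,\slashed{g}_{AB}$, and likewise $\W(e_A,e_3,e_B,e_4)=-\rho\,\slashed{g}_{AB}$ by the pair symmetry of $\W$. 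Getting these signs and the antisymmetries of $\W,\F$ right is the crux, and I expect it to be the only real obstacle; everything downstream is bookkeeping.

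Next I would treat the components one at a time. For $\rho'=\tfrac14\W(e_3',e_4',e_3',e_4')$ the factors $e^{\pm a}$ cancel and every first-order correction is a parameter times $\b$ or $\bb$, both zero, so $\rho'=\rho$; the same cancellation gives $\rhoF'=\rhoF$ since the corrections are $\fb\cdot\bF$ and $f\cdot\bbF$. For $\b_A'=\tfrac12\W(e_A',e_4',e_3',e_4')$ exactly two terms survive: feeding $\tfrac12 f_A e_3$ into the first slot gives $\tfrac12 f_A\W(e_3,e_4,e_3,e_4)=2\rho f_A$, and feeding $f^E e_E$ into the last slot gives $f^E\W(e_A,e_4,e_3,e_E)=\rho f_A$ by the mixed-component identity, so $\W(e_A',e_4',e_3',e_4')=2\b_A+3\rho f_A$, i.e. $\b'=\b+\tfrac32\rho f$. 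The rule for $\bb'$ follows from the same tally under $3\leftrightarrow4$, $f\leftrightarrow\fb$, $a\leftrightarrow -a$, the sign $-\tfrac32$ arising from $\W(e_4,e_3,e_3,e_4)=-4\rho$ and $\W(e_A,e_3,e_B,e_4)=-\rho\,\slashed{g}_{AB}$.

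Finally, the electromagnetic rules are even shorter: in $\bF_A'=\F(e_A',e_4')$ the only surviving correction is $\tfrac12 f_A\F(e_3,e_4)=f_A\rhoF$, giving \eqref{bfprime}, and in $\bbF_A'=\F(e_A',e_3')$ it is $\tfrac12\fb_A\F(e_4,e_3)=-\fb_A\rhoF$, giving \eqref{bbfprime}. Throughout, the contributions coming from the orthogonal part ${O_A}^B$ only multiply the vanishing backgrounds $\b,\bb,\bF,\bbF$, and the conformal factors $e^{\pm a}$ contribute only at quadratic order, which confirms that no dependence on $a$ or $O$ enters any of the six linearized transformation laws.
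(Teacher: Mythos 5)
Your proof is correct and follows essentially the same route as the paper's: expand the definitions \eqref{def3} and \eqref{decomposition-F} in the primed frame by multilinearity, discard every term that is quadratic (i.e.\ a frame parameter times a component vanishing on the background), and keep the surviving corrections proportional to $\rho$ and $\rhoF$. The only difference is one of detail: the paper computes just $\b'$ and $\bF'$ explicitly and uses the mixed Weyl component $\W(e_A,e_4,e_3,e_B)=\rho\,\slashed{g}_{AB}$ implicitly, whereas you derive that identity from the trace-free property of $\W$ (together with the vanishing of the $\sigma$-part on the background) and carry out the tally for all six components.
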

\begin{proof} We have
\beaa
\b'_A&=& \frac 1 2 \W(e_A', e_4', e_3', e_4')\\
&=& \frac 1 2 \W({O_A}^B e_B+\frac 1 2 \underline{f}_A e_4+\frac 1 2 f_A e_3, e^a \left(e_4+f^C e_C \right), e^{-a} \left(e_3+\underline{f}^E e_E \right), e^a \left(e_4+f^D e_D \right))\\
&=& \b_A +\frac 1 2 f^D \W(e_A, e_4, e_3, e_D)+\frac 1 4 f_A \W(e_3, e_4, e_3, e_4) + \text{quadratic terms}\\
&=& \b_A +\frac 3 2 f_A \rho + \text{quadratic terms}
\eeaa
We have
\beaa
 \bF'_A&=&  \F(e_A', e_4')=\F({O_A}^B e_B+\frac 1 2 \underline{f}_A e_4+\frac 1 2 f_A e_3, e^a \left(e_4+f^C e_C \right))\\
 &=& \bF_A+\frac 1 2 f_A \F(e_3, e_4)+\text{quadratic terms} \\
 &=& \bF_A+ f_A \rhoF
\eeaa
Similarly we see that $\rho'=\rho+\text{quadratic terms}$ and $\rhoF'=\rhoF+\text{quadratic terms}$. 
\end{proof}

We define a gauge-invariant quantity in the context of linear stability in the following way.
\begin{definition} We say that $\Psi$ is a (linear) gauge-invariant quantity if under a linear null frame transformation it is modified only quadratically, i.e. if $\Psi'=\Psi+ \text{quadratic terms}$. 
\end{definition}

The Teukolsky equations we shall consider are wave equations for gauge-invariant quantities for linear gravitational and electromagnetic perturbations of Reissner-Nordstr{\"o}m spacetime. 

From the study of Maxwell equations in Schwarzschild spacetime (see \cite{Blue} or \cite{Federico}), it is known that the extreme null components $\bF$ and $\bbF$ verify a decoupled wave equation, called the Teukolsky equation of spin $\pm1$. However, it is immediate from \eqref{bfprime} and \eqref{bbfprime} in Proposition \ref{prop:transformations1}, that the extreme electromagnetic components $\bF$ and $\bbF$ are not gauge-invariant if $\rhoF$ is not zero in the background. In particular, in the case of coupled gravitational and electromagnetic perturbations of Reissner-Nordstr{\"o}m spacetime, for which $\rhoF=\frac{Q}{r^2}$, the spin $\pm1$ Teukolsky equations verified by $\bF$ and $\bbF$ (derived in Proposition \ref{Teukolsky-bF}) cannot be used to obtain decay, since they are not gauge-invariant.

A new gauge-invariant quantity to be used in this context has to be found. This quantity has to be a one-tensor, in order to transport electromagnetic radiation, supported in $\ell\geq 1$.

We define the following $1$-tensors
\bea\label{definition-tilde-b}
\tilde{\b}:= 2\rhoF \b-3\rho \bF, \qquad \underline{\tilde{\b}}:= 2\rhoF \bb-3\rho \bbF
\eea
They are gauge invariant quantities. Indeed, using Proposition \ref{prop:transformations1}, we obtain
\beaa
\tilde{\b}'=2\rhoF' \b'-3\rho' \bF'=2\rhoF (\b+\frac 3 2 \rho f)-3\rho (\bF+\rhoF f)=2\rhoF \b-3\rho \bF=\tilde{\b}
\eeaa
and similarly for $\tilde{\bb}$. 

\begin{remark}\label{beta-tilde-schw} Observe that in Schwarzschild $\tilde{\b}$ reduces, at the linear level, to 
\beaa
\tilde{\b}:= -3\rho \bF=\frac{6M}{r^3} \bF
\eeaa
and the Teukolsky equation in Proposition \ref{Teukolsky-tilde-b} reduces to the Teukolsky equation in Proposition \ref{Teukolsky-bF} verified by $\bF$. 
\end{remark}

The gauge-invariant quantities $\tilde{\b}$ and $\tilde{\bb}$ have the additional remarkable property that they vanish for Kerr-Newman solutions, as it becomes clear in linearizing the Reissner-Nordstr{\"o}m spacetime for small variation of angular momentum (see Remark 5.2.1 in \cite{Giorgi6}).

This implies that, in order to prove linear stability of Reissner-Nordstr{\"o}m spacetime, since $\tilde{\b}$ and $\tilde{\bb}$ vanish for any pure gauge solutions and for any linearized Kerr-Newman solution, these quantities have to decay.

Indeed, $\tilde{\b}$ and $\tilde{\bb}$ verify a Teukolsky-type equation which can be treated through a Chandrasekhar transformation, and for which decay for the projection to the $\ell=1$ mode can be proved indipendently from the remaining projection. The boundedness and decay of the $\ell\geq 2$ modes are implied by the results for $\a$ and $\ff$ in \cite{Giorgi4}.

\section{Generalized spin $\pm1$-Teukolsky and Fackerell-Ipser equation in $l=1$ mode}\label{spin1}

In this section, we introduce a generalization of the celebrated spin $\pm 1$ Teukolsky equations and the Fackerell-Ipser equation, and explain the connection between them and their relation to the linear stability of Reissner-Nordstr{\"o}m spacetime to gravitational and electromagnetic perturbations.

\subsection{Generalized spin $\pm 1$ Teukolsky equation}

The generalized spin $\pm1$ Teukolsky equation concerns $1$-tensors which we denote $\tilde{\b}$ and $\tilde{\bb}$ respectively.

\begin{definition}\label{teukolsky-1} Let $\tilde{\b}$ be a $S^2_{u,v}$ 1-tensor defined
on a subset $\mathcal{D}\subset \mathcal{M}$. We say that $\tilde{\b}$ satisfy the {\bf generalized Teukolsky equation of spin ${\bf +1}$} if it satisfies the following PDE:
\beaa
\Box_\g (r^3 \tilde{\b})&=& -2\omb \nabb_4(r^3\tilde{\b})+\left(\ka+2\om\right) \nabb_3(r^3\tilde{\b})+\left(\frac {1}{4} \ka \kab-3\omb \ka+\om\kab-2\rho+3\rhoF^2-8\om\omb+2\nabb_3 \om\right) r^3 \tilde{\b}\\
&&-2r^3\kab\rhoF^2\left(\nabb_4 \bF+ (\frac 3 2 \ka+2\om)\bF-2\rhoF \xi \right)+\mathcal{I}
\eeaa
where $\Box_\g=\g^{\mu\nu} \D_{\mu}\D_{\nu}$ denotes the wave operator in Reissner-Nordstr{\"o}m spacetime, and $\mathcal{I}$ is a $1$-tensor with vanishing projection to the $\ell=1$ spherical harmonics, i.e. $\divv \ \mathcal{I}_{\ell=1}=\curll \ \mathcal{I}_{\ell=1}=0$.

Let $\tilde{\bb}$ be a $S^2_{u,v}$ 1-tensor defined
on a subset $\mathcal{D}\subset \mathcal{M}$. We say that $\tilde{\b}$ satisfy the {\bf generalized Teukolsky equation of spin ${\bf -1}$} if it satisfies the following PDE:
\beaa
\Box_\g (r^3 \tilde{\bb})&=& -2\om \nabb_3(r^3\tilde{\bb})+\left(\kab+2\omb\right) \nabb_4(r^3\tilde{\bb})+\left(\frac {1}{4} \ka \kab-3\om \kab+\omb\ka-2\rho+3\rhoF^2-8\om\omb+2\nabb_4 \omb\right) r^3 \tilde{\bb}\\
&&-2r^3\ka\rhoF^2\left(\nabb_3 \bbF+ (\frac 3 2 \kab+2\omb)\bbF+2\rhoF \xib \right)+\underline{\mathcal{I}}
\eeaa
where $\underline{\mathcal{I}}$ is a $1$-tensor with vanishing projection to the $\ell=1$ spherical harmonics.
\end{definition}

\subsection{Generalized Fackerell-Ipser equation in $l=1$ mode}

The other generalized equation in $\ell=1$ to be defined here is the generalized Fackerell-Ipser equation, to be satisfied by a one tensor $\pf$.

\begin{definition} \label{def:rwe-1}
Let $\pf$ be a $1$-tensor on $\mathcal{D}\subset \mathcal{M}$.
We say that $\pf$ satisfies the {\bf generalized Fackerell-Ipser equation in $\ell=1$} if it satisfies the following PDE:
\bea\label{finalequationl1}
\Box_\g\pf+\left(\frac 1 4 \ka\kab-5\rhoF^2\right)\pf &=&\mathcal{J}
\eea
where $\mathcal{J}$ is a $1$-tensor with vanishing projection to the $\ell=1$ spherical harmonics, i.e. $\divv \ \mathcal{J}_{\ell=1}= \curll \ \mathcal{J}_{\ell=1}=0$.
\end{definition}

In Section \ref{transformation-theory-1}, we will show that given a solution $\tilde{\b}$ and $\tilde{\bb}$ of the generalized spin $\pm 1$ Teukolsky equations in $\ell=1$, respectively, we can derive two solutions $\pf$ and $\underline{\pf}$, respectively, of the generalized Fackerell-Ipser equation in $\ell=1$.

\subsection{The characteristic initial value formulation}
For completeness, we state here a standard well-posedness theorem for both the generalized Teukolsky equation and the generalized Fackerell-Ipser equation. We formulate it in the context of a characteristic initial value problem. We fix a sphere $S_{u_0, v_0}$ in $\mathcal{M}$ and consider the outgoing Reissner-Nordstr{\"o}m light cone $C_{u_0}=\{u_0\} \times \{ v \geq v_0\} \times S^2$ and the ingoing Reissner-Nordstr{\"o}m light cone  $C_{v_0}=\{ u \geq u_0\}  \times \{ v_0\} \times S^2$ on which data are being prescribed.

\begin{proposition}[Well-posedness for generalized Teukolsky equation of spin $+1$] Given a sphere with corresponding null cones $C_{u_0}$ and $C_{v_0}$, prescribe 
\begin{itemize}
\item along $C_{v_0}$ a one $S_{u, v}$-tensor $\tilde{\b}_{0, in}$, such that $\Omega \tilde{\b}$ is smooth
\item along $C_{u_0}$ a smooth one $S_{u, v}$-tensor $\tilde{\b}_{0, out}$, satisfying $\tilde{\b}_{0, out}=\tilde{\b}_{0, in}$ on $S_{u_0, v_0}$.
\end{itemize}
Then there exists a unique smooth one $S_{u, v}$-tensor $\Omega \tilde{\b}$  defined on $\mathcal{M} \cap \{ u \geq u_0\} \cap \{ v \geq v_0\}$ such that 
\begin{itemize}
\item $\tilde{\b}$ satisfies the generalized Teukolsky equation of spin $+1$ in $\mathcal{M} \cap \{ u \geq u_0\} \cap \{ v \geq v_0\}$
\item $\Omega \tilde{\b}|_{C_{u_0}} = \Omega \tilde{\b}_{0, in}$ and $\tilde{\b}|_{C_{v_0}} = \tilde{\b}_{0, out}$
\end{itemize}
\end{proposition}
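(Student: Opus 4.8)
The plan is to establish well-posedness by recognizing the generalized Teukolsky equation of spin $+1$ as a linear, strictly (tensorial) wave equation on the Reissner-Nordstr\"om background and invoking the standard theory of the characteristic initial value problem for linear wave equations on globally hyperbolic Lorentzian manifolds. The key structural observation, anticipated by the appearance of $\Omega\tilde\b$ throughout the statement, is that the rescaled quantity $\Phi := \Omega\tilde\b$ (or more precisely a suitably weighted rescaling such as $r^3\Omega\tilde\b$) satisfies a wave equation whose principal part is exactly $\Box_\g$, and whose lower-order coefficients—built from $\ka$, $\kab$, $\om$, $\omb$, $\rho$, $\rhoF$—extend smoothly and regularly across the horizon $\mathcal{H}^+$ in the regular null frame $(e_3^*, e_4^*)$ of \eqref{eq:regular-nullpair}. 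The rescaling by $\Omega$ is precisely what absorbs the degeneration of the symmetric null frame $(e_3,e_4)$ at the horizon, so that the coefficients of the transport terms $\nabb_3$, $\nabb_4$ become regular when expressed via $e_3^* = \Omega^{-1}e_3$, $e_4^* = \Omega e_4$.

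First I would reduce the tensorial equation to a system of scalar wave equations. Using the Hodge decomposition recalled in Section \ref{spherical-harmonics}, the one-tensor $\tilde\b$ is represented through scalar potentials, and the operators $\nabb_3$, $\nabb_4$, $\Box_\g$ act on these in a way that, modulo lower-order curvature couplings, decouples into scalar wave equations on $\mathcal{M}$. The Ricci coefficients of Reissner-Nordstr\"om are smooth functions of $r$ alone, so all coefficients are smooth on the domain $\mathcal{M}\cap\{u\ge u_0\}\cap\{v\ge v_0\}$. For such a linear symmetric-hyperbolic (equivalently, second-order wave) system with characteristic data prescribed on the two transversally intersecting null cones $C_{u_0}$ and $C_{v_0}$, existence and uniqueness of a smooth solution in the causal future of $S_{u_0,v_0}$ follows from the classical Rendall-type theory of the characteristic Cauchy problem (reducing to energy estimates in a slab and a standard continuity/bootstrap argument, or by direct construction along the null foliation). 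The compatibility condition $\tilde\b_{0,out}=\tilde\b_{0,in}$ on the corner $S_{u_0,v_0}$ guarantees that the prescribed data agree where the two cones meet.

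Concretely, the steps I would carry out are: \textbf{(i)} set $\Phi=r^3\Omega\tilde\b$ and rewrite the equation of Definition \ref{teukolsky-1} in the regular frame, verifying that the result is $\Box_\g\Phi + (\text{smooth first-order and zeroth-order terms}) = 0$ with coefficients regular up to and including $\mathcal{H}^+$; \textbf{(ii)} express $\Phi$ via its scalar potentials and obtain the associated scalar wave system; \textbf{(iii)} invoke the standard local-to-global characteristic existence and uniqueness theorem for linear wave equations to produce a unique smooth $\Phi$, and hence a unique smooth $\Omega\tilde\b$, attaining the prescribed data; \textbf{(iv)} check that smoothness of $\Omega\tilde\b$ (rather than $\tilde\b$ itself) is the natural regularity class, exactly because of the horizon degeneration absorbed by the $\Omega$ weight.

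The main obstacle I anticipate is not the analytic machinery—the characteristic problem for linear wave equations is entirely standard—but the bookkeeping in step \textbf{(i)}: one must verify carefully that, after the $\Omega$-rescaling and passage to the regular frame, every coefficient (in particular those multiplying $\nabb_3$ and $\nabb_4$ and the potential-type terms involving $\rhoF^2$ and the Maxwell coupling $\nabb_4\bF$) is genuinely smooth across $\mathcal{H}^+$ and does not reintroduce a degeneration. Since the statement is flagged as a standard well-posedness result and the coefficients are explicit smooth functions of $r$, this verification is mechanical, and no genuinely hard estimate is required; the result then follows immediately from the general theory.
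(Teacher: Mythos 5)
The paper gives no proof of this proposition at all --- it is stated ``for completeness'' as a standard well-posedness theorem for the characteristic initial value problem --- and your proposal supplies essentially the argument the paper implicitly invokes: the equation is a linear tensorial wave equation with smooth coefficients, the $\Omega$-rescaling (equivalently, passing to the regular frame $(e_3^*,e_4^*)$) removes the degeneration of the symmetric frame at $\mathcal{H}^+$ so that $\Omega\tilde{\b}$ is the naturally regular unknown, and Rendall-type characteristic theory together with energy estimates in the domain of dependence gives existence and uniqueness of a smooth solution attaining the data on $C_{u_0}$ and $C_{v_0}$. Two minor remarks: your step \textbf{(ii)} is an unnecessary detour, since the characteristic theory applies directly to second-order linear hyperbolic systems for $S_{u,v}$-tangent tensors without passing to Hodge potentials (which would introduce nonlocal operators on the spheres); and for the statement to be a closed problem for $\tilde{\b}$ alone, the coupling terms in Definition \ref{teukolsky-1} --- namely $\nabb_4\bF$, $\xi$ and the tensor $\mathcal{I}$ --- must be read as prescribed smooth inhomogeneities, which leaves the linear theory you invoke unchanged but should be said explicitly rather than writing the equation as homogeneous.
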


A similar result holds for the generalized Teukolsky equation of spin $-1$. 

The well-posedness statement for the Fackerell-Ipser equation is entirely analogous.

\section{Generalized Chandrasekhar transformation into Fackerell-Ipser}\label{transformation-theory-1}

We now describe a transformation theory relating solutions of the generalized
Teukolsky equation to solutions of the generalized Fackerell-Ipser equation in $\ell=1$. \footnote{Observe that the operators which allow to define the Chandrasekhar transformation into the Regge-Wheeler equations in \cite{Giorgi4} are the same as the operators for the Chandrasekhar transformation into Fackerell-Ipser equation used here. }

We introduce the following operators for a $n$-rank $S$-tensor $\Psi$, previously defined in \cite{Giorgi4}:
\bea \label{operators}
\underline{P}(\Psi)&=&\frac{1}{\kab} \nabb_3(r \Psi), \qquad  P(\Psi)=\frac{1}{\ka} \nabb_4(r \Psi) 
\eea 

Given a solution  $\tilde{\b}$ of the generalized Teukolsky equation of spin $+1$, we can define the following \emph{derived}\footnote{Recall the other derived quantities defined in \cite{Giorgi4}: \beaa
\psi_0 &=& r^2 \kab^2 \a, \\
\psi_1&=&\underline{P}(\psi_0), \\
\psi_2&=&\underline{P}(\psi_1)=\underline{P}(\underline{P}(\psi_0))=:\qf, \\
\psi_3&=& r^2 \kab \ \ff, \\
\psi_4&=& \underline{P}(\psi_3)=:\qf^\F
\eeaa} quantities for $\tilde{\b}$:
\bea\label{quantities-1}
\begin{split}
\psi_5 &= r^4 \kab \ \tilde{\b}, \\
\psi_6&=\underline{P}(\psi_5):=\pf
\end{split}
\eea
Similarly, given a solution  $\underline{\tilde{\b}}$ of the generalized Teukolsky equation of spin $-1$, we can define the following \emph{derived} quantities for $\underline{\tilde{\b}}$:
\bea\label{quantities-2}
\begin{split}
\underline{\psi}_5 &= r^4 \ka \ \tilde{\bb}, \\
\underline{\psi}_6&=P(\underline{\psi}_5):=\underline{\pf}
\end{split}
\eea

The following  proposition
is proven in Appendix \ref{derivation-fackerell}. 
\begin{proposition} \label{prop:rwt1-1}
Let $\tilde{\b}$ be a solution of the generalized Teukolsky equation of spin $+1$ on $\mathcal{M} \cap \{ u \geq u_0\} \cap \{ v \geq v_0\}$. Then the $1$-tensor $\pf$ as defined through \eqref{quantities-1} satisfies the generalized Fackerell-Ipser in $l=1$ system on $\mathcal{M} \cap \{ u \geq u_0\} \cap \{ v \geq v_0\}$. 

Similarly, let $\tilde{\bb}$ be a solution of the generalized Teukolsky equation of spin $-1$ on $\mathcal{M} \cap \{ u \geq u_0\} \cap \{ v \geq v_0\}$. Then the $1$-tensor $\underline{\pf}$ as defined through \eqref{quantities-2} satisfies the generalized Fackerell-Ipser equation in $l=1$ on $\mathcal{M} \cap \{ u \geq u_0\} \cap \{ v \geq v_0\}$. 
\end{proposition}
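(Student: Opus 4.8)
The plan is to establish the Fackerell-Ipser equation by directly substituting the definition $\pf = \underline{P}(\psi_5) = \frac{1}{\kab}\nabb_3(r\,\psi_5)$ with $\psi_5 = r^4 \kab\,\tilde{\b}$ into the wave operator, using the generalized Teukolsky equation of spin $+1$ as the governing equation for $\tilde{\b}$. First I would record the relevant background transport identities in Reissner-Nordstr{\"o}m: the equations $\nabb_3\kab = -\frac12\kab^2 - 2\omb\,\kab$ (from \eqref{nabb-3-kab} with vanishing $\xib$), $\nabb_3 r = \frac{r}{2}\kab$, $\nabb_3\Omega = -2\omb\Omega$, and the commutation relations between $\nabb_3$, $\nabb_4$ and the Laplacian $\squareS$ acting on one-tensors. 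These are the algebraic ingredients that let one commute $\nabb_3$ through $\Box_\g$.

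The core computation proceeds as follows. I would write $\Box_\g \pf$ in terms of the null frame, schematically $\Box_\g = -\nabb_3\nabb_4 + \squareS + (\text{lower order transport})$, and commute the outer $\nabb_3$ (hidden inside the definition of $\pf$) past the box operator. The strategy of \cite{DHR} and \cite{Giorgi4} is precisely that the Chandrasekhar-type operator $\underline{P}$ is engineered so that applying $\Box_\g$ to $\underline{P}(\psi_5)$ reproduces, after commutation, the Teukolsky operator applied to $\psi_5$ plus controlled correction terms. Concretely, I expect the first-order transport terms $-2\omb\nabb_4 + (\ka+2\om)\nabb_3$ appearing in the Teukolsky equation of Definition \ref{teukolsky-1} to be exactly cancelled against the commutator terms generated by differentiating the $r^4\kab$ weight and by the commutation of $\nabb_3$ with $\Box_\g$. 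The potential term $\frac14\ka\kab - 5\rhoF^2$ in \eqref{finalequationl1} should emerge as the residue after this cancellation, with the weight $r^4$ (rather than $r^3$ or $r^5$) chosen precisely to kill the otherwise-surviving first-order derivative terms.

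The decisive and most delicate point is the treatment of the inhomogeneous term $-2r^3\kab\rhoF^2\big(\nabb_4\bF + (\tfrac32\ka+2\om)\bF - 2\rhoF\xi\big)$ on the right-hand side of the Teukolsky equation. Here I expect to invoke the linearized Maxwell equation \eqref{nabb-4-bF}, rewritten with the roles of frames exchanged, or more likely its $\nabb_4\bF$ analogue, to convert this electromagnetic source into terms proportional to $\tilde{\b}$ (via $\tilde{\b} = 2\rhoF\b - 3\rho\bF$) plus quantities that are manifestly supported in $\ell\geq 2$. The key structural claim to verify is that, after applying $\underline{P}$, this source term either contributes to the potential $-5\rhoF^2$ (combining with the $3\rhoF^2$ already present in the Teukolsky potential) or lands in the $\ell\geq 2$-supported remainder $\mathcal{J}$. \textbf{The main obstacle} will be bookkeeping this cancellation exactly: showing that every first-derivative term vanishes and that all residual $\ell=1$ contributions assemble into precisely $\frac14\ka\kab - 5\rhoF^2$, while everything carrying angular operators of the form $\DDs_2$ or $\DDs_1\DDs_1$ acting on two-tensor-valued quantities (which annihilate the $\ell=1$ projection by Lemma \ref{lemma-kernel-DDs2}) is correctly isolated into $\mathcal{J}$. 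The negative-spin statement for $\underline{\pf}$ follows by the symmetry $e_3\leftrightarrow e_4$, $\om\leftrightarrow\omb$, $\ka\leftrightarrow\kab$, $\bF\leftrightarrow\bbF$, $\xi\leftrightarrow\xib$, exchanging $\underline{P}\leftrightarrow P$, so I would carry out the computation only for $\pf$ and invoke this duality for $\underline{\pf}$.
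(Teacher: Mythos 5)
Your overall strategy---commute the operator $\underline{P}$ through $\Box_\g$, use the Teukolsky equation for $\tilde{\b}$, and track the source---is the same as the paper's, which packages the commutation in Lemma \ref{generalwavePhi}, rescales the Teukolsky equation into the canonical form via Corollary \ref{square-psi5} (this is exactly your expected cancellation of the $-2\omb\nabb_4+(\ka+2\om)\nabb_3$ terms), and then computes in Proposition \ref{wave-eq-pf}. The genuine gap is at the step you yourself flag as decisive: the source term $-2r^3\kab\rhoF^2\bigl(\nabb_4\bF+(\tfrac32\ka+2\om)\bF-2\rhoF\xi\bigr)$. You propose to eliminate it using ``the linearized Maxwell equation \eqref{nabb-4-bF} \dots or its $\nabb_4\bF$ analogue,'' but no such first-order equation exists: the Maxwell system provides a $\nabb_3$-transport equation for $\bF$, namely \eqref{nabb-3-bF}, and a $\nabb_4$-transport equation only for $\bbF$, namely \eqref{nabb-4-bbF}; the derivative $\nabb_4\bF$ is not determined by the first-order system, exactly as $\nabb_4\a$ is not determined by the Bianchi identities. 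Computing $\kab^{-1}r\nabb_3(M)+\tfrac32 rM$ therefore requires second-order information: the paper substitutes the Teukolsky wave equation satisfied by the gauge-dependent component $\bF$ itself (Proposition \ref{Teukolsky-bF}, fed in through the computation of Proposition \ref{Teukolsky-tilde-b} and reused in Appendix \ref{derivation-fackerell}) to replace $\nabb_3\nabb_4\bF$, and then uses the identity \eqref{ff-in-terms-lapl-bF}, $-2(\lapp_1+K)\bF+4\rhoF\divv\chih=4\divv\ff$, together with the $\nabb_3$-transport equation for $\ff$, to convert everything into $8r^2\rhoF^2\divv(\qf^\F)+4r\rhoF^2\psi_5$.

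Relatedly, your structural dichotomy---that the transformed source ``either contributes to the potential $-5\rhoF^2$ or lands in $\mathcal{J}$''---misses the actual mechanism and, as stated, cannot close. The term $4r\rhoF^2\psi_5$ produced by the source is proportional to $\psi_5$ (i.e.\ to $\tilde{\b}$ itself), not to $\pf$; it can neither be absorbed into a potential multiplying $\pf$ nor is it supported in $\ell\geq 2$. It must cancel \emph{exactly} against the term $-4r\rhoF^2\,\psi_5$ that the commutation of Lemma \ref{generalwavePhi} generates (with $A=0$, $B=-\tfrac14\ka\kab+5\rhoF^2$, $C=\tfrac1r(\ka\kab-2\rho)$), and verifying this cancellation is the heart of the proof; what survives, $\mathcal{J}=8r^2\rhoF^2\divv(\qf^\F)$, has vanishing $\ell=1$ projection because $\qf^\F$ is a symmetric traceless two-tensor. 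Your reduction of the spin $-1$ case to the spin $+1$ case by the $e_3\leftrightarrow e_4$ symmetry is consistent with what the paper does.
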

\begin{proof} In Lemma \ref{generalwavePhi}, we compute the wave equation verified by a derived quantity of the form $\underline{P}(\Psi)$. We use this lemma to derive the wave equation for $\pf$ in Proposition \ref{wave-eq-pf}, from the Teukolsky equation for $\tilde{\b}$. See Appendix \ref{derivation-fackerell}.
\end{proof}

The fact that the derived quantity $\pf$ satisfies the generalized Fackerell-Ipser equation in $\ell=1$, together with the transport relations \eqref{quantities-1}, will be the key to estimating the projection to $\ell=1$ spherical harmonics of $\tilde{\b}$ and control such projection of the electromagnetic contribution of the radiation.

\subsection{Relation with higher order quantities defined in Schwarzschild}\label{work-Federico}

As observed in Remark \ref{beta-tilde-schw}, the one-form $\tilde{\b}$ defined above for Reissner-Nordstr{\"o}m spacetime reduces to $\tilde{\b}=\frac{6M}{r^3} \bF$  in the particular case of Schwarzschild. Indeed, in the case of Maxwell equations in Schwarzschild spacetime, the one-form $\bF$ is gauge-invariant and verifies a Teukolsky equation of spin $\pm1$ given by 
\beaa
\square_\g \bF&=&-2\omb \nabb_4\bF +\left(\ka+2\om\right) \nabb_3(\bF)+\left(\frac 1 4 \ka \kab-3\omb\ka+\om\kab-2\nabb_4\omb\right) \bF +O(\ep^2)
\eeaa
as derived in Proposition \ref{Teukolsky-bF}. This is consistent with the fact that in Schwarzschild, with $\rhoF=O(\ep)$, the wave equation verified by $\tilde{\b}$ in Proposition \ref{Teukolsky-tilde-b} reduces to
\beaa
\Box_\g (r^3 \tilde{\b})&=& -2\omb \nabb_4(r^3\tilde{\b})+\left(\ka+2\om\right) \nabb_3(r^3\tilde{\b})+\left(\frac {1}{4} \ka \kab-3\omb \ka+\om\kab-2\nabb_4\omb\right) r^3 \tilde{\b}+O(\ep^2)
\eeaa
Observe that it coincides with the Teukolsky equation previously known for the Maxwell equations in Schwarzschild.

In the analysis of Maxwell equations in Schwarzschild in \cite{Federico}, the transformation theory is defined by the author in the following way\footnote{Observe that in \cite{Federico} the extreme null component is defined as $\a(V)=\F(V, L)=\F(V, \Omega e_4)=\Omega \bF$} (see Section 3.2 in \cite{Federico}):
\bea\label{phiFederico}
\phi:&=&\frac{r^2}{1-\mu} \nabb_{\underline{L}}(r\Omega\bF)
\eea
where $1-\mu=\Omega^2=1-\frac{2M}{r}$ and $\underline{L}=\Omega e_3$.
Recall that in double null coordinates used in \cite{Federico}, $\kab=\tr\chib=-\frac 2 r \Omega$. Substituting $\Omega=-\frac r 2 \kab$ in \eqref{phiFederico}, we obtain
\beaa
\phi&=&\frac{r^2}{\Omega} \nabb_{3}(r\Omega\bF)=\frac{r}{ \kab} \nabb_{3}(r^2  \kab\bF)=\frac{1}{6M}\frac{r}{ \kab} \nabb_{3}\left(r^5  \kab\tilde{\b}\right)=\frac{r}{6M} \pf
\eeaa
which relate the $\phi$ in \cite{Federico} in Schwarzschild to the $\pf$ defined in this paper.

\subsection{Relation with the linear stability of  Reissner-Nordstr{\"o}m spacetime}

We will now finally relate the equations presented above to the full system of linearized
gravitational and electromagnetic perturbations of Reissner-Nordstr{\"o}m spacetime in the context of linear stability of Reissner-Nordstr{\"o}m, as proved in \cite{Giorgi6}.

Consider a solution to the linearized Einstein-Maxwell equations  around Reissner-Nordstr{\"o}m spacetime, as presented in Section \ref{linearized-equations}.
Then, the quantities $\tilde{\b}_{A}=2\rhoF \b_A-3\rho \bF_A$ and $\tilde{\bb}_{A}=2\rhoF \bb_A-3\rho \bbF_A$ verify the generalized Teukolsky equation of spin $\pm1$ respectively. 
We obtain the following theorem.

\begin{theorem} \label{prop:relfull}
Let $\tilde{\b}$, $\tilde{\bb}$ be the curvature components of a solution to the linearized Einstein-Maxwell equations around Reissner-Nordstr{\"o}m spacetime as in Section \ref{linearized-equations}. Then $\tilde{\b}$ satisfies the generalized Teukolsky equation of spin $+1$, and $\tilde{\bb}$ satisfies the generalized Teukolsky equation of spin $-1$. Moreover, the derived quantities $\pf$ and $\underline{\pf}$ defined in \eqref{quantities-1} and \eqref{quantities-2} verify the generalized Fackerell-Ipser equation in $\ell=1$.
\end{theorem}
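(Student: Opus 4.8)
This statement has two independent pieces, and the second follows at once from the first. Granting that $\tilde{\b}$ and $\tilde{\bb}$ solve the generalized Teukolsky equations of spin $\pm 1$, the definitions \eqref{quantities-1}--\eqref{quantities-2} give $\pf=\underline{P}(r^4\kab\,\tilde{\b})$ and $\underline{\pf}=P(r^4\ka\,\tilde{\bb})$, so Proposition \ref{prop:rwt1-1} applies directly and produces \eqref{finalequationl1} with $\mathcal{J}$ supported in $\ell\geq 2$. I would therefore devote the work to deriving the spin $+1$ Teukolsky equation for $\tilde{\b}=2\rhoF\b-3\rho\bF$; the spin $-1$ equation for $\tilde{\bb}=2\rhoF\bb-3\rho\bbF$ is obtained by the symmetric argument under $e_3\leftrightarrow e_4$, $\b\leftrightarrow\bb$, $\bF\leftrightarrow\bbF$, $\om\leftrightarrow\omb$, $\xi\leftrightarrow\xib$.

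The first step is the first-order transport equation. Since $\rho$ and $\rhoF$ are background functions of $r$, their null derivatives are fixed by evaluating \eqref{nabb-3-rhoF}, \eqref{nabb-4-rhoF}, \eqref{nabb-3-rho} on the background (where $\b,\bF,\eta$ vanish), giving $\nabb_3\rhoF=-\kab\rhoF$ and $\nabb_3\rho=-\tfrac{3}{2}\kab\rho-\kab\rhoF^2$. Differentiating $\tilde{\b}$ along $e_3$ and substituting the linearized Bianchi identity \eqref{nabb-3-b} for $\nabb_3\b$ together with the Maxwell equation \eqref{nabb-3-bF} for $\nabb_3\bF$, the gauge-dependent $3\rho\eta$ contribution coming from $\b$ cancels exactly against the $2\rhoF\eta$ contribution coming from $\bF$ once weighted by the coefficients $2\rhoF$ and $-3\rho$ --- the algebraic shadow of the gauge invariance proved in Section \ref{gauge-inv-section}. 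This yields a clean transport equation for $\psi_5=r^4\kab\,\tilde{\b}$ along $e_3$, and hence defines $\pf$.

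For the wave equation I would use the standard null-frame decomposition of the wave operator on an $S_{u,v}$ one-form, $\Box_\g\Psi=-\nabb_3\nabb_4\Psi+\lapp\Psi+\text{(first order)}$, together with the commutation relations between $\nabb_3,\nabb_4$ and the Hodge operators on Reissner-Nordstr{\"o}m. Applying $\nabb_4$ to the $e_3$-transport equation for $\tilde{\b}$ and feeding in the remaining Bianchi identity \eqref{nabb-4-b} for $\nabb_4\b$ --- which is where the curvature coupling $\divv\a$ enters --- one reorganizes the result into the form of Definition \ref{teukolsky-1}, reading off the transport coefficients $-2\omb$, $\ka+2\om$ and the potential $\tfrac{1}{4}\ka\kab-3\omb\ka+\om\kab-2\rho+3\rhoF^2-8\om\omb+2\nabb_3\om$. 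The Maxwell coupling $-2r^3\kab\rhoF^2\big(\nabb_4\bF+(\tfrac32\ka+2\om)\bF-2\rhoF\xi\big)$ is kept explicit because $\bF$ is itself not gauge-invariant and this term has no closed-form elimination within $\tilde{\b}$.

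The principal difficulty is the final sorting of the right-hand side. Every term not displayed in Definition \ref{teukolsky-1} must be shown to have vanishing projection to $\ell=1$, so that it can be absorbed into $\mathcal{I}$ with $\divv\mathcal{I}_{\ell=1}=\curll\mathcal{I}_{\ell=1}=0$. This is precisely where the structure pays off: the curvature component $\a$ is a symmetric traceless two-tensor and, together with all terms built from $\DDs_2$-type operators, is supported in $\ell\geq 2$ by the Hodge mapping properties recorded in Section \ref{spherical-harmonics}. The delicate bookkeeping is to confirm that after all substitutions and commutations no genuinely $\ell=1$-supported remainder survives beyond the four explicit contributions, which simultaneously pins down the numerical coefficients in the potential above. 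I would carry out this computation in Appendix A as Proposition \ref{Teukolsky-tilde-b}; combining it with Proposition \ref{prop:rwt1-1} then establishes the theorem.
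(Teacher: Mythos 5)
Your proposal is correct and follows the same overall strategy as the paper: reduce the theorem to a direct derivation of the spin $+1$ Teukolsky equation for $\tilde{\b}$ from the linearized Bianchi and Maxwell equations (the paper's Proposition \ref{Teukolsky-tilde-b}), obtain spin $-1$ by symmetry, and then invoke Proposition \ref{prop:rwt1-1}; the paper concludes exactly this way, noting that $\mathcal{I}=8r^3\rhoF^2\divv\ff$ and $\mathcal{J}=8r^2\rhoF^2\divv(\qf^\F)$ are divergences of symmetric traceless two-tensors and hence have vanishing $\ell=1$ projection. Where you differ is in the organization of the computation, and it is worth recording how. The paper derives the \emph{outgoing} transport equation first, $\nabb_4\tilde{\b}+(3\ka+2\om)\tilde{\b}=2\rhoF\divv\a+(2\rhoF^2-3\rho)\bigl(\nabb_4\bF+(\tfrac32\ka+2\om)\bF-2\rhoF\xi\bigr)$, so $\divv\a$ enters at the very first step through \eqref{nabb-4-b}; it then applies $\nabb_3$, handling $\nabb_3\divv\a$ via \eqref{nabb-3-a} and the mixed derivative of $\bF$ by quoting the already-established Teukolsky equation for $\bF$ (Proposition \ref{Teukolsky-bF}), after which all $\divv\a$ terms cancel identically. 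You instead start from the \emph{ingoing} equation; your key observation there is correct (I checked: the $\eta$-terms cancel and one gets $\nabb_3\tilde{\b}+(2\kab-2\omb)\tilde{\b}=2\rhoF^2(\kab\bF-\ka\bbF)+2\rhoF\DDs_1(-\rho,\sigma)+(3\rho-2\rhoF^2)\DDs_1(\rhoF,\sigmaF)$), and applying $\nabb_4$ plus the null decomposition of $\Box_\g$ is a viable route to the same equation. Two of your anticipated details, however, do not match how the computation actually unfolds: (i) in your route $\b$ has already been absorbed via $2\rhoF\b=\tilde{\b}+3\rho\bF$, so \eqref{nabb-4-b} and $\divv\a$ need never appear (and consistently the final equation contains no $\a$ at all — the $\ell\geq 2$ right-hand side is $8r^3\rhoF^2\divv\ff$, produced by the identity $-2(\lapp_1+K)\bF+4\rhoF\divv\chih=4\divv\ff$ of \eqref{ff-in-terms-lapl-bF}, so the ``sorting'' hinges on $\ff=\DDs_2\bF+\rhoF\chih$, not on $\a$); (ii) what your route trades for the paper's use of Proposition \ref{Teukolsky-bF} is commutator bookkeeping: moving $\nabb_4$ past $\DDs_1$ acting on $\rho$ and $\rhoF$, whose background gradients are nonzero, generates $\xi$, $\etab$, $\ze$ terms (and \eqref{nabb-4-bbF} contributes a further $\etab$), and these must reassemble into the gauge-invariant bracket $\nabb_4\bF+(\tfrac32\ka+2\om)\bF-2\rhoF\xi$. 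Gauge invariance of both sides forces this to work out, but it is the genuinely laborious part that your outline passes over, whereas the paper's ordering keeps those terms packaged inside Proposition \ref{Teukolsky-bF} from the start. Your use of Proposition \ref{prop:rwt1-1} for the Fackerell-Ipser statement is exactly the paper's step (made explicit in Proposition \ref{wave-eq-pf}).
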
 
\begin{proof} See Proposition \ref{Teukolsky-tilde-b} in Appendix A for the derivation of the generalized Teukolsky equation and Proposition \ref{wave-eq-pf} in Appendix B for the derivation of the generalized Fackerell-Ipser equation in $\ell=1$.

The quantity $\tilde{\b}$ verifies the following wave equation:
\beaa
\Box_\g (r^3 \tilde{\b})&=& -2\omb \nabb_4(r^3\tilde{\b})+\left(\ka+2\om\right) \nabb_3(r^3\tilde{\b})+\left(\frac {1}{4} \ka \kab-3\omb \ka+\om\kab-2\rho+3\rhoF^2-8\om\omb+2\nabb_3 \om\right) r^3 \tilde{\b}\\
&&-2r^3 \kab\rhoF^2 \left(\nabb_4 \bF+ (\frac 3 2 \ka+2\om)\bF-2\rhoF \xi \right)+8r^3 \rhoF^2 \divv \ff
\eeaa
which is of the form given in Definition \ref{teukolsky-1}, with $\mathcal{I}=8r^3 \rhoF^2 \divv \ff$, which indeed has vanishing projection to the $\ell=1$ mode. Indeed, recall that the tensor $\ff$ defined in \cite{Giorgi4} is a symmetric traceless two tensor, and therefore by results about spherical harmonics recalled in Section \ref{spherical-harmonics}, its divergence has vanishing projection to the $\ell=1$ spherical harmonics. By Proposition \ref{prop:rwt1-1}, the derived quantity $\pf$ verifies the following generalized Fackerell-Ipser equation in $\ell=1$:
\beaa
\Box_\g\pf+\left(\frac 1 4 \ka\kab-5\rhoF^2\right)\pf&=&8r^2 \rhoF^2\divv(\qf^\F)
\eeaa
which is of the form given in Definition \ref{def:rwe-1}, with $\mathcal{J}=8r^2 \rhoF^2\divv(\qf^\F)$, which has vanishing projection to the $\ell=1$ mode. 
 \end{proof}
 
 Using Proposition \ref{prop:rwt1-1}, we can associate to any solution to the linearized Einstein-Maxwell equations around Reissner-Nordstr{\"o}m spacetime a one form which verifies the generalized Fackerell-Ipser equation in $\ell=1$.

\subsection{The $\ell\geq 2$ modes of the electromagnetic radiation}

Recall the definition of the gauge-invariant quantities defined in \cite{Giorgi4}. From the Weyl curvature component $\alpha$ defined in \eqref{def3}, we defined the derived quantity $\psi_1$ as follows:
\beaa
\psi_1&=&\underline{P}(r^2 \kab^2 \a)= \frac 1 2  r (r^2 \kab^2\a)+\frac{1}{\kab} r\nabb_3(r^2 \kab^2\a)
\eeaa
We also defined the gauge-invariant quantity $\ff$ as 
\beaa
\ff &=&\DDs_2\bF+\rhoF \chih
\eeaa
and their equivalent spin $-2$ versions. These quantities defined in \cite{Giorgi4} are related to the one tensor $\tilde{\b}$ defined in \eqref{definition-tilde-b}. 
\begin{lemma}\label{relations-derived} The following relations hold true:
\beaa
r^3 \kab \DDs_2\tilde{\b}&=& -\rhoF\psi_1-\left(2\rhoF^2+3\rho \right)r^3\kab \ff, \\
r^3 \ka \DDs_2\tilde{\bb}&=& \rhoF\underline{\psi}_1-\left(2\rhoF^2+3\rho \right)r^3\ka \underline{\ff}
\eeaa
\end{lemma}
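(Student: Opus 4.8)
The plan is to reduce the two identities to the definitions of $\tilde{\b}$, $\ff$ and $\psi_1$, a single Bianchi identity, and the background transport equations for $r$ and $\kab$ along $e_3$. I treat the first identity in detail; the second is obtained by the identical argument with $\nabb_3\leftrightarrow\nabb_4$.

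Since the background scalars $\rho$ and $\rhoF$ depend only on $r$, they are constant on each sphere $S_{u,v}$ and so commute with the angular operator $\DDs_2$. Applying $\DDs_2$ to $\tilde{\b}=2\rhoF\b-3\rho\bF$ gives
\[
\DDs_2\tilde{\b}=2\rhoF\,\DDs_2\b-3\rho\,\DDs_2\bF .
\]
I would then eliminate $\DDs_2\bF$ via the definition $\ff=\DDs_2\bF+\rhoF\chih$, and eliminate $\DDs_2\b$ via the linearized Bianchi identity \eqref{nabb-3-a} solved for $-2\DDs_2\b$. The key structural point is that \eqref{nabb-3-a} contains precisely the combination $\DDs_2\bF+\rhoF\chih=\ff$, so upon substitution the terms proportional to $\chih$ cancel identically, leaving
\[
\DDs_2\tilde{\b}=-\rhoF\,\nabb_3\a-\rhoF\big(\tfrac12\kab-4\omb\big)\a-(2\rhoF^2+3\rho)\,\ff .
\]

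Multiplying by $r^3\kab$, the last term is already $-(2\rhoF^2+3\rho)r^3\kab\,\ff$, so it remains to identify
\[
r^3\kab\,\nabb_3\a+r^3\kab\big(\tfrac12\kab-4\omb\big)\a=\psi_1 .
\]
This is the main computation. I would expand $\psi_1=\underline{P}(r^2\kab^2\a)=\tfrac1{\kab}\nabb_3(r^3\kab^2\a)$ by the Leibniz rule and insert the background relations $\nabb_3 r=\tfrac12 r\kab$ and $\nabb_3\kab=-\tfrac12\kab^2-2\omb\kab$ (the latter being \eqref{nabb-3-kab} at the background level, where $\xib=0$). The contribution $\tfrac32 r^3\kab^3\a$ from $\nabb_3(r^3)$ and the contribution $-r^3\kab^3\a$ from $\nabb_3(\kab^2)$ combine to $\tfrac12 r^3\kab^3\a$, reproducing the coefficient $\tfrac12\kab$, while the $\omb$-term yields $-4\omb r^3\kab^2\a$; dividing by $\kab$ gives exactly the claimed expression, whence $r^3\kab\,\DDs_2\tilde{\b}=-\rhoF\psi_1-(2\rhoF^2+3\rho)r^3\kab\,\ff$.

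The only nontrivial step is the bookkeeping in this last expansion, where one must correctly use the background transport equation for $\kab$ together with $\nabb_3 r=\tfrac12 r\kab$; the clean cancellation of the $\chih$ terms in the previous step provides a convenient internal check on the coefficients. For the second identity, the same argument with $\nabb_3\leftrightarrow\nabb_4$, the Bianchi identity \eqref{nabb-4-aa}, the definition $\underline{\ff}=\DDs_2\bbF-\rhoF\chibh$, and the operator $P$ in place of $\underline{P}$ yields the result; the opposite sign $+\rhoF\underline{\psi}_1$ originates from the sign of the $\DDs_2\bb$ term in \eqref{nabb-4-aa} relative to the $\DDs_2\b$ term in \eqref{nabb-3-a}.
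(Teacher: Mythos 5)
Your proof is correct and takes essentially the same approach as the paper: both arguments rest on the expansion $\psi_1=r^3\kab\left(\nabb_3\a+\tfrac12\kab\a-4\omb\a\right)$ obtained from the background transport equations $\nabb_3 r=\tfrac12 r\kab$, $\nabb_3\kab=-\tfrac12\kab^2-2\omb\kab$, on the Bianchi identity \eqref{nabb-3-a}, on the definition of $\ff$, and on the cancellation of the $\chih$ terms (with the sign flip in \eqref{nabb-4-aa} giving $+\rhoF\underline{\psi}_1$ for the negative spin). The only difference is the order of the algebra: the paper rewrites $\psi_1$ via Bianchi and then forms the linear combination $\rhoF\psi_1+3r\rho\cdot r^2\kab\ff$ to produce $-r^3\kab\DDs_2\tilde{\b}$, whereas you start from $\DDs_2\tilde{\b}$ and substitute until $\psi_1$ appears — the same computation run in reverse.
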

\begin{proof} By definition of $\ff$ we have 
\bea\label{ff-bF}
r^2\kab \ff &=&r^2 \kab \left(\DDs_2\bF+\rhoF \chih\right)
\eea
and by definition of $\psi_1$, we have
\beaa
\psi_1&=& \frac 1 2  r (r^2 \kab^2\a)+\frac{1}{\kab} r\left(r^2 \kab^2 \nabb_3(\a)-4 \omb r^2\kab^2 \a \right)=r^3 \kab \left(\nabb_3(\a)+\frac 1 2  \kab\a-4 \omb  \a \right)
\eeaa
Using Bianchi identity \eqref{nabb-3-a}, we obtain
\bea\label{psi-b}
\psi_1&=& r^3 \kab \left(-2\DDs_2\b-3\rho \chih-2\rhoF \ff \right)
\eea
Multiplying \eqref{ff-bF} by $3 r\rho$ and summing it to the \eqref{psi-b} multiplied by $\rhoF$, we obtain
\beaa
\rhoF\psi_1+\left(2\rhoF^2+3\rho \right)r^3\kab \ff&=&r^3 \kab \left(3\rho\DDs_2\bF-2\rhoF\DDs_2\b\right)=-r^3 \kab \DDs_2\tilde{\b}
\eeaa
as desired. Similarly for $\tilde{\bb}$. 
\end{proof}

By these relations, it is clear that the bounds and decay obtained for $\psi_1$ and $\ff$ in the Main Theorem in \cite{Giorgi4} imply bounds and decay for $\DDs_2 \tilde{\b}$, therefore on the projection to the $\ell\geq 2$ spherical harmonics of $\tilde{\b}$. Using the control for the $\ell=1$ mode of $\tilde{\b}$ and $\tilde{\bb}$ obtained through the generalized Fackerell-Ipser equation in $\ell=1$ and elliptic estimates, we can derive control for the one-tensors $\tilde{\b}$ and $\tilde{\bb}$.

\section{Energy quantities and statements of the Main Theorem}
We give the definitions of weighted energy quantities, and we provide the precise statement of the Main Theorem of this paper.

\subsection{Definition of weighted energies}
We define in this section a number of weighted energies.

We define the following vectorfields:
\begin{itemize}
\item $T=\frac 1 2 [\Omega \nabb_3+\Omega \nabb_4]$, 
\item $R^*=\frac 1 2 [-\Omega \nabb_3+\Omega \nabb_4]$
\end{itemize}
Notice that $T$ coincides with the Lie-differentiation $\mathcal{L}_T$ with respect to the Killing field $T$ as defined in Section \ref{Killing-fields}.

\subsubsection{Weighted energies for $\phi$}
The energies in this section will be applied to $\phi=(r^2\divv\pf)_{\ell=1}$,  $\phi=(r^2\curll\pf)_{\ell=1}$, or $\phi=(r^2\divv\underline{\pf})_{\ell=1}$,  $\phi=(r^2\curll\underline{\pf})_{\ell=1}$.

We introduce the following weighted energies for $\phi$. 

\begin{enumerate}
\item Flux energy quantities:
\begin{itemize}
\item $T$-energy null fluxes
\beaa
F_u^T[\phi](v_1, v_2)&=& \int_{v_1}^{v_2} dv \sin\th d\th d\phi \{|\Omega \nabb_4\phi|^2+\Omega^2|\nabb\phi|^2+V|\phi|^2 \}, \\
F_v^T[\phi](u_1, u_2)&=& \int_{u_1}^{u_2} du \sin\th d\th d\phi \{|\Omega \nabb_3\phi|^2+\Omega^2|\nabb\phi|^2+V|\phi|^2 \}
\eeaa
where $V=\frac{4Q^2}{r^4}\left( 1-\frac{2M}{r}+\frac{Q^2}{r^2}\right)$. \footnote{Hence these fluxes are manifestly coercive in the exterior region.}
\item Non-degenerate energy null fluxes
\beaa
F_u[\phi](v_1, v_2)&=& \int_{v_1}^{v_2} dv \sin\th d\th d\phi \{|\Omega \nabb_4\phi|^2+|\nabb\phi|^2+V|\phi|^2 \}, \\
F_v[\phi](u_1, u_2)&=& \int_{u_1}^{u_2} du \sin\th d\th d\phi \Omega^2 \{|\Omega^{-1} \nabb_3\phi|^2+|\nabb\phi|^2+V|\phi|^2 \}
\eeaa
\item Weighted energy quantity in null-infinity
\bea\label{definition-energy-far}
\begin{split}
F_{u}^{\mathcal{I}}[\phi](v_1, v_2)&= \int_{v_1}^{v_2} dv \sin\th d\th d\phi \Big[r^2 |\Omega \nabb_4\phi|^2+|\nabb \phi|^2+V|\phi|^2\Big], \\
F_{v}^{\mathcal{I}}[\phi](u_1, u_2)&= \int_{u_1}^{u_2} du \sin\th d\th d\phi \Omega^2 \Big[ |\Omega^{-1} \nabb_3\phi|^2+r^2|\nabb \phi|^2+|\phi|^2\Big]
\end{split}
\eea
\item Total energy flux:
\beaa
\mathbb{F}[\phi]&=& \sup_{u} F_{u}^\mathcal{I}[\phi](v_0, \infty)+\sup_{v} F^\mathcal{I}_{v}[\phi](u_0, \infty)
\eeaa
\item Initial energy at $(u_0, v_0)$:
\beaa
\mathbb{F}_0[\phi]&=& F_{u_0}^\mathcal{I}[\phi](v_0, \infty)+ F^\mathcal{I}_{v_0}[\phi](u_0, \infty)
\eeaa
\end{itemize}
\item Weighted spacetime bulk energies:
\begin{itemize}
\item Degenerate Morawetz bulk
\beaa
\mathbb{I}_{deg}[\phi]&=& \int_{u_0}^{\infty} \int_{v_0}^{\infty}  du dv\sin \th d\th d\phi \Omega^2 \Big[\frac{1}{r^2}|R^*\phi|^2+\frac{1}{r^3}|\phi|^2\\
&&+\frac{(r^2-3Mr+2Q^2)^2}{r^5}\left(|\nabb\phi|^2+\frac{1}{r^2}|\Omega \nabb_4 \phi|^2+\frac{1}{r^2}|\Omega^{-1} \nabb_3 \phi|^2 \right)\Big]
\eeaa
\item Weighted bulk norm in the far-away region
\beaa
\mathbb{I}_{\mathcal{I}}[\phi]&=& \int_{u_0}^{\infty} \int_{v_0}^{\infty}  du dv\sin \th d\th d\phi \c i_{r \geq R} \Big[r|\Omega\nabb_4 \phi|^2+r^{-1-\de} |\Omega \nabb_3 \phi|^2+r^{1-\de}|\nabb\phi|^2 +r^{-1-\de}|\phi|^2\Big]
\eeaa
where $ i_{r \geq R} $ is the indicator function which equals $1$ for $r \geq R$ and is zero otherwise. 
\end{itemize}
\end{enumerate}

\subsubsection{Weighted energies for $\tilde{\b}_{\ell=1}$ and $\tilde{\bb}_{\ell=1}$}
We define the following weighted energies for  $\tilde{\b}_{\ell=1}$ and $\tilde{\bb}_{\ell=1}$. 
\begin{enumerate}
\item Total energy fluxes:
\bea\label{definition-combined-esnergy}
\begin{split}
\mathbb{F}[\phi, \tilde{\b}_{\ell=1}]&= \mathbb{F}[\phi]+\sup_u \int_{v_0}^{\infty}  dv \sin \th d\th d\phi \Omega^{2} r^{10-\de} \left( |(r\divv\tilde{\b})_{\ell=1}|^2+   |(r\curll\tilde{\b})_{\ell=1}|^2\right) \\
\mathbb{F}[\phi, \tilde{\bb}_{\ell=1}]&= \mathbb{F}[\phi]+\sup_v \int_{u_0}^{\infty} du \sin \th d\th d\phi \ r^{8} \left(|(r\divv\tilde{\bb})_{\ell=1}|^2+ |(r\curll\tilde{\bb})_{\ell=1}|^2\right)
\end{split}
\eea
\item Total energy fluxes with first derivative:
\bea\label{definition-combined-esnergy-first-der}
\begin{split}
\mathbb{F}[\phi, \mathcal{D}\tilde{\b}_{\ell=1}]&= \mathbb{F}[\phi, \tilde{\b}_{l=1}]+\sup_u \int_{v_0}^{\infty}  dv \sin \th d\th d\phi \Omega^{2} r^{10-\de} \left( |\mathcal{D}(\Omega (r\divv\tilde{\b})_{\ell=1})|^2+   |\mathcal{D}(\Omega (r\curll\tilde{\b})_{\ell=1})|^2\right) \\
\mathbb{F}[\phi, \overline{\mathcal{D}}\tilde{\bb}_{\ell=1}]&= \mathbb{F}[\phi, \tilde{\bb}_{l=1}]+\sup_v \int_{u_0}^{\infty} du \sin \th d\th d\phi \ r^{8} \left(|\overline{\mathcal{D}}(\Omega^{-1} (r\divv\tilde{\bb})_{\ell=1})|^2+ |\overline{\mathcal{D}}(\Omega^{-1}(r\curll\tilde{\bb})_{\ell=1}|^2)\right)
\end{split}
\eea
where the notation $\mathcal{D}$ is defined\footnote{Observe that since those are energies for scalar supported in $l=1$ spherical harmonics, the control on the angular derivative is not needed, but is given by the zero-th order term.} by $|\mathcal{D}\xi|^2=|r\Omega^{-1} \nabb_3 \xi|^2+|r\Omega\nabb_4 \xi|^2$ and $\overline{\mathcal{D}}$ is $|\overline{\mathcal{D}}\xi|^2=|\Omega^{-1} \nabb_3 \xi|^2+|r\Omega\nabb_4 \xi|^2$.

\item Initial energy at $(u_0, v_0)$:
\beaa
\mathbb{F}_0[\phi, \tilde{\b}_{\ell=1}]&=& \mathbb{F}_0[\phi]+ \int_{v_0}^{\infty}  dv \sin \th d\th d\phi \Omega^{2} r^{10-\de} \left( |(r\divv\tilde{\b})_{\ell=1}|^2+   |(r\curll\tilde{\b})_{\ell=1}|^2\right)(u_0, v) \\
\mathbb{F}_0[\phi, \tilde{\bb}_{\ell=1}]&=& \mathbb{F}[\phi]+ \int_{u_0}^{\infty} du \sin \th d\th d\phi \ r^{8} \left(|(r\divv\tilde{\bb})_{\ell=1}|^2+ |(r\curll\tilde{\bb})_{\ell=1}|^2\right)(u, v_0)
\eeaa
and the equivalent definition for $\mathbb{F}_0[\phi, \mathcal{D}\tilde{\b}_{\ell=1}]$ and $\mathbb{F}_0[\phi, \mathcal{D}\tilde{\bb}_{\ell=1}]$. 
\item Bulk norms:
\bea\label{definition-combined-bulks}
\begin{split}
\mathbb{I}[\phi, \tilde{\b}_{\ell=1}]&= \mathbb{I}_{deg}[\phi]+\mathbb{I}_{\mathcal{I}}[\phi] +\int_{u_0}^{\infty} \int_{v_0}^{\infty} du dv \sin \th d\th d\phi \  \Omega^{4} r^{9-\de} \left(|(r\divv\tilde{\b})_{\ell=1}|^2+ |(r\curll\tilde{\b})_{\ell=1}|^2\right)\\
\mathbb{I}[\phi, \tilde{\bb}_{\ell=1}]&= \mathbb{I}_{deg}[\phi]+\mathbb{I}_{\mathcal{I}}[\phi] +\int_{u_0}^{\infty} \int_{v_0}^{\infty} du dv \sin \th d\th d\phi \   r^{7-\de} \left(|(r\divv\tilde{\bb})_{\ell=1}|^2+ |(r\curll\tilde{\bb})_{\ell=1}|^2\right)
\end{split}
\eea
\item Bulk norms with first derivatives:
\bea\label{definition-combined-bulks}
\begin{split}
\mathbb{I}[\phi, \mathcal{D}\tilde{\b}_{\ell=1}]&= \mathbb{I}_{deg}[\phi]+\mathbb{I}_{\mathcal{I}}[\phi] \\
&+\int_{u_0}^{\infty} \int_{v_0}^{\infty} du dv \sin \th d\th d\phi \  \Omega^{4} r^{9-\de} \left(|\mathcal{D}(\Omega (r\divv\tilde{\b})_{\ell=1})|^2+ |\mathcal{D}(\Omega (r\curll\tilde{\b})_{\ell=1})|^2\right)\\
\mathbb{I}[\phi, \overline{\mathcal{D}}\tilde{\bb}_{\ell=1}]&= \mathbb{I}_{deg}[\phi]+\mathbb{I}_{\mathcal{I}}[\phi] \\
&+\int_{u_0}^{\infty} \int_{v_0}^{\infty} du dv \sin \th d\th d\phi \   r^{7-\de} \left(|\overline{\mathcal{D}}(\Omega^{-1} (r\divv\tilde{\bb})_{\ell=1})|^2+ |\overline{\mathcal{D}}(\Omega^{-1}(r\curll\tilde{\bb})_{\ell=1})|^2\right)
\end{split}
\eea
\end{enumerate}

\subsubsection{Weighted energies for $\DDs_2\tilde{\b}$ and $\DDs_2\tilde{\bb}$}
We define the following weighted energies for  $\DDs_2\tilde{\b}$ and $\DDs_2\tilde{\bb}$. 
\begin{enumerate}
\item Total energy fluxes:
\bea\label{definition-combined-esnergy}
\begin{split}
\mathbb{F}[\DDs_2\tilde{\b}]&=\sup_u \int_{v_0}^{\infty}  dv \sin \th d\th d\phi \Omega^{2} r^{10-\de}  |r\DDs_2\tilde{\b}|^2 \\
\mathbb{F}[\DDs_2\tilde{\bb}]&=\sup_v \int_{u_0}^{\infty} du \sin \th d\th d\phi \ r^{8} |r\DDs_2\tilde{\bb}|^2
\end{split}
\eea
\item Total energy fluxes with first derivative:
\bea\label{definition-combined-esnergy-first-der}
\begin{split}
\mathbb{F}[\mathscr{D}\DDs_2\tilde{\b}]&= \mathbb{F}[\DDs_2\tilde{\b}]+\sup_u \int_{v_0}^{\infty}  dv \sin \th d\th d\phi \Omega^{2} r^{10-\de}  |\mathscr{D}(\Omega r\DDs_2\tilde{\b})|^2 \\
\mathbb{F}[\overline{\mathscr{D}}\DDs_2\tilde{\bb}]&= \mathbb{F}[\DDs_2\tilde{\bb}]+\sup_v \int_{u_0}^{\infty} du \sin \th d\th d\phi \ r^{8} |\overline{\mathscr{D}}(\Omega^{-1} r\DDs_2\tilde{\bb})|^2
\end{split}
\eea
where the notation $\mathscr{D}$ is defined by $|\mathscr{D}\xi|^2=|r\Omega^{-1} \nabb_3 \xi|^2+|r\Omega\nabb_4 \xi|^2+|r \nabb \xi|^2$ and $\overline{\mathscr{D}}$ is $|\overline{\mathscr{D}}\xi|^2=|\Omega^{-1} \nabb_3 \xi|^2+|r\Omega\nabb_4 \xi|^2+|r \nabb \xi|^2$.

We define in the obvious way the initial energy $\mathbb{F}_0[\DDs_2\tilde{\b}]$, $\mathbb{F}_0[\DDs_2\tilde{\bb}]$, $\mathbb{F}_0[\mathscr{D}\DDs_2\tilde{\b}]$ and $\mathbb{F}_0[\mathscr{D}\DDs_2\tilde{\b}]$.
\item Bulk norms:
\bea\label{definition-combined-bulks}
\begin{split}
\mathbb{I}[\DDs_2\tilde{\b}]&=\int_{u_0}^{\infty} \int_{v_0}^{\infty} du dv \sin \th d\th d\phi \  \Omega^{4} r^{9-\de} |r\DDs_2\tilde{\b}|^2\\
\mathbb{I}[\DDs_2\tilde{\bb}]&=\int_{u_0}^{\infty} \int_{v_0}^{\infty} du dv \sin \th d\th d\phi \   r^{7-\de} |r\DDs_2\tilde{\bb}|^2
\end{split}
\eea
\item Bulk norms with first derivatives:
\bea\label{definition-combined-bulks}
\begin{split}
\mathbb{I}[\mathscr{D}\DDs_2\tilde{\b}]&=\int_{u_0}^{\infty} \int_{v_0}^{\infty} du dv \sin \th d\th d\phi \  \Omega^{4} r^{9-\de} |\mathscr{D}(\Omega r\DDs_2\tilde{\b})|^2\\
\mathbb{I}[\overline{\mathscr{D}}\DDs_2\tilde{\bb}]&= \int_{u_0}^{\infty} \int_{v_0}^{\infty} du dv \sin \th d\th d\phi \   r^{7-\de}|\overline{\mathscr{D}}(\Omega^{-1} r\DDs_2\tilde{\bb})|^2
\end{split}
\eea
\end{enumerate}

\subsubsection{Weighted energies for $\tilde{\b}$ and $\tilde{\bb}$}
We define the following weighted energies for  $\tilde{\b}$ and $\tilde{\bb}$. 
\begin{enumerate}
\item Total energy fluxes:
\bea\label{definition-combined-esnergy}
\begin{split}
\mathbb{F}[\tilde{\b}]&=\sup_u \int_{v_0}^{\infty}  dv \sin \th d\th d\phi \Omega^{2} r^{10-\de}  |\tilde{\b}|^2 
+\Omega^{2} r^{10-\de}  |\mathscr{D}(\Omega \tilde{\b})|^2\\
\mathbb{F}[\tilde{\bb}]&=\sup_v \int_{u_0}^{\infty} du \sin \th d\th d\phi \ r^{8} |\tilde{\bb}|^2+r^{8} |\overline{\mathscr{D}}(\Omega^{-1} \tilde{\bb})|^2
\end{split}
\eea
and the initial energies $\mathbb{F}_0[\tilde{\b}]$ and $\mathbb{F}_0[\tilde{\bb}]$.
\item Bulk norms:
\bea\label{definition-combined-bulks}
\begin{split}
\mathbb{I}[\tilde{\b}]&=\int_{u_0}^{\infty} \int_{v_0}^{\infty} du dv \sin \th d\th d\phi \  \Omega^{4} r^{9-\de} |\tilde{\b}|^2+ \Omega^{4} r^{9-\de} |\mathscr{D}(\Omega \tilde{\b})|^2\\
\mathbb{I}[\tilde{\bb}]&=\int_{u_0}^{\infty} \int_{v_0}^{\infty} du dv \sin \th d\th d\phi \   r^{7-\de} |r\DDs_2\tilde{\bb}|^2+r^{7-\de}|\overline{\mathscr{D}}(\Omega^{-1} \tilde{\bb})|^2
\end{split}
\eea
\end{enumerate}

\subsubsection{Weighted energies defined in \cite{Giorgi4}}
We recall here the initial energy defined in \cite{Giorgi4}. This energy represents the initial data in the Main Theorem. 

The initial energy for $\qf$, $\qf^\F$,  $\underline{\qf}$ and $\underline{\qf}^\F$ are given by
\beaa
\mathbb{F}_0[\qf]&=& \int_{v_0}^{\infty} dv  r^2 \Big[r^2 |\Omega \nabb_4\qf|^2+|\nabb \qf|^2+r^{-2}|\qf|^2\Big] (u_0, v)\\
&& +\int_{u_0}^{\infty} du \Omega^2 r^2\Big[ |\Omega^{-1} \nabb_3\qf|^2+r^2|\nabb \qf|^2+|\qf|^2\Big] (u, v_0), \\
\mathbb{F}_0[\qf^\F]&=& \int_{v_0}^{\infty} dv  r^2\Big[r^2 |\Omega \nabb_4\qf^\F|^2+|\nabb \qf^\F|^2+r^{-2}|\qf^\F|^2\Big](u_0, v)\\
&&+\int_{u_0}^{\infty} du \Omega^2 r^2\Big[ |\Omega^{-1} \nabb_3\qf^\F|^2+r^2|\nabb \qf^\F|^2+|\qf^\F|^2\Big](u, v_0)
\eeaa
and identical energies for $\underline{\qf}$ and $\underline{\qf}^\F$.

The initial energy for $\ff$, $\psi_1$ and $\a$ are given by  
\beaa
\mathbb{F}_0[\ff]&=& \int_{v_0}^{\infty} dv r^2 \Big[r^6 |\Omega \nabb_4\ff|^2+|\nabb \ff|^2+r^{-2}|\ff|^2\Big](u_0, v)\\
&&+\int_{u_0}^{\infty} du \Omega^2 r^2\Big[ r^6|\Omega^{-1} \nabb_3\ff|^2+r^6|\nabb \ff|^2+r^4|\ff|^2\Big](u, v_0)\\
\mathbb{F}_0[\psi_1]&=& \int_{v_0}^{\infty} dv  r^2\Big[r^4 |\Omega \nabb_4\psi_1|^2+|\nabb \psi_1|^2+r^{-2}|\psi_1|^2\Big](u_0, v)\\
&&+\int_{u_0}^{\infty} du \Omega^2 r^2\Big[ r^4|\Omega^{-1} \nabb_3\psi_1|^2+r^4|\nabb \psi_1|^2+r^2|\psi_1|^2\Big](u, v_0)\\
\mathbb{F}_0[\a]&=& \int_{v_0}^{\infty} dv  r^2\Big[r^6 |\Omega \nabb_4\a|^2+|\nabb \a|^2+r^{-2}|\a|^2\Big](u_0, v)\\
&&+\int_{u_0}^{\infty} du \Omega^2 r^2\Big[ r^6|\Omega^{-1} \nabb_3\a|^2+r^6|\nabb \a|^2+r^4|\a|^2\Big](u, v_0)
\eeaa
The initial energy for $\underline{\ff}$, $\underline{\psi}_1$ and $\aa$ are given by  
\beaa
\mathbb{F}_0[\underline{\ff}]&=& \int_{v_0}^{\infty} dv r^2  \Big[r^4 |\Omega \nabb_4\underline{\ff}|^2+|\nabb \underline{\ff}|^2+r^{-2}|\underline{\ff}|^2\Big](u_0, v)\\
&&+\int_{u_0}^{\infty} du \Omega^2 r^2\Big[ r^2|\Omega^{-1} \nabb_3\underline{\ff}|^2+r^4|\nabb \underline{\ff}|^2+r^2|\underline{\ff}|^2\Big](u, v_0)\\
\mathbb{F}_0[\underline{\psi}_1]&=& \int_{v_0}^{\infty} dv r^2 \Big[r^2 |\Omega \nabb_4\underline{\psi}_1|^2+|\nabb \underline{\psi}_1|^2+r^{-2}|\underline{\psi}_1|^2\Big](u_0, v)\\
&&+\int_{u_0}^{\infty} du \Omega^2 r^2\Big[ |\Omega^{-1} \nabb_3\underline{\psi}_1|^2+r^2|\nabb \underline{\psi}_1|^2+|\underline{\psi}_1|^2\Big](u, v_0)\\
\mathbb{F}_0[\aa]&=& \int_{v_0}^{\infty} dv r^2 \Big[r^2 |\Omega \nabb_4\aa|^2+|\nabb \aa|^2+r^{-2}|\aa|^2\Big](u_0, v)\\
&&+\int_{u_0}^{\infty} du \Omega^2 r^2\Big[ |\Omega^{-1} \nabb_3\aa|^2+r^2|\nabb \aa|^2+|\aa|^2\Big](u, v_0)
\eeaa
We define
\beaa
\mathbb{F}_0[\qf, \qf^\F, \ff, \psi_1, \a]&=&\mathbb{F}_0[\qf]+\mathbb{F}^{1, T, \nabb}_0[\qf^\F]+\mathbb{F}^{1, T, \nabb}_0[\ff]+\mathbb{F}_0[\psi_1]+\mathbb{F}_0[\a] \\
\mathbb{F}_0[\underline{\qf}, \underline{\qf}^\F, \underline{\ff}, \underline{\psi}_1, \aa]&=&\mathbb{F}_0[\underline{\qf}]+\mathbb{F}^{1, T, \nabb}_0[\underline{\qf}^\F]+\mathbb{F}^{1, T, \nabb}_0[\underline{\ff}]+\mathbb{F}_0[\underline{\psi}_1]+\mathbb{F}_0[\aa]
\eeaa
The total initial energies are defined as 
\beaa
\mathbb{F}_0[\qf, \qf^\F, \ff, \psi_1, \a, \phi, \tilde{\b}]&=& \mathbb{F}_0[\qf, \qf^\F, \ff, \psi_1, \a]+\mathbb{F}_0[\phi, \tilde{\b}_{l=1}] +\mathbb{F}_0[\DDs_2\tilde{\b}]+\mathbb{F}_0[\mathscr{D}\DDs_2\tilde{\b}]\\
\mathbb{F}_0[\underline{\qf}, \underline{\qf}^\F, \underline{\ff}, \underline{\psi}_1, \aa, \underline{\phi}, \tilde{\bb}]&=& \mathbb{F}_0[\underline{\qf}, \underline{\qf}^\F, \underline{\ff}, \underline{\psi}_1, \aa]+\mathbb{F}_0[\phi, \tilde{\bb}_{l=1}] +\mathbb{F}_0[\DDs_2\tilde{\bb}]+\mathbb{F}_0[\mathscr{D}\DDs_2\tilde{\bb}]
\eeaa

\subsubsection{Higher order energies}
   
   To estimate higher order energies we also introduce the following notation, motivated by the fact that the Fackerell-Ipser equation commutes with $T$ and the angular momentum operators $\Omega_i$.  We define 
   \begin{enumerate}
\item Higher derivative energies for $n\geq 1$:
   \beaa
   \mathbb{F}^{n, T, \nabb}[\phi]&=& \sum_{i+j \leq n}\sup_{u} F_{u}^\mathcal{I}[T^i(r\nabb_A)^j\phi](v_0, \infty)+\sum_{i+j \leq n}\sup_{v} F^\mathcal{I}_{v}[T^i(r\nabb_A)^j\phi](u_0, \infty) \\
    \mathbb{F}_0^{n, T, \nabb}[\phi]&=& \sum_{i+j \leq n}F^\mathcal{I}_{u}[T^i(r\nabb_A)^j\phi](v_0, \infty)+\sum_{i+j \leq n}F^\mathcal{I}_{v}[T^i(r\nabb_A)^j\phi](u_0, \infty) 
   \eeaa
   and similarly for $\mathbb{F}^{n, T, \nabb}[\phi, \mathcal{D}\tilde{\b}_{\ell=1}]$,  $\mathbb{F}^{n, T, \nabb}[\phi, \overline{\mathcal{D}}\tilde{\bb}_{\ell=1}]$ and $\mathbb{F}^{n, T, \nabb}[\tilde{\b}]$. We also similarly define $\mathbb{F}^{n, T, \nabb}[\tilde{\bb}]$,  $\mathbb{F}^{n, T, \nabb}_0[\qf, \qf^\F, \ff, \psi_1, \a, \phi, \tilde{\b}]$.
     \item Higher derivative spacetime bulks:
   \bea
   \mathbb{I}_{deg}^{n, T, \nabb}[\phi]&=&\sum_{i+j \leq n} \mathbb{I}_{deg}[T^i(r\nabb_A)^j\phi], \qquad \mathbb{I}^{n, T, \nabb}_{\mathcal{I}}[\phi]= \sum_{i+j \leq n} \mathbb{I}_{\mathcal{I}}[T^i(r\nabb_A)^j\phi] \label{definition-higher-order}
   \eea
   and similarly for $\mathbb{I}^{n, T, \nabb}[\phi, \mathcal{D}\tilde{\b}_{\ell=1}]$, $\mathbb{I}^{n, T, \nabb}[\phi, \overline{\mathcal{D}}\tilde{\bb}_{\ell=1}]$ and $\mathbb{I}^{n, T, \nabb}[\tilde{\b}]$, $\mathbb{I}^{n, T, \nabb}[\tilde{\bb}]$
  \end{enumerate}

\subsection{Precise statement of the Main Theorem}\label{section-main-theorem}

We are now ready to state the boundedness and decay theorem for solutions $\tilde{\b}$ and $\tilde{\bb}$ of the generalized Teukolsky equation of spin $\pm 1$. 

In the estimates below we denote $\mathcal{A} \les \mathcal{B}$ if there exists an universal constant $C$ such that $\mathcal{A} \leq C \mathcal{B}$.

\begin{namedtheorem}[Main]\label{main-theorem} Let $\tilde{\b}$ and $\tilde{\bb}$ be solutions to the generalized Teukolsky equation of spin $\pm1$ respectively in Reissner-Nordstr{\"o}m spacetime for $|Q|\ll M$. Then the following estimates hold:
\begin{enumerate}
\item weighted boundedness and integrated decay estimate for $\tilde{\b}$ and $\tilde{\bb}$:
\bea\label{estimate-l=1-tilde-b-total}
\begin{split}
\mathbb{F}[\tilde{\b}]+\mathbb{I}[\tilde{\b}]&\les \mathbb{F}_0[\qf, \qf^\F, \ff, \psi_1, \a, \phi, \tilde{\b}]\\
\mathbb{F}[\tilde{\bb}]+\mathbb{I}[\tilde{\bb}]&\les \mathbb{F}_0[\underline{\qf}, \underline{\qf}^\F, \underline{\ff}, \underline{\psi}_1, \aa, \underline{\phi}, \tilde{\bb}]
\end{split}
\eea
\item higher order energy and integrated decay estimates for any integer $n\geq 1$:
\bea\label{estimate-higher-order-l=1-tilde-b-total}
\begin{split}
\mathbb{F}^{n, T, \nabb}[\tilde{\b}]+\mathbb{I}^{n, T, \nabb}[\tilde{\b}]&\les \mathbb{F}^{n, T, \nabb}_0[\qf, \qf^\F, \ff, \psi_1, \a, \phi, \tilde{\b}]\\
\mathbb{F}^{n, T, \nabb}[\tilde{\bb}]+\mathbb{I}^{n, T, \nabb}[\tilde{\bb}]&\les \mathbb{F}^{n, T, \nabb}_0[\underline{\qf}, \underline{\qf}^\F, \underline{\ff}, \underline{\psi}_1, \aa, \underline{\phi}, \tilde{\bb}]
\end{split}
\eea
\item pointwise decay estimates:
\bea\label{pointwise-decay}
|r^{\frac{9+\de}{2}} \tilde{\b}| \leq C v^{-\frac{2-\de}{2}}, \qquad |r^{4} \tilde{\bb}| \leq C v^{-\frac{2-\de}{2}} 
\eea
where $C$ depends on appropriate higher Sobolev norms. 
\end{enumerate}

\end{namedtheorem}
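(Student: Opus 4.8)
The plan is to split the analysis according to the orthogonal decomposition $\tilde{\b}=\tilde{\b}_{\ell=1}+\tilde{\b}_{\ell\geq 2}$ of Definition \ref{decomposition-xi}, to estimate each projection by an independent mechanism, and then to recombine the two by means of the elliptic estimate of Corollary \ref{main-elliptic-estimate}. I would carry out the exposition for $\tilde{\b}$; the quantity $\tilde{\bb}$ is treated by the analogous argument applied to the derived quantity $\underline{\pf}$ of \eqref{quantities-2}, which solves the same generalized Fackerell-Ipser equation, with the formal interchange $\nabb_3\leftrightarrow\nabb_4$ relating the two computations. The distinct $r$-weights appearing in the two families of energies (for instance $r^{10-\de}$ for $\tilde{\b}$ versus $r^{8}$ for $\tilde{\bb}$) simply reflect the faster decay toward null infinity of the outgoing component against the ingoing one, so that both cases still require the full machinery near the horizon and near null infinity.

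For the $\ell=1$ projection I would work with the derived quantity $\pf=\underline{P}(r^4\kab\,\tilde{\b})$ of \eqref{quantities-1}, which by Proposition \ref{prop:rwt1-1} and Theorem \ref{prop:relfull} solves the generalized Fackerell-Ipser equation \eqref{finalequationl1} with right-hand side $\mathcal{J}$ supported in $\ell\geq 2$. Projecting to $\ell=1$ and commuting via Lemma \ref{commutation-projection-l1}, the two scalars $\phi=(r^2\divv\pf)_{\ell=1}$ and $\phi=(r^2\curll\pf)_{\ell=1}$ satisfy a genuinely homogeneous scalar wave equation $\Box_\g\phi+V\phi=0$ with a charge-dependent potential $V$, since $\mathcal{J}_{\ell=1}=0$ kills the source. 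On this equation I would run the standard physical-space toolbox of \cite{lectures} and \cite{rp}: a redshift estimate controlling the degenerate horizon flux, an integrated local energy decay (Morawetz) estimate degenerating only at the photon sphere $r=r_P$, and the Dafermos--Rodnianski $r^p$-weighted hierarchy near null infinity. Together these yield $\mathbb{F}[\phi]+\mathbb{I}_{deg}[\phi]+\mathbb{I}_{\mathcal{I}}[\phi]\lesssim\mathbb{F}_0[\phi]$ and, through the hierarchy, the polynomial decay. To pass from $\pf_{\ell=1}$ back to $\tilde{\b}_{\ell=1}$ I would integrate the transport relation $\nabb_3(r^5\kab\,\tilde{\b})=\kab\,\pf$, converting the bounds for $\phi$ into the flux and bulk bounds carried by $\mathbb{F}[\phi,\tilde{\b}_{\ell=1}]$ and $\mathbb{I}[\phi,\tilde{\b}_{\ell=1}]$.

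For the $\ell\geq 2$ projection I would not analyze $\tilde{\b}$ directly but invoke Lemma \ref{relations-derived}, which expresses $r^3\kab\,\DDs_2\tilde{\b}$ as a linear combination of $\psi_1$ and $r^3\kab\,\ff$ with bounded coefficients. The boundedness, integrated decay and $r^p$-hierarchy for $\psi_1$ and $\ff$ are exactly the content of the Main Theorem of \cite{Giorgi4}; substituting them yields the weighted control of $\DDs_2\tilde{\b}$ encoded in $\mathbb{F}[\DDs_2\tilde{\b}]$, $\mathbb{I}[\DDs_2\tilde{\b}]$ and their first-derivative analogues. Since by Lemma \ref{lemma-kernel-DDs2} the component $\tilde{\b}_{\ell\geq 2}$ lies in the complement of the kernel of $\DDs_2$, Corollary \ref{main-elliptic-estimate} then converts the $\DDs_2\tilde{\b}$ bounds, together with the $\ell=1$ bounds just obtained, into the full weighted control of $\tilde{\b}$ asserted in \eqref{estimate-l=1-tilde-b-total}.

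Finally, the higher-order estimates \eqref{estimate-higher-order-l=1-tilde-b-total} follow by commuting the Fackerell-Ipser equation with the Killing field $T$ and the angular momentum operators $\Omega_i$, which preserve both its form and its $\ell=1$ projection, and repeating the first-order argument; the pointwise bounds \eqref{pointwise-decay} then come from Sobolev embedding on the spheres $S_{u,v}$ applied to the commuted fluxes, fed by the $r^p$-weighted decay. The main obstacle I anticipate is twofold. First, one must ensure that the Morawetz current for the potential $\tfrac14\ka\kab-5\rhoF^2$ stays coercive away from the photon sphere and that the charge-dependent corrections to the bulk and to the redshift can be absorbed; this is precisely where the hypothesis $|Q|\ll M$ enters, the problem being treated as a perturbation of the Schwarzschild Fackerell-Ipser analysis. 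Second, the inversion of the Chandrasekhar transformation by transport requires careful bookkeeping of $r$-weights, so that the hierarchy proved for $\pf$ transfers to the sharp weights carried by $\mathbb{F}[\tilde{\b}]$ and $\mathbb{I}[\tilde{\b}]$ without losing powers of $r$ at null infinity or degenerating at the horizon.
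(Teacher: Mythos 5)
Your proposal follows essentially the same route as the paper: the $\ell=1$ mode is handled through the generalized Fackerell--Ipser equation for $\pf$ (scalar wave equations for $(r\divv\pf)_{\ell=1}$, $(r\curll\pf)_{\ell=1}$, then energy conservation, Morawetz, redshift and $r^p$ estimates, then transport back to $\tilde{\b}_{\ell=1}$ via $\kab\,\pf=\nabb_3(r^5\kab\,\tilde{\b})$), the $\ell\geq 2$ modes through Lemma \ref{relations-derived} combined with the Main Theorem of \cite{Giorgi4}, and the two pieces are recombined by the elliptic estimate of Corollary \ref{main-elliptic-estimate}, with higher-order bounds by commutation with $T$ and the angular Killing fields and pointwise decay from the $r^p$ hierarchy plus Sobolev embedding. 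This matches the paper's proof in structure and in all key lemmas, including the asymmetry in $r$-weights between $\tilde{\b}$ and $\tilde{\bb}$ and the role of $|Q|\ll M$ in the coercivity of the Morawetz current.
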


\subsection{The logic of the proof}

The remainder of the paper concerns the proof of the Main Theorem. We outline here the main steps of the proof.

\begin{enumerate}
\item Given a solution $\tilde{\b}$ and $\tilde{\bb}$ of the generalized Teukolsky equation of spin $\pm1$ we can associate a solution $\pf$ to the generalized Fackerell-Ipser equation in $\ell=1$ \eqref{finalequationl1} through Proposition \ref{prop:rwt1-1}. We commute equation \eqref{finalequationl1} with $r \DDd_1$ and project it into the  $\ell=1$ mode.  By Corollary \ref{wave-scalar-pf} and \eqref{commute-wave}, we have that the scalar quantities $(r\divv \pf)_{\ell=1}$ and $(r\curll \pf)_{\ell=1}$ verify the wave equation
\bea\label{equations-divv-curll}
\begin{split}
\Box_\g\left(r\divv\pf\right)_{\ell=1}-\left(\rho+4\rhoF^2\right)\left(r\divv\pf\right)_{\ell=1}&=0 \\
\Box_\g\left(r\curll\pf\right)_{\ell=1}-\left(\rho+4\rhoF^2\right)\left(r\curll\pf\right)_{\ell=1}&=0
\end{split}
\eea
and similarly for $(r\divv \underline{\pf})_{\ell=1}$ and $(r\curll \underline{\pf})_{\ell=1}$. Equations \eqref{equations-divv-curll} are scalar wave equations for which integrated local energy estimates can be obtained in a standard way. We obtain energy conservation, degenerate Morawetz estimates, redshift estimates and $r^p$ hierarchy estimates for the scalar quantity $(r\DDd \ \pf)_{\ell=1}$. This is done in Section \ref{mode-l=1-p}.
\item Using the relations \eqref{quantities-1}, and applying enhanced transport estimates, from the control obtained for $(r\DDd \ \pf)_{\ell=1}$ we will get estimates for the scalar quantities $(r\DDd \ \tilde{\b})_{\ell=1}$ and $(r\DDd \ \tilde{\bb})_{\ell=1}$, which give the $\ell=1$ spherical mode of $\tilde{\b}$ and $\tilde{\bb}$. This is done in Section \ref{mode-l=1-b}.
\item To control the projection to the $\ell\geq 2$ spherical modes of $\tilde{\b}$ and $\tilde{\bb}$ we will make use of the relations between $\tilde{\b}$ and $\a$ and $\ff$, and similarly for the negative spin quantities, shown in Lemma \ref{relations-derived}. We will then make use of the estimates obtained in \cite{Giorgi4} for $\a$ and $\ff$.  This is done in Section \ref{lge2b}.
\end{enumerate}

The two points above then imply the Main Theorem, i.e. the integrated decay statements about the solutions $\tilde{\b}$ and $\tilde{\bb}$ of the generalized Teukolsky equation of spin $\pm1$,  through standard elliptic estimates. This is finally done in Section \ref{proof-main-theorem}.

\section{Estimates for the $\ell=1$ spherical mode of $\pf$}\label{mode-l=1-p}

The present section contains the proof of the estimates for the projection to the $\ell=1$ spherical mode of the solution to the generalized Fackerell-Ipser equation $\pf$. This proof follows closely previous works for the scalar wave equation (see \cite{lectures}, \cite{redshift}, \cite{rp}), for the Regge-Wheeler equation (see \cite{DHR}) and for the Fackerell-Ipser equation (see \cite{Federico}).

We summarize the main result in the following Proposition.

\begin{proposition}\label{estimate-l=1-pf-proposition} Let $\pf$ and $\underline{\pf}$ be solutions to the generalized Fackerell-Ipser equation in $l=1$ \eqref{finalequationl1} in Reissner-Nordstr{\"o}m spacetime for $|Q|\ll M$. Let $\phi=(r^2\divv\pf)_{\ell=1}$ or  $\phi=(r^2\curll\pf)_{\ell=1}$, or $\phi=(r^2\divv\underline{\pf})_{\ell=1}$ or  $\phi=(r^2\curll\underline{\pf})_{\ell=1}$. Then the following estimates hold for all $\de \leq p \leq 2$:
\begin{enumerate} 
\item energy boundedness, degenerate integrated local energy decay and $r^p$ hierarchy estimates for $\phi$:
\bea\label{total-estimate-l=1}
\mathbb{F}[\phi]+ \mathbb{I}_{deg}[\phi]+\mathbb{I}_{\mathcal{I}}[\phi] \les \mathbb{F}_0[\phi]
\eea
\item higher order energy and integrated decay estimates for any integer $n\geq 1$:
\bea\label{higher-order-l=1}
\mathbb{F}^{n, T, \nabb}[\phi]+ \mathbb{I}_{deg}^{n, T, \nabb}[\phi]+\mathbb{I}^{n, T, \nabb}_{p, R}[\phi] \les \mathbb{F}_0^{n, T, \nabb}[\phi]
\eea
\end{enumerate}
\end{proposition}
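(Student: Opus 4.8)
The plan is to reduce the statement to the analysis of a single scalar wave equation and then run the now-standard vector-field multiplier hierarchy. Since the source $\mathcal{J}$ of the generalized Fackerell-Ipser equation \eqref{finalequationl1} is supported in $\ell\geq 2$, its $\ell=1$ projection is source-free, and by \eqref{equations-divv-curll} each of the four scalars $\phi$ satisfies a \emph{homogeneous} equation of the form
\begin{align}
\Box_\g \phi + V\phi = 0, \qquad V = -\left(\rho+4\rhoF^2\right) = \frac{2M}{r^3}-\frac{6Q^2}{r^4},
\end{align}
up to a harmless conjugation by $r$ (each $\phi$ equals $r$ times the corresponding component of $(r\DDd_1\pf)_{\ell=1}$, which satisfies \eqref{equations-divv-curll}), a conjugation that only modifies $V$ by further $O(r^{-3})$ terms. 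For $|Q|\ll M$ the potential $V$ is positive on the exterior $r\geq r_+$ and short-range, so the whole problem is a small-charge perturbation of the scalar-wave/Regge-Wheeler theory on Schwarzschild. Moreover, because $\phi$ is supported in $\ell=1$ the angular Laplacian acts as multiplication by $-2/r^2$, so the equation is effectively $1+1$-dimensional with a radial potential; this means the Morawetz estimate below loses no angular derivatives and no frequency localization is needed. I would follow \cite{lectures}, \cite{redshift}, \cite{rp} for the scalar wave, and \cite{DHR}, \cite{Federico} for the Regge-Wheeler and Fackerell-Ipser variants.

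First I would prove $T$-energy boundedness. Using the Killing field $T=\frac 1 2[\Omega\nabb_3+\Omega\nabb_4]$ as a multiplier and integrating the energy identity over a characteristic slab, the spacetime error vanishes by stationarity; since $T$ is timelike in the exterior and $V>0$, the fluxes $F^T_u,F^T_v$ are coercive (this is precisely the purpose of the weight $V=\frac{4Q^2}{r^4}\Omega^2$ in their definitions). As $T$ degenerates on $\mathcal{H}^+$, I would add the Dafermos--Rodnianski redshift vector field, a strictly timelike multiplier localized near the horizon and adapted to the positive surface gravity of Reissner-Nordstr\"om, to upgrade the degenerate $T$-flux to the nondegenerate fluxes $F_u,F_v$ entering $\mathbb{F}[\phi]$; the favorable sign of $V$ poses no obstruction here.

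The core of the argument is the integrated local energy (Morawetz) estimate yielding $\mathbb{I}_{deg}[\phi]$. I would employ a multiplier of the form $f(r)\,R^*$ together with a zeroth-order correction $w(r)\,\phi$, choosing $f$ and $w$ so that the resulting spacetime current is nonnegative and coercive in $|R^*\phi|^2$, $|\nabb\phi|^2$ and $|\phi|^2$. \textbf{The main obstacle is trapping} at the photon sphere $r=r_P$ of \eqref{def-rP}: the derivative part of the current must necessarily vanish there, which is exactly the degeneracy factor $(r^2-3Mr+2Q^2)^2$ built into $\mathbb{I}_{deg}$. The delicate point is to arrange $f,w$ so that this degeneracy sits precisely at $r_P$ while the zeroth-order contribution, which involves $V$ and the derivatives of $f,w$, keeps a favorable sign; for $|Q|\ll M$ the effective potential and the location $r_P$ are small perturbations of their Schwarzschild values, so the constructions of \cite{lectures} and \cite{Federico} carry over, and since $\phi$ is a fixed $\ell=1$ mode there is no loss of derivatives at $r_P$ beyond a single application of $T$.

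Finally, to capture the behaviour towards null infinity I would run the $r^p$-method of \cite{rp} in the region $r\geq R$: multiplying the equation for $r\phi$ by $r^p\,\Omega\nabb_4(r\phi)$ and integrating produces, for each $\delta\leq p\leq 2$, the weighted flux $F^{\mathcal{I}}$ together with the far-away bulk $\mathbb{I}_{\mathcal{I}}[\phi]$, the potential $V=O(r^{-3})$ being integrable against all the relevant $r$-weights. Combining the $T$-boundedness, the redshift estimate, the degenerate Morawetz estimate and the $r^p$-hierarchy yields \eqref{total-estimate-l=1}. For the higher-order estimates \eqref{higher-order-l=1}, I would commute with $T$ and the angular momentum operators $\Omega_i$; as noted in the statement, the Fackerell-Ipser equation commutes with both and $V$ is a function of $r$ alone, so $T^i(r\nabb_A)^j\phi$ solve the same equation and the previous steps can be iterated, the only error terms being lower order and absorbable by the Morawetz bulk.
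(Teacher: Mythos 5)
Your overall architecture --- reduce to the $\ell=1$-projected scalars, $T$-energy identity, Morawetz estimate degenerating at the photon sphere, redshift, $r^p$-hierarchy for $\de\leq p\leq 2$, then commutation with $T$ and $\Omega_i$ --- is exactly the paper's proof (Sections 8.1--8.6). But there is a genuine error in your first step, and it propagates. You claim the conjugation by $r$ ``only modifies $V$ by further $O(r^{-3})$ terms,'' so that the effective potential for $\phi$ is $\frac{2M}{r^3}-\frac{6Q^2}{r^4}+O(r^{-3})$ and the problem is a small-charge perturbation of scalar-wave/Regge-Wheeler theory. This is not what happens. By Corollary \ref{connection-Federico}, $r\Omega^2\Box_\g \tilde\phi = -\Omega\nabb_3(\Omega\nabb_4(r\tilde\phi)) + \Omega^2\lapp(r\tilde\phi) + \Omega^2\rho\,(r\tilde\phi)$, so the $\rho$-term generated by the conjugation cancels the $\rho$ in $\rho+4\rhoF^2$ \emph{exactly}: the scalar $\phi=r\tilde\phi$ satisfies \eqref{fackerell} with potential $V = 4\rhoF^2\Omega^2 = \frac{4Q^2}{r^4}\Omega^2$, which is non-negative, of size $O(Q^2)$, and vanishes in Schwarzschild. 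The mass term $\frac{2M}{r^3}$ is simply not there.

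This matters in three places. First, the fluxes and bulks in the statement are defined with this specific $V$, and the energy identity you derive must be for the equation $\phi$ actually satisfies, otherwise the boundary terms do not match the quantities to be bounded. Second, and most seriously, the Morawetz positivity check: after the $\ell=1$ Poincar\'e inequality the zeroth-order bulk coefficient is $-4f\left(V+\frac{2\Omega^2}{r^2}\right)'-2f'''$, and the reason the paper's choice of $f$ works for $|Q|\ll M$ is that $V+\frac{2\Omega^2}{r^2}$ is an $O(Q^2)$ perturbation of $\frac{2\Omega^2}{r^2}$, whose critical point sits at the photon sphere --- exactly where $f$ changes sign, so $-4f\left(\cdot\right)'\geq 0$ on both sides. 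With your claimed potential, the combined function would be $\frac{2}{r^2}-\frac{2M}{r^3}$ (at $Q=0$), whose critical point is at $r=\frac{3M}{2}$, \emph{inside} the horizon; then $-4f\left(V+\frac{2\Omega^2}{r^2}\right)'$ is negative and of size $O(M/r^4)$ throughout $r_+<r<r_P$. That is not a small perturbation that ``carries over'' from \cite{lectures} or \cite{Federico}; it would have to be beaten by $-2f'''$, requiring a genuine re-verification and possibly a different $f$. Third, the $r^p$-step: the paper's choice $k=5$, which makes $-\pr_v\left(r^pV/\Omega^{2k}\right)\geq 2r^{p-6}$, is a computation using $V=4Q^2r^{-4}\Omega^2$ explicitly. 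Ironically, the correct cancellation makes the problem easier than you suppose (a small-charge perturbation of the free wave in the $\ell=1$ sector, not of a Regge-Wheeler-type problem); but as written your Morawetz step rests on a false statement about the potential and would fail without redoing the positivity analysis.
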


We begin in Section \ref{eq-double-null} to write the projection to the $\ell=1$ mode of \eqref{finalequationl1} in double null coordinates. We derive the energy conservation for the obtained scalar wave equation in Section \ref{energy-conservation}. We then show a version of integrated decay which degenerates at the photon sphere, at the horizon and at null infinity in Section \ref{Morawetz}. We remove the degeneration at the horizon making use of the redshift vectorfield in Section \ref{redshift-est}, and we refine the degeneration at null infinity through the $r^p$ hierarchy estimates in Section \ref{rp-est}. Finally, we derive higher order energy estimates in Section \ref{higher-order-est}.

\subsection{The projection of the Fackerell-Ipser equation to the $\ell=1$ mode}\label{eq-double-null}

Let $\pf$ and $\underline{\pf}$ be a solution of the generalized Fackerell-Ipser equation in $\ell=1$ \eqref{finalequationl1}.

By \eqref{equations-divv-curll}, defining $\tilde{\phi}$ as $\tilde{\phi}=(r\divv\pf)_{\ell=1}$ or  $\tilde{\phi}=(r\curll\pf)_{\ell=1}$, or $\tilde{\phi}=(r\divv\underline{\pf})_{\ell=1}$,  $\tilde{\phi}=(r\curll\underline{\pf})_{\ell=1}$ then $\tilde{\phi}$ verifies 
\bea\label{box-tilde-phi}
\Box_\g\tilde{\phi}-\left(\rho+4\rhoF^2\right)\tilde{\phi}&=0
\eea
By Corollary \ref{connection-Federico}, equation \eqref{box-tilde-phi} can be written in double null coordinates as 
\beaa
 \Omega\nabb_3(\Omega \nabb_4(r \tilde{\phi}))-\Omega^2\lapp(r \tilde{\phi})+4\rhoF^2 \Omega^2 r \tilde{\phi}=0
\eeaa
Define $\phi=r\tilde{\phi}$ the equation becomes 
\bea\label{fackerell}
\Omega\nabb_3(\Omega \nabb_4(\phi))-\left( 1-\frac{2M}{r}+\frac{Q^2}{r^2}\right)\lapp \phi+V\phi=0
\eea
where $V=4\rhoF^2 \Omega^2=\frac{4Q^2}{r^4}\left( 1-\frac{2M}{r}+\frac{Q^2}{r^2}\right)$. 

\begin{remark} Equation \eqref{fackerell} reduces to the scalar form of the Fackerell-Ipser equation in Schwarzschild $(1-\mu)^{-1}L \underline{L} u -\lapp u =0$ when $Q=0$ (see equation 2.14 in \cite{Federico}). 
\end{remark}

\subsection{Energy conservation}\label{energy-conservation}
Let $\phi$ be a smooth solution of \eqref{fackerell} on $\{ u \geq u_0 \} \cap \{ v \geq v_0 \}$. By standard computations, the equation \eqref{fackerell} implies 
\beaa
&&[\Omega \nabb_3 +\Omega \nabb_4] \int \sin \th d\th d\phi \{ |\Omega \nabb_4 \phi|^2 +|\Omega \nabb_3 \phi|^2 +2 \frac{1-\frac{2M}{r}+\frac{Q^2}{r^2}}{r^2}|r\nabb\phi|^2+2V |\phi|^2 \}+\\
&&+[\Omega \nabb_3 -\Omega \nabb_4] \int \sin \th d\th d\phi \{ |\Omega \nabb_4 \phi|^2 -|\Omega \nabb_3 \phi|^2\}=0
\eeaa
Integrating the above with respect to $du dv$ yields a conservation law: for any $u \geq u_0$ and $v \geq v_0$ the $\phi$ of Proposition \ref{estimate-l=1-pf-proposition} satisfies
\bea\label{degenerate-energy}
F_u^T[\psi](v_0, v)+F_v^T[\psi](u_0, u)=F_{u_0}^T[\psi](v_0, v)+F_{v_0}^T[\psi](u_0, u)
\eea

\subsection{Morawetz estimates}\label{Morawetz}
Let $f$ be a function on $\mathcal{M}$ of $r^*=v-u$ only, and denote $f':=R^*(f)$. Following \cite{DHR}, equation \eqref{fackerell} implies the identity
\bea\label{integration-identity-morawetz}
\begin{split}
0&=_{S} [\Omega \nabb_3 +\Omega \nabb_4] \left( f(r)\{ |\Omega \nabb_4 \phi|^2 -|\Omega \nabb_3 \phi|^2 +f'(r) \phi [\Omega \nabb_3 +\Omega \nabb_4]\phi \}\right)\\
&+ [\Omega \nabb_3 -\Omega \nabb_4] \Big( f(r)\{ |\Omega \nabb_4 \phi|^2 +|\Omega \nabb_3 \phi|^2-\frac{2}{r^2} \frac{1-\frac{2M}{r}+\frac{Q^2}{r^2}}{r^2}|r\nabb\phi|^2-2V |\phi|^2\}\\
 &-f'(r) \phi [\Omega \nabb_3 -\Omega \nabb_4]\phi -f''(r)|\psi|^2\}\Big)\\
 &+2f'(r) |\Omega \nabb_4\psi -\Omega \nabb_3\psi|^2+|r\nabb\phi|^2 \left(-4f\left(\frac{\Omega^2}{r^2} \right)' \right)+|\phi|^2\left(-4fV'-2f''' \right)
 \end{split}
\eea
where the symbol $=_{S}$ means that the above identity holds adter integration over $\int \sin\th d\th d\phi$.

After integrating the above identity in the spacetime with respect to the measure $\int du dv \sin \th d\th d\phi$, the last line gives a spacetime energy, while the other lines give boundary terms. 

We want to choose $f(r)$ such that the last line in \eqref{integration-identity-morawetz} gives a coercive spacetime energy. The choice below is a generalization of the choice appeared in \cite{DHR} and \cite{Federico} to the case of the subextremal Reissner-Nordstr{\"o}m spacetime for which $|Q| \ll M$.

Recall that $\phi$ is a scalar function supported in $\ell=1$ spherical harmonics. Therefore we have the Poincar{\'e} inequality 
\beaa
\int_{S^2} |\nabb\phi|^2 \geq \frac{2}{r^2} \int_{S^2} |\phi|^2
\eeaa
Therefore, the last line of \eqref{integration-identity-morawetz} integrated on the sphere can be bounded from below by 
\bea\label{bulk-est}
\begin{split}
&&\int \sin \th d\th d\phi \{f'(r) |R^*\psi|^2+|r\nabb\phi|^2 \left(-4f\left(\frac{\Omega^2}{r^2} \right)' \right)+|\phi|^2\left(-4fV'-2f''' \right)\}\\
&\geq& \int \sin \th d\th d\phi \{f'(r) |R^*\psi|^2+\big( -4f\left(V+\frac{2\Omega^2}{r^2} \right)'-2f'''\big)|\phi|^2\}
\end{split}
\eea
We want to choose $f(r)$ such that 
\begin{enumerate}
\item $\frac{f'}{\Omega^2} \geq \frac{c}{r^2}$, 
\item  $-\frac{4f}{\Omega^2}\left(V+\frac{2\Omega^2}{r^2} \right)'-\frac{2f'''}{\Omega^2} \geq \frac{c}{r^3}$
\end{enumerate}
for some positive constants $c$. Indeed, the above conditions will give a coercive estimate for the bulk term in the Morawetz estimates.

We define 
\beaa
f&=& \left(1-\frac{3M}{r}+\frac{2Q^2}{r^2} \right)\left(1+\frac{M}{r}+\frac{Q^2}{2r^2} \right)
\eeaa
We denote with the subscript $r$ the derivative with respect to $r$. We compute 
\beaa
f_r&=& \frac{1}{2r^5} \left(4M r^3+(12M^2-10Q^2) r^2-3MQ^2 r-8Q^4 \right), \\
f_{rr}&=& \frac{1}{r^6} \left(-4Mr^3+(-18M^2+15Q^2)r^2+6MQ^2 r+20Q^4 \right), \\
f_{rrr}&=& \frac{1}{r^7} \left(12Mr^3+(72M^2-60Q^2)r^2-30MQ^2 r-120Q^4 \right)
\eeaa
In Reissner-Nordstr{\"o}m metric, define $0<\gamma<1$ such that $Q^2= \gamma M^2$. Then we obtain
\beaa
f_r&=& \frac{1}{2r^5} \left(4M r^3+(12-10\gamma)M^2 r^2-3\gamma M^3 r-8\gamma^2M^4 \right)
\eeaa
The horizon is then given by $r_{\mathcal{H}^+}=M+\sqrt{M^2-Q^2} = M( 1+\sqrt{1-\gamma})$. Define $r=M( 1+\sqrt{1-\gamma})x$ we have that 
\beaa
f_r&=& \frac{M^4}{2r^5} \left(4( 1+\sqrt{1-\gamma})^3 x^3+(12-10\gamma)( 1+\sqrt{1-\gamma})^2 x^2-3\gamma ( 1+\sqrt{1-\gamma}) x-8\gamma^2 \right)
\eeaa
 The polynomial on the right hand side is clearly positive in $x \in [1, \infty]$ for $\gamma=0$. Therefore, it is also positive for $\gamma$ small enough.

Now we analyze the expression $-\frac{4f}{\Omega^2}\left(V+\frac{2\Omega^2}{r^2} \right)'-\frac{2f'''}{\Omega^2}$. Recall that $V=\frac{4Q^2}{r^4}\Omega^2$, therefore
\bea\label{expression-to-be-positive}
-\frac{4f}{\Omega^2}\left(V+\frac{2\Omega^2}{r^2} \right)'-\frac{2f'''}{\Omega^2}&=& -\frac{4f}{\Omega^2}\left(\left(\frac{2Q^2}{r^2}+1\right)\left(\frac{2\Omega^2}{r^2}\right) \right)'-\frac{2f'''}{\Omega^2}
\eea
We compute the derivative of $f$ with respect to $R^*$, where recall that $R^*(r)=\Omega^2$. We therefore have
\beaa
f''(r)&=& \Omega^4 f_{rr}(r)+(\Omega^2)' \Omega^2 f_r(r), \\
f'''(r)&=& 3 (\Omega^2)'\Omega^4 f_{rr}+\Omega^6 f_{rrr}+(\Omega^2 ((\Omega^2)')^2+\Omega^4 (\Omega^2)'') f_r(r)
\eeaa
Therefore the expression \eqref{expression-to-be-positive} becomes
\beaa
&&\frac{1}{r^{11}}\big(656 Q^8 - 1822 M Q^6 r + 160 M^2 Q^4 r^2 + 
 1504 Q^6 r^2 + 2412 M^3 Q^2 r^3 - 2986 M Q^4 r^3 - 
 1152 M^4 r^4 \\
 &&+ 404 M^2 Q^2 r^4 + 1072 Q^4 r^4 + 
 744 M^3 r^5 - 1224 M Q^2 r^5 + 64 M^2 r^6 + 256 Q^2 r^6 - 
 104 M r^7 + 16r^8\big)
\eeaa
Plugging in $Q^2= \gamma M^2$ and $r=M( 1+\sqrt{1-\gamma})x$ we obtain the polynomial
\beaa
&&656 \gamma^4 - 1822 \gamma^3 ( 1+\sqrt{1-\gamma})x + 160  \gamma^2 ( 1+\sqrt{1-\gamma})^2x^2 + 1504 \gamma^3 ( 1+\sqrt{1-\gamma})^2x^2 \\
 &&+ 2412  \gamma ( 1+\sqrt{1-\gamma})^3x^3 - 2986  \gamma^2 ( 1+\sqrt{1-\gamma})^3x^3 - 1152 ( 1+\sqrt{1-\gamma})^4x^4 \\
 &&+ 404  \gamma ( 1+\sqrt{1-\gamma})^4x^4 + 1072 \gamma^2 ( 1+\sqrt{1-\gamma})^4x^4 + 744  ( 1+\sqrt{1-\gamma})^5x^5 - 1224  \gamma  ( 1+\sqrt{1-\gamma})^5x^5 \\
 &&+ 64 ( 1+\sqrt{1-\gamma})^6x^6 + 256 \gamma ( 1+\sqrt{1-\gamma})^6x^6 - 104  ( 1+\sqrt{1-\gamma})^7x^7 + 16( 1+\sqrt{1-\gamma})^8x^8
\eeaa
 Consider the case of $\gamma=0$. Then the above polynomial, up to dividing by $2^8 x^4$, reduces to
\beaa
16 x^4-52x^3+16x^2+93x-72 
\eeaa
The above polynomial evaluated at $x=1$ is equal to $1$, and therefore positive. We now prove that its derivative, $64 x^3 -156 x^2+32 x+93$ is positive for $x \geq 1$. The derivative has its minimum for $x \geq 1$ at $x=\frac{39+\sqrt{1137}}{48}$, where its value can be checked being positive. This implies positivity of the polynomial in $x \in [1, \infty]$ for $\gamma=0$. For $\gamma$ small enough, the above expression is therefore also positive.

Upon integrating \eqref{integration-identity-morawetz} with respect to the measure $du dv \sin d\th d\phi$ over any spacetime region of the form $[ u_0, u] \times [v_0, v] \times S_{u, v}$ with $f$ chosen as above, we see that we can estimate all boundary terms by the $T$ fluxes at $u$, $v$, $u_0$ and $v_0$. By the conservation of energy \eqref{degenerate-energy}, they can therefore be bounded by $F_{u_0}^T[\psi](v_0, v)+F_{v_0}^T[\psi](u_0, u)$. We estimate the bulk term by \eqref{bulk-est} and making use of conditions 1. and 2. and of the positivity of the angular term, we can obtain
\beaa
\int_{u_0}^{u} \int_{v_0}^{v}du dv\sin \th d\th d\phi &\Omega^2 \{\frac{1}{r^2}|R^*\phi|^2+\frac{1}{r^3}|\phi|^2+\frac{(r^2-3Mr+2Q^2)^2}{r^5}|\nabb\phi|^2\} \\
&\les  F^T_{u_0}[\phi](v_0, v)+  F^T_{v_0}[\phi](u_0, u)
\eeaa
Finally, a standard argument allows to recover the missing derivative in the above bulk estimate, integrating the relation above using an increasing function $f$, vanishing at third order at $r=r_P$. We finally get the following Morawetz estimate:
\bea\label{degenerate-morawetz}
\begin{split}
\int_{u_0}^{u} \int_{v_0}^{v}  du dv\sin \th d\th d\phi &\Omega^2 \{\frac{1}{r^2}|R^*\phi|^2+\frac{1}{r^3}|\phi|^2+\frac{(r^2-3Mr+2Q^2)^2}{r^5}\left(|\nabb\phi|^2+\frac{1}{r^2}|T\phi|^2 \right)\} \\
&\les  F^T_{u_0}[\phi](v_0, v)+  F^T_{v_0}[\phi](u_0, u)
\end{split}
\eea
Observe that the above bulk has a degeneration at the photon sphere caused by the trapping phenomen at $r=r_P$.

\subsection{Redshift estimates}\label{redshift-est}
Observe that the above bulk has a degeneration at the horizon, since it does not control all derivatives at $r=r_{\mathcal{H}}$. We can eliminate this degeneracy by making use of the redshift vectorfield. See \cite{Federico} for a derivation of the redshift estimate. 

Through the standard techniques of the redshift vectorfield, we can improve the energy conservation \eqref{degenerate-energy} to the following non-degenerate version:
\bea
F_u[\phi](v_0, v)+F_v[\phi](u_0, u)\les F_{v_0}[\phi](u_0, u)+F_{u_0}[\phi](v_0, v)
\eea
and the following improved version of the Morawetz estimate \eqref{degenerate-morawetz}:
\beaa
\int_{u_0}^{u} \int_{v_0}^{v} du dv\sin \th d\th d\phi &\Omega^2 \{\frac{1}{r^2}|R^*\phi|^2+\frac{1}{r^3}|\phi|^2+\frac{(r^2-3Mr+2Q^2)^2}{r^5}\left(|\nabb\phi|^2+\frac{1}{r^2}|\Omega \nabb_4 \phi|^2+\frac{1}{r^2}|\Omega^{-1} \nabb_3 \phi|^2 \right)\} \\
&\les  F_{u_0}[\phi](v_0, v)+  F_{v_0}[\phi](u_0, u)
\eeaa
Taking the limit $u, v \to \infty$, the left hand side is given by the degenerate Morawetz bulk $\mathbb{I}_{deg}[\phi]$. The two estimates above therefore give, for a $\phi$ as in Proposition \ref{estimate-l=1-pf-proposition} the following boundedness estimate:
\bea\label{non-deg-en}
\sup_u F_u[\phi](v_0, \infty) +\sup_v F_v[\phi](u_0, \infty) \les  F_{u_0}[\phi](v_0, \infty) + F_{v_0}[\phi](u_0, \infty) 
\eea
and the integrated decay estimate
\bea\label{non-deg-mor}
\mathbb{I}_{deg}[\phi]\les F_{u_0}[\phi](v_0, \infty) + F_{v_0}[\phi](u_0, \infty)
\eea

\subsection{The $r^p$ hierarchy estimates}\label{rp-est}
We will adapt the $r^p$ hierarchy estimates \cite{rp} to the Fackerell-Ipser equation, as done in \cite{DHR} and \cite{Federico}. 

Equation \eqref{fackerell} implies the following identity\footnote{See equation (296) in \cite{DHR}}, for $p$ and $k$ real numbers:
\bea\label{expression-rp-hierarchy}
\begin{split}
0=_{S}&\pr_u \Big(\frac{r^p}{\Omega^{2k}} |\Omega \nabb_4 \phi|^2 \Big)+\pr_v \Big(\frac{r^p}{\Omega^{2k-2}} | \nabb \phi|^2 \Big)+\pr_v \Big(\frac{r^p V}{\Omega^{2k}} |\phi|^2 \Big)-\pr_u \Big(\frac{r^p}{\Omega^{2k}}\Big) |\Omega \nabb_4 \phi|^2 -\pr_v \Big(\frac{r^p V}{\Omega^{2k}}\Big) |\phi|^2 \\
&+\Big( (2-p)r^{p-1} \Omega^{2-2k}+r^p (k-1) \Omega^{-2k+2} (\Omega^2)' \Big) |\nabb\psi|^2 
\end{split}
\eea
We integrate \eqref{expression-rp-hierarchy} for $1\leq p \leq 2$ with respect to the measure $du dv \sin \th d\th d\phi$ in a region of the form 
\beaa
\mathcal{R}=\{ (u, v) \in \mathcal{M} : r \geq R, u_0 \leq u \leq u_{fin}, v_0 \leq v \leq v_{fin} \}
\eeaa
for sufficiently large $R$, $u_{fin}$ and $v_{fin}$. 
We fix $R$ big enough so that the following holds in the region $\{ r\geq R \}$:
\beaa
-\pr_u \Big(\frac{r^p}{\Omega^{2k}}\Big) \geq \frac 1 2 r^{p-1}
\eeaa
for all $1 \leq p \leq 2$ and $k \leq 5$. 

We also calculate
\beaa
-\pr_v \Big(\frac{r^p V}{\Omega^{2k}}\Big)&=& -4Q^2\pr_v \Big(\frac{r^{p-4} }{\Omega^{2k-2}}\Big)=-4Q^2(p-4) r^{p-5}\Omega^{-2k+2} \pr_vr-4Q^2(1-k)r^{p-4}\Omega^{-2k} (\Omega^2)'\pr_vr \\
&=&4Q^2\frac{\pr_vr}{\Omega^{2k}}\left((4-p) r^{p-5}\Omega^{2}+(k-1)r^{p-4}(\Omega^2)'\right)
\eeaa
Recall that 
\beaa
\Omega^2&=& 1-\frac{2M}{r}+\frac{Q^2}{r^2}, \qquad (\Omega^2)'=\frac{2M}{r^2}-\frac{2Q^2}{r^3}
\eeaa
The above computation becomes 
\beaa
-\pr_v \Big(\frac{r^p V}{\Omega^{2k}}\Big)&=&4Q^2\frac{\pr_vr}{\Omega^{2k}}\left((4-p) r^{p-5}+2M r^{p-6}(k+p-5)+Q^2 r^{p-7}(-2k-p+6) \right)
\eeaa
This implies that given any $1\leq p \leq 2$, the choice $k=5$ ensures that 
\beaa
-\pr_v \Big(\frac{r^p V}{\Omega^{2k}}\Big) & \geq& 2 r^{p-6}
\eeaa
holds in $\mathcal{R}$ for sufficiently large $R$. Integrating \eqref{expression-rp-hierarchy} in the spacetime region $\mathcal{R}$ for $p=2-\de$, and using the Morawetz estimate \eqref{non-deg-mor} to bound the terms on the timelike hypersurface $r=R$, we obtain
\bea\label{rp-intermediate-estimate}
\begin{split}
&\int_{\mathcal{R}} du dv \sin\th d\th d\phi \Big[r|\Omega\nabb_4 \phi|^2 +r^{1-\de} |\nabb\phi|^2 +r^{-1-\de}|\phi|^2\Big]\\
&\leq C\int_{v_0}^{\infty}  dv \sin\th d\th d\phi \Big[r^2 |\Omega \nabb_4\phi|^2\Big](u_0, v)+ C( F_{u_0}[\phi](v_0, \infty) + F_{v_0}[\phi](u_0, \infty))
\end{split}
\eea
where we applied the Poincar{\'e} inequality.

Taking the limit as $u_{fin}, v_{fin} \to \infty$ and summing \eqref{rp-intermediate-estimate} to  \eqref{non-deg-mor} and to \eqref{non-deg-en}, we obtain estimate \eqref{total-estimate-l=1} of Proposition \ref{estimate-l=1-pf-proposition}.

\subsection{Higher order estimates}\label{higher-order-est}
We note that the Fackerell-Ipser equation \eqref{fackerell} trivially commutes with Lie differentiation with respect to the Killing fields on Reissner-Nordstr{\"o}m spacetime. Recalling the definition of the higher order energies \eqref{definition-higher-order}, we immediately obtain estimate \eqref{higher-order-l=1} of Proposition \ref{estimate-l=1-pf-proposition}.

Exploiting the $r^p$-hierarchy in the standard way, polynomial decay estimates can be obtained for $\phi$ (see for example \cite{DHR} and \cite{Federico}). 

\section{Estimates for $\tilde{\b}$ and $\tilde{\bb}$}

We prove here the Main Theorem. The proof is divided in two parts. First we estimate the projection to $\ell=1$ spherical mode of $\tilde{\b}$ and $\tilde{\bb}$, and then we estimate the projection to the $\ell\geq 2$ modes. Finally, we combine the two to prove the Theorem.

\subsection{Estimates for the $\ell=1$ spherical mode of $\tilde{\b}$ and $\tilde{\bb}$}\label{mode-l=1-b}
In this section we make use of the differential relation between $\tilde{\b}$ and $\pf$ \eqref{quantities-1} together with the estimates obtained for the projection to the $\ell=1$ mode of $\pf$ in Proposition \ref{estimate-l=1-pf-proposition}.

We summarize the estimates in the following proposition. 

\begin{proposition} Let $\tilde{\b}$ and $\tilde{\bb}$ be solutions to the generalized Teukolsky equation of spin $\pm1$ respectively in Reissner-Nordstr{\"o}m spacetime for $|Q| {\color{red}\ll} M$. Then the following estimates hold:
\begin{enumerate}
\item weighted boundedness and integrated decay estimate for $\tilde{\b}_{\ell=1}$ and $\tilde{\bb}_{\ell=1}$:
\bea\label{estimate-l=1-tilde-b}
\begin{split}
\mathbb{F}[\phi, \tilde{\b}_{\ell=1}]+\mathbb{I}[\phi, \tilde{\b}_{\ell=1}]&\les \mathbb{F}_0[\phi, \tilde{\b}_{\ell=1}] \\
\mathbb{F}[\phi, \tilde{\bb}_{\ell=1}]+\mathbb{I}[\phi, \tilde{\bb}_{\ell=1}]&\les \mathbb{F}_0[\phi, \tilde{\bb}_{\ell=1}] 
\end{split}
\eea
\item higher order energy and integrated decay estimates for any integer $n\geq 0$:
\bea\label{estimate-higher-order-l=1-tilde-b}
\begin{split}
\mathbb{F}^{n, T, \nabb}[\phi, \mathcal{D}\tilde{\b}_{\ell=1}]+\mathbb{I}^{n, T, \nabb}[\phi, \mathcal{D}\tilde{\b}_{\ell=1}]&\les \mathbb{F}^{n, T, \nabb}_0[\phi, \mathcal{D}\tilde{\b}_{\ell=1}] \\
\mathbb{F}^{n, T, \nabb}[\phi, \mathcal{D}\tilde{\bb}_{\ell=1}]+\mathbb{I}^{n, T, \nabb}[\phi, \mathcal{D}\tilde{\bb}_{\ell=1}]&\les \mathbb{F}^{n, T, \nabb}_0[\phi, \mathcal{D}\tilde{\bb}_{\ell=1}] 
\end{split}
\eea
\end{enumerate}
\end{proposition}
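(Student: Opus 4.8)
The plan is to combine Proposition~\ref{estimate-l=1-pf-proposition}, which already furnishes the full control of the $\phi$-energies of $\pf$ and $\underline{\pf}$, with transport estimates that recover the $\ell=1$ projection of $\tilde{\b}$ and $\tilde{\bb}$ from $\pf$ and $\underline{\pf}$. Observe first that the flux, degenerate Morawetz and $r^p$-hierarchy contributions contained in $\mathbb{F}[\phi,\tilde{\b}_{\ell=1}]$ and $\mathbb{I}[\phi,\tilde{\b}_{\ell=1}]$ are precisely $\mathbb{F}[\phi]+\mathbb{I}_{deg}[\phi]+\mathbb{I}_{\mathcal{I}}[\phi]$, which are bounded by $\mathbb{F}_0[\phi]$ by Proposition~\ref{estimate-l=1-pf-proposition}. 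Hence the only genuinely new content is to estimate the weighted $\tilde{\b}_{\ell=1}$ (resp.\ $\tilde{\bb}_{\ell=1}$) terms by the $\phi$-energies together with the initial data.

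First I would rewrite the defining relation \eqref{quantities-1} as the transport identity $\nabb_3(r^5\kab\,\tilde{\b})=\kab\,\pf$, commute it with the operators $r\divv$ and $r\curll$, and project onto the $\ell=1$ spherical harmonics using Lemma~\ref{commutation-projection-l1}. Since $r^5\kab$ and $1/\kab$ are radial, and $\divv,\curll$ commute with $\nabb_3$ up to lower-order terms carrying the explicit background Ricci coefficients, this produces scalar transport equations of the schematic form
\bea
\nabb_3\Big((r^5\kab)\,(r\divv\tilde{\b})_{\ell=1}\Big)&=&\frac{\kab}{r}\,(r^2\divv\pf)_{\ell=1}+\mathfrak{c}\,(r^5\kab)\,(r\divv\tilde{\b})_{\ell=1},\nn
\eea
and similarly with $\curll$ in place of $\divv$, where $\mathfrak{c}$ is a bounded radial coefficient and the source involves exactly the scalars $\phi=(r^2\divv\pf)_{\ell=1}$ controlled by Proposition~\ref{estimate-l=1-pf-proposition}.

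The core step is then a weighted transport (Gr\"onwall-type) estimate. Writing $\nabb_3=\Omega^{-1}\pr_u$, I would integrate the ODE above in $u$ at fixed $v$ starting from the initial cone $C_{u_0}$; a Cauchy--Schwarz argument in $u$ against the ingoing fluxes $F_v[\phi]$ of $\pf$, together with the $r^p$-hierarchy bulk at the top weight $p=2-\de$, controls both the flux $\sup_u\int_{v_0}^{\infty}dv\,\Omega^2 r^{10-\de}\big(|(r\divv\tilde{\b})_{\ell=1}|^2+|(r\curll\tilde{\b})_{\ell=1}|^2\big)$ and the spacetime bulk $\int\!\!\int\,\Omega^4 r^{9-\de}(\cdots)$ by the corresponding $\phi$-quantities plus $\mathbb{F}_0[\phi,\tilde{\b}_{\ell=1}]$. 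Undoing the $r^5\kab$ weight and using $\kab^2=4\Omega^2/r^2$ shows that the $r^{10-\de}$ weight on $\tilde{\b}$ corresponds exactly to the $r^{2-\de}$ level of the hierarchy for $\pf$, while the lower-order terms $\mathfrak{c}(\cdots)$ are absorbed by Gr\"onwall. The $\tilde{\bb}$-case is entirely symmetric: one replaces $\underline{P}$ by $P$, integrates the $\nabb_4$-transport in $v$ from $C_{v_0}$ against the outgoing fluxes $F_u[\phi]$, and uses the weight $r^8$.

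For the higher-order estimates \eqref{estimate-higher-order-l=1-tilde-b}, I would commute the whole scheme with the Killing fields $T$ and the angular momentum operators $\Omega_i$, which preserve both the Fackerell-Ipser and the transport structure, and handle the first-order $\mathcal{D}$-derivatives by noting that $\nabb_3\tilde{\b}$ is given algebraically by the transport relation itself, while $\nabb_4\tilde{\b}$ is recovered by commuting that relation with $\nabb_4$ (using $[\nabb_3,\nabb_4]$ and the already-controlled $\nabb_4\pf$). The main obstacle I anticipate is the bookkeeping of the $r$-weights: one must verify that the top of the $r^p$ hierarchy for $\pf$ feeds exactly the $r^{10-\de}$ (resp.\ $r^8$) weights demanded for $\tilde{\b}$ (resp.\ $\tilde{\bb}$) without loss, and that every commutator and coupling error term is either a radial lower-order term absorbable by Gr\"onwall or carries a factor $\rhoF=Q/r^2$ and is therefore controlled by the smallness assumption $|Q|\ll M$.
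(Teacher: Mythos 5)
Your proposal is correct and follows essentially the same route as the paper: commuting the Chandrasekhar transport relation \eqref{quantities-1} (resp.\ \eqref{quantities-2}) with $r\divv$, $r\curll$, projecting to $\ell=1$, running weighted transport estimates with Cauchy--Schwarz whose source $r^{-1-\de}\Omega^2|\phi|^2$ (the $p=2-\de$ level of the hierarchy) is controlled by Proposition \ref{estimate-l=1-pf-proposition}, recovering the missing $\nabb_4$-derivative by commuting the relation with $\Omega\nabb_4$, and obtaining higher orders from the Killing commutation. The only cosmetic differences are that the lower-order term $\mathfrak{c}$ you allow for actually vanishes (so no Gr\"onwall absorption is needed, just the good-signed bulk generated by differentiating the $r$-weight), and that for $\tilde{\bb}_{\ell=1}$ the paper additionally multiplies the transport identity by $\Omega^{-2}$ to obtain the non-degenerate $r^{8}$ flux.
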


We prove the proposition in the following two subsections, using transport estimates for $\tilde{\b}$ and $\tilde{\bb}$. 

\subsubsection{Transport estimates for the projection to $\ell=1$ mode of $\tilde{\b}$ and $\tilde{\bb}$}
We estimate the projection to $\ell=1$ mode of $\tilde{\b}$ and $\tilde{\bb}$ through some basic transport estimates.

\begin{lemma} Let $\tilde{\b}$ be a solution to the generalized Teukolsky equation of spin $+1$ in Reissner-Nordstr{\"o}m spacetime for $|Q|\ll M$. Then along any null hypersurface of constant $u$ 
\beaa
\int_{v_0}^{\infty} dv \sin \th d\th d\phi \  \Omega^{2} r^{10-\de} \left(|(r\divv\tilde{\b})_{\ell=1}|^2+ |(r\curll\tilde{\b})_{\ell=1}|^2\right) \les \mathbb{F}_0[\phi, \tilde{\b}_{\ell=1}]
\eeaa 
In addition,
\beaa
\int_{u_0}^{\infty} \int_{v_0}^{\infty} du dv \sin \th d\th d\phi \  \Omega^{4} r^{9-\de} \left(|(r\divv\tilde{\b})_{\ell=1}|^2+ |(r\curll\tilde{\b})_{\ell=1}|^2\right) \les \mathbb{F}_0[\phi, \tilde{\b}_{\ell=1}]
\eeaa
Let $\tilde{\bb}$ be a solution to the generalized Teukolsky equation of spin $-1$ in Reissner-Nordstr{\"o}m spacetime for $|Q|{\color{red}\ll} M$. Then along any null hypersurface of constant $v$ 
\beaa
\int_{u_0}^{\infty} du \sin \th d\th d\phi \ r^{8} \left(|(r\divv\tilde{\bb})_{\ell=1}|^2+ |(r\curll\tilde{\bb})_{\ell=1}|^2\right) \les \mathbb{F}_0[\phi, \tilde{\bb}_{\ell=1}]
\eeaa 
In addition,
\beaa
\int_{u_0}^{\infty} \int_{v_0}^{\infty} du dv \sin \th d\th d\phi \   r^{7-\de} \left(|(r\divv\tilde{\bb})_{\ell=1}|^2+ |(r\curll\tilde{\bb})_{\ell=1}|^2\right) \les \mathbb{F}_0[\phi, \tilde{\bb}_{\ell=1}]
\eeaa
\end{lemma}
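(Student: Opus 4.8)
The starting point is the transport relation built into the definition of $\pf$. By \eqref{quantities-1} we have $\pf=\underline{P}(r^4\kab\tilde{\b})=\frac{1}{\kab}\nabb_3(r^5\kab\tilde{\b})$, that is
\[
\nabb_3\big(r^5\kab\,\tilde{\b}\big)=\kab\,\pf ,
\]
which realises $r^5\kab\tilde{\b}$ as the $\nabb_3$-primitive of $\kab\pf$. Since the $\ell=1$ part of a one-form is determined, through the Hodge decomposition of Definition \ref{decomposition-xi}, by the two scalars $(r\divv\tilde{\b})_{\ell=1}$ and $(r\curll\tilde{\b})_{\ell=1}$, the first step is to apply $r\divv$ and $r\curll$ to this identity, commute the angular operator through $\nabb_3$, and project onto $\ell=1$. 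In the Reissner-Nordstr\"om background the commutator $[\nabb_3,\divv]$ produces only a term proportional to $\kab$ times the same quantity (the other background connection and curvature coefficients in the commutator vanishing), and by Lemma \ref{commutation-projection-l1} the projection commutes with $\nabb_3$; since $r$ and $\kab$ are constant on each $S_{u,v}$, one obtains scalar transport equations of the schematic form
\[
\nabb_3\big(W(r)\,(r\divv\tilde{\b})_{\ell=1}\big)=S(r)\,\phi,\qquad \phi=(r^2\divv\pf)_{\ell=1},
\]
and identically with $\curll$ replacing $\divv$, where $W,S$ are explicit functions of $r$ (comparable, in the exterior and for $|Q|\ll M$, to fixed powers of $r$ and $\Omega$) and the zeroth-order commutator coefficient has been absorbed into $W$ via an integrating factor.

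The second step is the transport estimate itself. Writing $\nabb_3=\Omega^{-1}\pr_u$ and integrating the ODE above in $u$ at fixed $v$ from the initial cone $u=u_0$ gives, for each $(u,v)$,
\[
W\,(r\divv\tilde{\b})_{\ell=1}(u,v)=W\,(r\divv\tilde{\b})_{\ell=1}(u_0,v)+\int_{u_0}^{u}\Omega\,S\,\phi\,du'.
\]
Squaring, applying Cauchy--Schwarz to the $u'$-integral against a weight $\mu(r)$, and integrating in $v$ over the constant-$u$ hypersurface yields a bound of the form $\int_{v_0}^{\infty}\!|W(r\divv\tilde{\b})_{\ell=1}|^2(u,v)\,dv\les \int_{v_0}^{\infty}\!|W(r\divv\tilde{\b})_{\ell=1}|^2(u_0,v)\,dv + \big(\sup_v\!\int_{u_0}^u \Omega^2 S^2/\mu\,du'\big)\iint \mu|\phi|^2\,du'dv$. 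With $W$ calibrated so that $|W|^2$ equals the target weight $\Omega^2 r^{10-\de}$, the first term on the right is precisely the $\tilde{\b}_{\ell=1}$-contribution to the initial flux $\mathbb{F}_0[\phi,\tilde{\b}_{\ell=1}]$; choosing $\mu$ so that $\int_{u_0}^u \Omega^2 S^2/\mu\,du'$ is uniformly bounded while $\iint\mu|\phi|^2$ is dominated by $\mathbb{I}_{deg}[\phi]+\mathbb{I}_{\mathcal{I}}[\phi]$, the second term is controlled by $\mathbb{F}_0[\phi]$ through Proposition \ref{estimate-l=1-pf-proposition}. The spacetime (second) estimate of the Lemma follows by integrating the same pointwise-in-$u$ bound additionally over $u$. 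The statements for $\tilde{\bb}$ are obtained by the mirror argument: one integrates the $\nabb_4$-transport equation $\nabb_4(r^5\ka\tilde{\bb})=\ka\,\underline{\pf}$ in $v$ at fixed $u$, estimating toward the constant-$v$ hypersurfaces, with $\nabb_3\leftrightarrow\nabb_4$ and $u\leftrightarrow v$ throughout.

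The crux of the argument is the bookkeeping of $r$- and $\Omega$-weights in the Cauchy--Schwarz step, which is where the estimate is genuinely \emph{enhanced} beyond a naive transport bound and where the two cases become asymmetric. The natural weight supplied by the equation is $r^5\kab\sim r^4\Omega$, giving only $r^8\Omega^2$; reaching the target $r^{10-\de}\Omega^2$ for $\tilde{\b}$ requires gaining an extra $r^{2-\de}$, which can be extracted only from the $r^p$-hierarchy of Proposition \ref{estimate-l=1-pf-proposition} at the borderline exponent $p=2-\de$. For $\tilde{\bb}$ the target $r^8$ carries no $\Omega^2$, so instead of $r$-weights one must gain a factor $\Omega^{-2}$, i.e.\ remove the horizon degeneracy, which is precisely what the non-degenerate (redshift) fluxes of Proposition \ref{estimate-l=1-pf-proposition} provide. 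The main obstacle is therefore to choose the Cauchy--Schwarz weight $\mu$ simultaneously compatibly in the three regimes --- near $\mathcal{H}^+$ (matching the redshift/$\Omega$-degeneracies), at the photon sphere $r=r_P$ (where $\mathbb{I}_{deg}[\phi]$ degenerates in the derivatives but the non-degenerate zeroth-order term $r^{-3}\Omega^2|\phi|^2$ survives and suffices), and in the wave zone $r\to\infty$ (where the borderline $r^{2-\de}$ gain is extracted) --- with no loss at the endpoint $\de$.
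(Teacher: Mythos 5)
Your skeleton is the paper's: commute the transport relation \eqref{quantities-1} with $r\DDd_1$, project to $\ell=1$ (this is \eqref{relations-div-l=1}), run a weighted transport estimate in $u$, and control the source by Proposition \ref{estimate-l=1-pf-proposition}. The gap is in how you execute the transport estimate. The paper never integrates the linear equation and then squares; it contracts the relation with the weighted quantity itself and multiplies by $r^n$, $n=2-\de$, so that the term generated by $\nabb_3(r^n)=\tfrac n2 r^{n}\kab$ (with $\kab<0$) enters with a favorable sign, giving the pointwise inequality
\[
\pr_u\big(r^{10-\de}\Omega^{2}|(r\divv\tilde{\b})_{\ell=1}|^2\big)+\tfrac{2-\de}{2}\,r^{9-\de}\Omega^{4}|(r\divv\tilde{\b})_{\ell=1}|^2\leq \tfrac{2}{2-\de}\,r^{-1-\de}\Omega^{2}|\phi|^2 ,
\]
after Young's inequality absorbs half of the damping term; one integration $du\,dv$ then yields the flux \emph{and} the spacetime bulk simultaneously, with right-hand side $\les \mathbb{I}_{deg}[\phi]+\mathbb{I}_{\mathcal I}[\phi]\les\mathbb{F}_0[\phi]$. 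Your Duhamel-plus-integral-Cauchy--Schwarz scheme cannot reach the stated weights. Concretely: with $W$ calibrated so that $|W|^2=\Omega^2r^{10-\de}$, the source is $\sim r^{-1-\de/2}\Omega^{2}\phi$, and your two requirements on $\mu$ are (i) $\iint\mu|\phi|^2\les\mathbb{I}_{deg}[\phi]+\mathbb{I}_{\mathcal I}[\phi]$, which forces $\mu\les r^{-1-\de}$ as $r\to\infty$, and (ii) $\sup_{u,v}\int_{u_0}^{u}r^{-2-\de}\Omega^{4}\mu^{-1}\,du'<\infty$. Under (i) the integrand in (ii) is $\ges r^{-1}\Omega^{4}$, and
\[
\int_{u_0}^{u}r^{-1}\Omega^{4}\,du'=\int_{r(u,v)}^{r(u_0,v)}r^{-1}\Omega^{2}\,dr\sim\log\frac{r(u_0,v)}{r(u,v)} ,
\]
which is unbounded on the exterior (take $v$ large, so $r(u_0,v)$ is large, and then $u\gg v$, so $r(u,v)$ is near $r_+$). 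So precisely at the endpoint you yourself flag as the crux, your scheme loses a logarithm; the paper's pointwise absorption into the damping term is what removes it (the only alternative within your scheme would be to run the $r^p$ hierarchy at a strictly larger exponent $p=2-\de'$ with $\de'<\de$, which is not what you invoke and is not what the recorded norms $\mathbb{I}_{\mathcal I}[\phi]$, $\mathbb{I}_{deg}[\phi]$ provide).

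Second, your claim that the spacetime estimate ``follows by integrating the same pointwise-in-$u$ bound additionally over $u$'' fails: the flux bound is uniform in $u$, so its $du$-integral over $[u_0,\infty)$ diverges, and one cannot instead write $\Omega^{4}r^{9-\de}=(\Omega^{2}r^{-1})\cdot\Omega^{2}r^{10-\de}$ and use Fubini, since $\int_{u_0}^{\infty}\Omega^{2}r^{-1}\,du=\log\big(r(u_0,v)/r_+\big)$ is unbounded in $v$. In the paper the bulk $\iint\Omega^{4}r^{9-\de}|\cdot|^2$ is exactly the un-absorbed half of the damping term in the displayed inequality above, i.e.\ it comes from the good-sign term that your Duhamel step discards. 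A smaller inaccuracy of the same kind: for $\tilde{\bb}$, the non-degenerate $r^{8}$ flux is obtained in \eqref{intermediate-again} by multiplying the transport identity by $\Omega^{-2}$ and exploiting the sign of $\pr_v\Omega^{-2}$, which produces the damping term $\big(M-Q^2/r\big)r^{6}|\cdot|^2$; it is not supplied by the redshift fluxes for $\phi$, as you suggest.
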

\begin{proof}
By commuting the differential relation \eqref{quantities-1} with $r\DDd_1$ and projecting into $\ell=1$ spherical mode, we obtain 
\bea\label{relations-div-l=1}
r\divv \pf_{l=1} =\frac{1}{\kab} \nabb_3(r^5 \kab (r\divv\tilde{\b}_{\ell=1}), \qquad r\curll \pf_{\ell=1} =\frac{1}{\kab} \nabb_3(r^5 \kab (r\curll\tilde{\b}_{\ell=1})
\eea
and similarly commuting \eqref{quantities-2} for spin $-1$ quantities:
\bea\label{relations-div-l=-1}
r\divv \underline{\pf}_{\ell=1} =\frac{1}{\ka} \nabb_4(r^5 \ka (r\divv\tilde{\bb}_{\ell=1}), \qquad r\curll \underline{\pf}_{\ell=1} =\frac{1}{\ka} \nabb_4(r^5 \ka (r\curll\tilde{\bb}_{\ell=1})
\eea
From \eqref{relations-div-l=1}, we have 
\beaa
\nabb_3(r^{10} \kab^{2} |r\divv\tilde{\b}_{\ell=1}|^2)&=& 2r^5\kab^2 (r\divv\tilde{\b}_{\ell=1}) (r\divv \pf_{\ell=1})
\eeaa
Multiplying by $r^n$ and using that $\nabb_3 r=\frac 1 2 r\kab$, we have
\beaa
\nabb_3(r^{10+n} \kab^{2} |r\divv\tilde{\b}_{\ell=1}|^2)+\frac{n}{2}r^{10+n} (-\kab)^{3} |r\divv\tilde{\b}_{\ell=1}|^2&=& 2r^{5+n}\kab^2 (r\divv\tilde{\b}_{\ell=1}) (r\divv \pf_{l=1})\\
&\leq& \frac{n}{4} r^{10+n} (-\kab)^3|r\divv\tilde{\b}_{\ell=1}|^2+\frac{4}{n} r^n (-\kab)|r\divv\pf_{\ell=1}|^2
\eeaa
by Cauchy-Schwarz. This therefore simplifies to
\beaa
\nabb_3(r^{10+n} \kab^{2} |r\divv\tilde{\b}_{\ell=1}|^2)+\frac{n}{4}r^{10+n} (-\kab)^{3} |r\divv\tilde{\b}_{\ell=1}|^2&\leq& \frac{4}{n} r^n (-\kab)|r\divv\pf_{\ell=1}|^2
\eeaa
Recalling that $\nabb_3=\Omega^{-1} \pr_u$ and $\kab=-\frac{2\Omega}{r}$ and $\phi=r^2 \divv \pf_{\ell=1}$ or $\phi=r^2 \curll \pf_{\ell=1}$, we obtain 
\bea\label{intermediate=1}
\begin{split}
 \pr_u(r^{8+n} \Omega^{2} |r\divv\tilde{\b}_{\ell=1}|^2)+\frac{n}{2} r^{7+n} \Omega^{4} |r\divv\tilde{\b}_{\ell=1}|^2&\leq \frac{2}{n} r^{n-3} \Omega^2|\phi|^2 \\
 \pr_u(r^{8+n} \Omega^{2} |r\curll\tilde{\b}_{\ell=1}|^2)+\frac{n}{2} r^{7+n} \Omega^{4} |r\curll\tilde{\b}_{\ell=1}|^2&\leq \frac{2}{n} r^{n-3} \Omega^2|\phi|^2 
 \end{split}
\eea
From \eqref{relations-div-l=-1}, we have 
\bea\label{intermediate-nabb-4-transport}
\nabb_4(r^{10} \ka^{2} |r\divv\tilde{\bb}_{\ell=1}|^2)&=& 2r^5\ka^2 (r\divv\tilde{\bb}_{\ell=1}) (r\divv \underline{\pf}_{\ell=1})
\eea
Multiplying by $r^{-\de}$ and using that $\nabb_4 r=\frac 1 2 r\ka$, we have
\beaa
\nabb_4(r^{10-\de} \ka^{2} |r\divv\tilde{\bb}_{\ell=1}|^2)+\frac{\de}{2}r^{10-\de} \ka^{3} |r\divv\tilde{\bb}_{\ell=1}|^2&=& 2r^{5-\de}\ka^2 (r\divv\tilde{\bb}_{\ell=1}) (r\divv \underline{\pf}_{\ell=1})\\
&\leq& \frac{\de}{4} r^{10-\de} \ka^3|r\divv\tilde{\bb}_{\ell=1}|^2+\frac{4}{\de} r^{-\de} \ka|r\divv\underline{\pf}_{\ell=1}|^2
\eeaa
which simplifies to
\beaa
\nabb_4(r^{10-\de} \ka^{2} |r\divv\tilde{\bb}_{\ell=1}|^2)+\frac{\de}{4}r^{10-\de} \ka^{3} |r\divv\tilde{\bb}_{\ell=1}|^2&\leq& \frac{4}{\de} r^{-\de} \ka|r\divv\underline{\pf}_{\ell=1}|^2
\eeaa
Recall that $\nabb_4=\Omega^{-1} \pr_v$ and $\ka=\frac{2\Omega}{r}$ and $\phi=r^2 \divv \underline{\pf}_{\ell=1}$ or $\phi=r^2 \curll \underline{\pf}_{\ell=1}$, we obtain 
\bea\label{intermediate-step-2}
\begin{split}
 \pr_v(r^{8-\de} \Omega^{2} |r\divv\tilde{\bb}_{\ell=1}|^2)+\frac{\de}{2}r^{7-\de} \Omega^{4} |r\divv\tilde{\bb}_{\ell=1}|^2&\leq \frac{2}{\de} r^{-3-\de} \Omega^2|\phi|^2 \\
 \pr_v(r^{8-\de} \Omega^{2} |r\curll\tilde{\bb}_{\ell=1}|^2)+\frac{\de}{2}r^{7-\de} \Omega^{4} |r\curll\tilde{\bb}_{\ell=1}|^2&\leq \frac{2}{\de} r^{-3-\de} \Omega^2|\phi|^2 
 \end{split}
\eea
We can also multiply \eqref{intermediate-nabb-4-transport} by $\frac{1}{\Omega^2}$ for which $\pr_v \Omega^{-2}=-\frac{1}{\Omega^2}\left( \frac{2M}{r^2}-\frac{2Q^2}{r^3}\right)$ and obtain
\beaa
\pr_v( r^{8} |r\divv\tilde{\bb}_{\ell=1}|^2)+\left( \frac{2M}{r^2}-\frac{2Q^2}{r^3}\right)r^{8}  |r\divv\tilde{\bb}_{\ell=1}|^2&=& 2r^3\Omega (r\divv\tilde{\bb}_{\ell=1}) (r\divv \underline{\pf}_{\ell=1})
\eeaa
and hence 
\bea\label{intermediate-again}
\pr_v( r^{8} |r\divv\tilde{\bb}_{\ell=1}|^2)+\left( M-\frac{Q^2}{r}\right)r^{6}  |r\divv\tilde{\bb}_{\ell=1}|^2&\leq& \left(\frac{r}{ Mr-Q^2}\right)r^{-2}\Omega^2|\phi|^2
\eea
The right hand side in \eqref{intermediate=1}, applied with $n=2-\de$, as well as the right hand side of \eqref{intermediate-step-2} and \eqref{intermediate-again} are estimated from initial data upon integration over a spacetime region $[ u_0, u] \times [v_0, v]\times S^2$ according to Proposition \ref{estimate-l=1-pf-proposition}, giving the desired bounds.
\end{proof}

Recalling the definitions of the energies and the bulks \eqref{definition-combined-esnergy} and \eqref{definition-combined-bulks}, the above Lemma implies estimate \eqref{estimate-l=1-tilde-b}.

\subsubsection{Higher derivative estimates}
We estimate higher derivative of the projection to the $\ell=1$ mode of $\tilde{\b}$ and $\tilde{\bb}$. 

Note that we already control the $\nabb_3 (r\divv\tilde{\b}_{\ell=1})$ and $\nabb_4(r\divv\tilde{\bb}_{\ell=1})$ by the differential relations \eqref{quantities-1} and \eqref{quantities-2} they satisfy. 

To estimate the remaining derivative, we commute the differential relations by $2R_*=-\Omega \nabb_3 +\Omega \nabb_4$. We compute
\beaa
\Omega \nabb_3 (R_{*}(r^4 \Omega (r\divv\tilde{\b}_{\ell=1}))=R_*(\Omega^2 \divv \pf_{\ell=1} )
\eeaa
Since the right hand side satisfies a non-degenerate estimate, we can use Proposition \ref{estimate-l=1-pf-proposition} to bound such term from initial data. 

In order to obtain the optimal weights in $r$, we commute the relation \eqref{relations-div-l=1} with $\Omega \nabb_4$. Indeed commuting
\beaa
\Omega\nabb_3(r^4 \Omega (r\divv\tilde{\b}_{\ell=1})=\Omega^2 r^{-2}\phi
\eeaa
by $\Omega \nabb_4$, we obtain
\beaa
\Omega \nabb_3[\Omega\nabb_4(r^4 \Omega (r\divv\tilde{\b}_{\ell=1})]=\Omega \nabb_4(\Omega^2 r^{-2}\phi)
\eeaa
Using a cut-off function which vanishes for $r \leq R$, we obtain the optima weights for the $\Omega \nabb_4$ derivative. This proves \eqref{estimate-higher-order-l=1-tilde-b} for $n=0$.

The result for $\tilde{\bb}_{\ell=1}$ is analogous, with the difference that there is no improvement in powers of $r$ when taking the $\nabb_3$ derivative.

Since the Fackerell-Ipser equation commutes with the Killing fields $T$ and angular momentum operators, the higher order estimates in  \eqref{estimate-higher-order-l=1-tilde-b} are implied. 

\subsection{Estimate for the $\ell\geq 2$ spherical modes of $\tilde{\b}$ and $\tilde{\bb}$}\label{lge2b}

In this section, we obtain estimates for $r \DDs_2 \tilde{\b}$ and $r \DDs_2 \tilde{\bb}$. We make use of the relations obtained in Lemma \ref{relations-derived}:
\beaa
r^3 \kab \DDs_2\tilde{\b}&=& -\rhoF\psi_1-\left(2\rhoF^2+3\rho \right)r^3\kab \ff, \\
r^3 \ka \DDs_2\tilde{\bb}&=& \rhoF\underline{\psi}_1-\left(2\rhoF^2+3\rho \right)r^3\ka \underline{\ff}
\eeaa
From the first relation, we can write 
\bea\label{relation-tilde-b-psi-1-ff}
  r\DDs_2\tilde{\b}&=& \frac{Q}{2\Omega r^3}\psi_1+\left(\frac{6M}{r^2}-\frac{8Q^2}{r^3} \right) \ff
\eea
This implies that 
\beaa
\int_{u_0}^{\infty} \int_{v_0}^{\infty} du dv \sin \th d\th d\phi \ i_{r \geq R}  |r \DDs_2\tilde{\b}|^2 r^n &\les& \int_{u_0}^{\infty} \int_{v_0}^{\infty} du dv \sin \th d\th d\phi \ i_{r \geq R} \left(r^{-6+n} |\psi_1|^2 +r^{-4+n} |\ff|^2\right)
\eeaa
The right hand side is controlled by the  initial data of $\qf$, $\qf^\F$, $\ff$, $\psi_1$ and $\a$ by Main Theorem of \cite{Giorgi4}, for $n=9-\de$. Recall that the theorem in \cite{Giorgi4} gives bound for the non-degenerate spacetime energies for the quantities $\psi_1$ and $\ff$, in Reissner-Nordstr{\"o}m spacetime with $|Q|\ll M$. Therefore, we can integrate \eqref{relation-tilde-b-psi-1-ff} in the whole spacetime region, and still control the right hand side by initial data. This proves 
\beaa
\mathbb{F}[\DDs_2\tilde{\b}]+\mathbb{I}[\DDs_2\tilde{\b}]&\les \mathbb{F}_0[\qf, \qf^\F, \ff, \psi_1, \a]+\mathbb{F}_0[\DDs_2\tilde{\b}]
\eeaa
For the spin $-2$ quantities, we have the relation 
\bea\label{relations-spin-1}
r  \DDs_2\tilde{\bb}&=& \frac{Q}{2\Omega r^3}\underline{\psi}_1+\left(\frac{6M}{r^2}-\frac{8Q^2}{r^3} \right) \underline{\ff}
\eea
from which we can similarly obtain
\beaa
\int_{u_0}^{\infty} \int_{v_0}^{\infty} du dv \sin \th d\th d\phi \ i_{r \geq R}  |r \DDs_2\tilde{\bb}|^2 r^n &\les& \int_{u_0}^{\infty} \int_{v_0}^{\infty} du dv \sin \th d\th d\phi \ i_{r \geq R} \left(r^{-6+n} |\underline{\psi}_1|^2 +r^{-4+n} |\underline{\ff}|^2\right)
\eeaa
The right hand side is controlled by the  initial data of $\underline{\qf}$, $\underline{\qf}^\F$, $\underline{\ff}$, $\underline{\psi}_1$ and $\aa$ by Main Theorem for spin $-2$ of \cite{Giorgi4}, for $n=7-\de$. As above, this gives
\beaa
\mathbb{F}[\DDs_2\tilde{\bb}]+\mathbb{I}[\DDs_2\tilde{\bb}]&\les \mathbb{F}_0[\underline{\qf}, \underline{\qf}^\F, \underline{\ff}, \underline{\psi}_1, \aa]+\mathbb{F}_0[\DDs_2\tilde{\bb}] 
\eeaa
Upon taking derivative with respect to $\nabb_3$, $\nabb_4$ and $\nabb$ of \eqref{relation-tilde-b-psi-1-ff} and integrating in spacetime, we can bound the left hand side by the initial data using Main Theorem in \cite{Giorgi4}. Indeed, both $\psi_1$ and $\ff$ are quantities obtained through transport estimates, therefore their spacetime norms are not degenerate. Observe all those derivatives improve of a power of $r$ in the estimates. This therefore implies 
\bea\label{final-estimate-lgeq2}
\mathbb{F}[\mathscr{D}\DDs_2\tilde{\b}]+\mathbb{I}[\mathscr{D}\DDs_2\tilde{\b}]&\les \mathbb{F}_0[\qf, \qf^\F, \ff, \psi_1, \a]+\mathbb{F}_0[\mathscr{D}\DDs_2\tilde{\b}]
\eea
In commuting \eqref{relations-spin-1} we obtain a similar estimate, observing that the energies for the spin $-2$ quantities $\underline{\psi}_1$ and $\underline{\ff}$ do not improve in powers of $r$. This implies as above
\bea\label{final-estimate-lgeq2-2}
\mathbb{F}[\overline{\mathscr{D}}\DDs_2\tilde{\bb}]+\mathbb{I}[\overline{\mathscr{D}}\DDs_2\tilde{\b}]&\les \mathbb{F}_0[\underline{\qf}, \underline{\qf}^\F, \underline{\ff}, \underline{\psi}_1, \aa]+\mathbb{F}_0[\overline{\mathscr{D}}\DDs_2\tilde{\bb}]
\eea
By commuting the relations \eqref{relation-tilde-b-psi-1-ff} and \eqref{relations-spin-1} by the KIlling fields $T$ and angular momentum operator, we trivially obtain the equivalent estimates for higher order derivatives.

\subsection{Proof of the Main Theorem}\label{proof-main-theorem}

We are now ready to prove the Main Theorem. The proof is a straightforward application of the elliptic estimates in Section \ref{section-elliptic-estimates}, and the estimates obtained in Section \ref{mode-l=1-b} for the projection to the $\ell=1$ spherical mode and in Section \ref{lge2b} for the projection to the $\ell\geq 2$ spherical harmonics.

By Corollary \ref{main-elliptic-estimate} applied to $\tilde{\b}$ and $\tilde{\bb}$, we have
\beaa
\int_{S} |\tilde{\b}|^2 \leq C \left(\int_{S} |r\divv \tilde{\b}_{\ell=1}|^2+|r\curll \tilde{\b}_{\ell=1}|^2 + |r\DDs_2 \tilde{\b}|^2 \right)
\eeaa
Recalling that $r\divv$, $r\curll$ and $r\DDs_2$ commute with $\nabb_3$ and $\nabb_4$, from the above elliptic estimate we can immediately infer that
\beaa
\mathbb{F}[\tilde{\b}] & \les& \mathbb{F}[\phi, \tilde{\b}_{\ell=1}]+\mathbb{F}[\phi, \mathcal{D}\tilde{\b}_{\ell=1}]+\mathbb{F}[\DDs_2 \tilde{\b}]+\mathbb{F}[\mathscr{D}\DDs_2 \tilde{\b}] \\
\mathbb{F}[\tilde{\bb}] & \les& \mathbb{F}[\phi, \tilde{\bb}_{\ell=1}]+\mathbb{F}[\phi, \mathcal{D}\tilde{\bb}_{\ell=1}]+\mathbb{F}[\DDs_2 \tilde{\bb}]+\mathbb{F}[\mathscr{D}\DDs_2 \tilde{\bb}]
\eeaa
and 
\beaa
\mathbb{I}[\tilde{\b}] & \les& \mathbb{I}[\phi, \tilde{\b}_{\ell=1}]+\mathbb{I}[\phi, \mathcal{D}\tilde{\b}_{\ell=1}]+\mathbb{I}[\DDs_2 \tilde{\b}]+\mathbb{I}[\mathscr{D}\DDs_2 \tilde{\b}] \\
\mathbb{I}[\tilde{\bb}] & \les& \mathbb{I}[\phi, \tilde{\bb}_{\ell=1}]+\mathbb{I}[\phi, \mathcal{D}\tilde{\bb}_{\ell=1}]+\mathbb{I}[\DDs_2 \tilde{\bb}]+\mathbb{I}[\mathscr{D}\DDs_2 \tilde{\bb}]
\eeaa
Combining \eqref{estimate-l=1-tilde-b}, \eqref{estimate-higher-order-l=1-tilde-b} and \eqref{final-estimate-lgeq2} and \eqref{final-estimate-lgeq2-2}, we prove estimate \eqref{estimate-l=1-tilde-b-total} of Main Theorem.

By applying the equivalent of the above estimates for higher derivative, we prove estimate \eqref{estimate-higher-order-l=1-tilde-b-total} of Main Theorem. 

As a standard application of the $r^p$ hierarchy estimates, we can prove the pointwise decay estimates \eqref{pointwise-decay}. The constant depends on two derivatives in $T$ of the initial data.

\appendix

\section{Derivation of the Teukolsky equations}\label{computations-appendix}

The computations in this appendix are done in full generality, in physical space, with no gauge assumption. To perform the computations, we use the equations in Section \ref{linearized-equations}.

\subsection{Preliminaries}
We collect here few lemmas of \cite{Giorgi4}.

\begin{lemma}[ Lemma A.1.1. of \cite{Giorgi4}]\label{square-RN} The wave operator of a $p$-rank $S$-tensor $\Psi$ is given by 
\bea
\Box_\g\Psi&=& -\frac 1 2 \nabb_{3}\nabb_{4} \Psi-\frac 1 2 \nabb_{4}\nabb_{3} \Psi +\lapp_n \Psi+\left(-\frac 12 \kab+\omb\right) \nabb_4\Psi+\left(-\frac 1 2 \ka+\om \right)\nabb_3\Psi+\left(\eta^C +\etab^C\right)\nabb_C \Psi \nn\\
&=& - \nabb_{3}\nabb_{4} \Psi +\lapp_n \Psi+\left(-\frac 12 \kab+2\omb\right) \nabb_4\Psi-\frac 1 2 \ka\nabb_3\Psi+2\eta^C\nabb_C \Psi, \label{formula-1-wave}\\
&=& - \nabb_{4}\nabb_{3} \Psi +\lapp_n \Psi+\left(-\frac 12 \ka+2\om\right) \nabb_3\Psi-\frac 1 2 \kab\nabb_4\Psi+2\etab^C\nabb_C \Psi \label{formula-2-wave}
\eea
where $\lapp_p\Psi=\g^{CD} \nabb_{C}\nabb_{D}\Psi$ is the Laplacian for $p$-tensors.
\end{lemma}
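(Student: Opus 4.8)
The plan is to decompose the ambient d'Alembertian in the null frame $\mathscr{N}=\{e_A,e_3,e_4\}$ and to read off each contribution from the Ricci coefficients of \eqref{def1}. First I would invert the frame components of the metric: by the normalizations \eqref{relations-null-frame} the only nonzero components are $\g(e_3,e_4)=-2$ and $\g(e_A,e_B)=\slashed{g}_{AB}$, so
\bea
\g^{\mu\nu}&=&-\tfrac12\big(e_3^\mu e_4^\nu+e_4^\mu e_3^\nu\big)+\slashed{g}^{AB}e_A^\mu e_B^\nu. \nn
\eea
Since $\Box_\g\Psi=\g^{\mu\nu}\D^2_{\mu\nu}\Psi$ with the Hessian $\D^2_{ab}\Psi:=\D_a\D_b\Psi-\D_{\D_a e_b}\Psi$, this contraction singles out
\bea
\Box_\g\Psi&=&-\tfrac12\D^2_{34}\Psi-\tfrac12\D^2_{43}\Psi+\slashed{g}^{AB}\D^2_{AB}\Psi, \nn
\eea
which is manifestly curvature free, the symmetric inverse metric projecting onto the symmetric part of the Hessian.

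Next I would compute the frame derivatives entering the Hessian. Expanding $\D_3 e_4$, $\D_4 e_3$ and $\D_A e_B$ in the frame and using \eqref{def1} together with the constancy of $\g(e_3,e_4),\g(e_3,e_3),\g(e_4,e_4)$, one obtains
\bea
&&\D_3 e_4=2\omb\,e_4+2\eta^A e_A,\qquad \D_4 e_3=2\om\,e_3+2\etab^A e_A, \nn \\
&&\slashed{g}^{AB}\D_A e_B=\tfrac12\ka\,e_3+\tfrac12\kab\,e_4+\slashed{g}^{AB}\slashed{\Gamma}^C_{AB}e_C, \nn
\eea
with $\slashed{\Gamma}$ the Christoffel symbols of $\slashed{g}$. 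Substituting, the two null Hessian terms produce $\omb\nabb_4\Psi+\om\nabb_3\Psi+(\eta+\etab)^C\nabb_C\Psi$, while the contracted angular Hessian $\slashed{g}^{AB}\D^2_{AB}\Psi$ reduces to the intrinsic Laplacian $\lapp_n\Psi$ (the $\slashed{\Gamma}$-terms reassembling the induced connection) together with the first order contributions $-\tfrac12\ka\nabb_3\Psi-\tfrac12\kab\nabb_4\Psi$ coming from the second fundamental forms in $\slashed{g}^{AB}\D_A e_B$. Collecting the coefficients of $\nabb_3\nabb_4$, $\nabb_4\nabb_3$, $\nabb_4$, $\nabb_3$ and $\nabb_C$ yields the first (symmetric) identity.

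For the two remaining identities I would only redistribute the mixed second derivative. Writing $-\tfrac12(\nabb_3\nabb_4+\nabb_4\nabb_3)=-\nabb_3\nabb_4+\tfrac12[\nabb_3,\nabb_4]$ and inserting the null commutation formula for $S$-tensors, which in this background takes the form $[\nabb_3,\nabb_4]\Psi=2\omb\nabb_4\Psi-2\om\nabb_3\Psi+2(\eta-\etab)^C\nabb_C\Psi$, the coefficients recombine exactly into \eqref{formula-1-wave}; using instead $-\tfrac12(\nabb_3\nabb_4+\nabb_4\nabb_3)=-\nabb_4\nabb_3-\tfrac12[\nabb_3,\nabb_4]$ gives \eqref{formula-2-wave} by the symmetric manipulation.

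I expect the genuine difficulty to lie in the tensorial reduction of the contracted angular Hessian $\slashed{g}^{AB}\D^2_{AB}\Psi$ to $\lapp_n\Psi$: for a scalar this is immediate, but for a $p$-tensor one must use the Gauss-type relation between the ambient $\D_A$ and the induced $\nabb_A$ along the normal directions and check that the extra pieces involving $\chih,\chibh$ either vanish in the background or reproduce exactly the recorded $\ka,\kab$ terms. The complementary bookkeeping point is the passage between the three forms: since the ambient commutator $[\D_3,\D_4]\Psi$ carries the curvature endomorphism $R(e_3,e_4)\Psi$, one must work consistently with the reduced frame derivatives $\nabb_3,\nabb_4$ and invoke the structure of the Reissner-Nordstr\"om background, so that the commutation formula reduces to the purely first-order expression used above and no curvature term survives.
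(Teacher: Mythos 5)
Your skeleton — inverting $\g$ in the null frame, generating the first-order terms from the Hessian corrections via $\D_3e_4=2\omb\,e_4+2\eta^Ae_A$, $\D_4e_3=2\om\,e_3+2\etab^Ae_A$, $\slashed{g}^{AB}\D_Ae_B=\frac12\ka\,e_3+\frac12\kab\,e_4+\slashed{g}^{AB}\slashed{\Gamma}^C_{AB}e_C$, and then redistributing the mixed second derivative with $[\nabb_3,\nabb_4]\Psi=2\omb\nabb_4\Psi-2\om\nabb_3\Psi+2(\eta-\etab)^C\nabb_C\Psi$ — is the standard route, and it is essentially the computation behind the cited Lemma A.1.1 of \cite{Giorgi4} (note the present paper only imports the statement; it contains no proof to compare against). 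Your three frame identities are correct, and the commutator step is fine to the linear order at which the lemma is used, since the curvature contribution to $[\nabb_3,\nabb_4]$ on $S$-tensors is proportional to $\sigma$ plus quadratic terms, both negligible on the background.

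The genuine gap sits exactly at the step you flag as the ``genuine difficulty'', and your proposed resolution of it fails. If $\Box_\g$ is read as the ambient operator $\g^{\mu\nu}\D_\mu\D_\nu$ followed by projection to the spheres, the identity is \emph{false} for $p\geq 1$, and the obstruction does not come from $\chih,\chibh$ (which indeed vanish on the background) but from the trace parts. For an $S$ one-form, the ambient derivative has non-tangential components $(\D_C\Psi)(e_3)=-{\chib_C}^{D}\Psi_D$ and $(\D_C\Psi)(e_4)=-{\chi_C}^{D}\Psi_D$, which are \emph{linear}, background-sized quantities; feeding them back through the $\frac12\chi_{BA}e_3+\frac12\chib_{BA}e_4$ part of $\D_Be_A$ inside $\slashed{g}^{AB}\D^2_{AB}\Psi$ produces, besides $\lapp_1\Psi-\frac12\ka\nabb_3\Psi-\frac12\kab\nabb_4\Psi$, the zeroth-order term $+\frac14\ka\kab\,\Psi$ (one such term per tensorial index in general). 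Since $\frac14\ka\kab=-\Omega^2/r^2$ on Reissner-Nordstr{\"o}m, this is not a discardable quadratic error: it is exactly the size of the potential in the generalized Fackerell-Ipser equation \eqref{finalequationl1}, so losing track of it would corrupt every equation downstream. (A two-line check in Minkowski with $\Psi=a(t,r)\,d\theta$ shows that $\g^{\mu\nu}\D_\mu\D_\nu\Psi$ and the right-hand side of \eqref{formula-1-wave} differ by $-a/r^2=\frac14\ka\kab\,\Psi_\theta$.) The repair is not a cancellation — there is none — but a convention: here, as in \cite{Giorgi4}, \cite{DHR} and \cite{stabilitySchwarzschild}, $\Box_\g$ acting on $S$-tensors is the wave operator of the \emph{projected} connection, $\Box_\g\Psi=\g^{\mu\nu}\dot{\D}_\mu\dot{\D}_\nu\Psi$ with $\dot{\D}_3=\nabb_3$, $\dot{\D}_4=\nabb_4$, $\dot{\D}_A=\nabb_A$ and Hessian correction $-\dot{\D}_{\D_\mu e_\nu}\Psi$; this reading is forced anyway, because the ambient $\Box_\g\Psi$ of an $S$-tensor is not even $S$-tangent. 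With that definition the correction terms are precisely the first-order terms you recorded, no zeroth-order term can appear, the first identity is exact, and your final paragraph on the commutator becomes correct as stated.
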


\begin{corollary}\label{connection-Federico} In double null gauge, we have for a $p$-tensor $\Psi$
\beaa
r\Omega^2\Box_\g \Psi&=&- \Omega\nabb_3(\Omega \nabb_4(r \Psi))+\Omega^2\lapp(r \Psi)+\Omega^2\rho (r\Psi)
\eeaa
\end{corollary}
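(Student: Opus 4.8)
The plan is to derive the corollary directly from the tensorial wave-operator formula \eqref{formula-1-wave} of Lemma \ref{square-RN}, specialized to the Reissner-Nordstr\"om background, by expanding the right-hand side of the claimed identity and matching it term by term against $r\Omega^2\Box_\g\Psi$. Since the Ricci coefficient $\eta$ vanishes in the background, \eqref{formula-1-wave} reduces to
\[
\Box_\g\Psi = -\nabb_3\nabb_4\Psi+\lapp\Psi+\left(-\tfrac12\kab+2\omb\right)\nabb_4\Psi-\tfrac12\ka\nabb_3\Psi,
\]
so that $r\Omega^2\Box_\g\Psi$ is an explicit combination of $\nabb_3\nabb_4\Psi$, $\lapp\Psi$, $\nabb_4\Psi$ and $\nabb_3\Psi$ carrying no zeroth-order term.

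First I would record the transport identities for the background quantities in the symmetric null frame. From $\ka=-\kab=\frac{2\Omega}{r}$ and the relation $\nabb_4(1/r)=-\tfrac{1}{2r}\ka$ used in Lemma \ref{commutation-projection-l1} one gets $\nabb_4 r=\tfrac r2\ka=\Omega$ and $\nabb_3 r=\tfrac r2\kab=-\Omega$; together with \eqref{nabb-3-Omega}, namely $\nabb_3\Omega=-2\omb\Omega$ and $\nabb_4\Omega=-2\om\Omega$, these are all the derivatives needed. Then I would expand the right-hand side of the corollary. Using $\Omega\nabb_4(r\Psi)=\Omega^2\Psi+r\Omega\nabb_4\Psi$ and applying $\Omega\nabb_3$ yields
\[
-\Omega\nabb_3(\Omega\nabb_4(r\Psi)) = 4\omb\Omega^3\Psi-\Omega^3\nabb_3\Psi+(\Omega^3+2\omb r\Omega^2)\nabb_4\Psi-r\Omega^2\nabb_3\nabb_4\Psi,
\]
while $\lapp(r\Psi)=r\lapp\Psi$ because $r$ is constant on each sphere $S_{u,v}$, so that $\Omega^2\lapp(r\Psi)=r\Omega^2\lapp\Psi$ and the potential term is simply $\Omega^2\rho(r\Psi)=r\Omega^2\rho\,\Psi$.

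It then remains to match coefficients. The second-order term $-r\Omega^2\nabb_3\nabb_4\Psi$ and the angular term $r\Omega^2\lapp\Psi$ agree immediately; the $\nabb_3\Psi$ coefficient matches because $-\tfrac12 r\Omega^2\ka=-\Omega^3$, and the $\nabb_4\Psi$ coefficient matches because $r\Omega^2(-\tfrac12\kab+2\omb)=\Omega^3+2\omb r\Omega^2$. The one genuinely delicate point, and the step I expect to be the crux, is the zeroth-order term: the wave-operator formula contributes none, so I must check that the two $\Psi$-terms on the right-hand side cancel, i.e. that $4\omb\Omega^3+r\Omega^2\rho=0$. This reduces to the background identity $\rho=-\tfrac{4\omb\Omega}{r}$, which I would confirm by substituting $\omb=\frac{M}{2\Omega r^2}-\frac{Q^2}{2\Omega r^3}$ and verifying it against $\rho=-\frac{2M}{r^3}+\frac{2Q^2}{r^4}$. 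This cancellation, which encodes the specific Reissner-Nordstr\"om relation between $\rho$, $\omb$ and $\Omega$, is precisely what makes the clean conformal form of the wave operator possible.
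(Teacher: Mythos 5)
Your proof is correct and follows essentially the same route as the paper's: both expand $\Omega\nabb_3(\Omega \nabb_4(r \Psi))$ by the product rule and compare the result against \eqref{formula-1-wave} with $\eta=0$. The only cosmetic difference is that the paper produces the $\rho\Psi$ term through the structure equation \eqref{nabb-3-ka}, i.e. $\nabb_3\ka=-\frac 1 2 \ka\kab+2\omb\ka+2\rho$ on the background, whereas you verify the equivalent zeroth-order cancellation $4\omb\Omega^3+r\Omega^2\rho=0$ by substituting the explicit Reissner-Nordstr{\"o}m values of $\omb$ and $\rho$.
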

\begin{proof} We compute 
\beaa
\Omega \nabb_3(\Omega \nabb_4(r \Psi))&=&\Omega \nabb_3 \left(\frac 1 2 \Omega r \ka \Psi+\Omega r \nabb_4\Psi\right)\\
&=&\frac 1 2\Omega (\nabb_3 \Omega) r \ka \Psi+\frac 1 2 \Omega^2 (\nabb_3 r) \ka \Psi+\frac 1 2 \Omega^2 r (\nabb_3\ka) \Psi+\frac 1 2 \Omega^2 r \ka \nabb_3\Psi\\
&&+\Omega (\nabb_3\Omega) r \nabb_4\Psi+\Omega^2 (\nabb_3r) \nabb_4\Psi+\Omega^2 r \nabb_3\nabb_4\Psi
\eeaa
Using \eqref{nabb-3-Omega} and \eqref{nabb-3-ka}, we obtain
\beaa
\Omega \nabb_3(\Omega \nabb_4(r \Psi))&=&\frac 1 2\Omega (-2\omb \Omega) r \ka \Psi+\frac 1 2 \Omega^2 (\frac 1 2 r\kab) \ka \Psi+\frac 1 2 \Omega^2 r (-\frac 1 2 \ka\kab+2\omb\ka+2\rho) \Psi+\frac 1 2 \Omega^2 r \ka \nabb_3\Psi\\
&&+\Omega (-2\omb \Omega) r \nabb_4\Psi+\Omega^2 (\frac 1 2 r\kab)\nabb_4\Psi+\Omega^2 r \nabb_3\nabb_4\Psi\\
&=& \Omega^2 r \left(\nabb_3\nabb_4\Psi+\left(\frac 1 2 \kab-2\omb \right)\nabb_4\Psi+\frac 1 2 \ka \nabb_3\Psi+\rho \Psi \right)
\eeaa
Using \eqref{formula-1-wave}, we easily obtained the desired relation.
\end{proof}

\begin{lemma}[Lemma A.1.3. of \cite{Giorgi4}]\label{wave-rescaled} Consider a rescaled tensor $r^n \kab^m \Psi$, for $n$ and $m$ two numbers. Then 
\bea
r^n \kab^m \nabb_3(\Psi)&=&\nabb_3(r^n \kab^m \Psi)+\left(\frac {m-n}{ 2}  \kab +2m \omb\right)r^n\kab^m \Psi, \label{rescaled-derivatives-r-kab}\\
r^n \kab^m \nabb_4(\Psi)&=&\nabb_4(r^n \kab^m \Psi)+\left(\frac {m-n}{ 2}  \ka -2m \om-2m \rho \kab^{-1}\right)r^n\kab^m  \Psi
\eea
Moreover, it verifies the following wave equation:
\beaa
\Box_\g (r^n \kab^m \Psi)&=& \big( -\frac{n(n+1)+m(m-1)-2nm}{4} \ka\kab +m(m-n-1)\om\kab+(m^2+2m-n-nm) \rho+2m \rhoF^2\\
&&-m(m-n-1) \omb\ka+4m(m+1) \om\omb+4m^2\rho\omb \kab^{-1}-2m \nabb_3\om\big)r^n \kab^m \Psi+r^n\kab^m \Box_\g( \Psi)\\
&&+\left(\frac {m-n} 2 \kab+2m\omb\right)r^n\kab^{m}\nabb_4( \Psi)+\left(\frac {m-n}{ 2} \ka-2m\om-2m\rho\kab^{-1}\right) r^n\kab^m\nabb_3(\Psi)
\eeaa
\end{lemma}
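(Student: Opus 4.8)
The plan is to treat $r^n\kab^m$ as a single radial scalar and to pass it through the wave operator using the structure equations of the background, where all dynamical Ricci coefficients vanish. First I would establish the two transport identities in \eqref{rescaled-derivatives-r-kab}. On Reissner-Nordström one has $\nabb_3 r = \frac12 r\kab$, $\nabb_4 r = \frac12 r\ka$, and, since $\xib$, $\etab$, $\eta$ vanish in the background, equations \eqref{nabb-3-kab} and \eqref{nabb-4-kab} reduce to $\nabb_3\kab = -\frac12\kab^2 - 2\omb\kab$ and $\nabb_4\kab = -\frac12\ka\kab + 2\om\kab + 2\rho$. A Leibniz expansion of $\nabb_3(r^n\kab^m\Psi)$ and $\nabb_4(r^n\kab^m\Psi)$ then isolates $r^n\kab^m\nabb_3\Psi$ and $r^n\kab^m\nabb_4\Psi$, giving \eqref{rescaled-derivatives-r-kab}. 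In particular, writing $\nabb_3(r^n\kab^m) = \lambda\, r^n\kab^m$ and $\nabb_4(r^n\kab^m) = \mu\, r^n\kab^m$ with $\lambda = \frac{n-m}{2}\kab - 2m\omb$ and $\mu = \frac{n-m}{2}\ka + 2m\om + 2m\rho\kab^{-1}$, which I reuse below.

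For the wave equation, I would apply the formula \eqref{formula-1-wave} directly to $\Phi = r^n\kab^m\Psi$, using that $\eta = 0$ and that the angular Laplacian commutes with the radial factor, $\lapp(r^n\kab^m\Psi) = r^n\kab^m\lapp\Psi$. The only delicate term is $\nabb_3\nabb_4(r^n\kab^m\Psi)$: expanding $\nabb_4(r^n\kab^m\Psi) = r^n\kab^m\nabb_4\Psi + \mu\, r^n\kab^m\Psi$ and applying $\nabb_3$ (again through the transport relations) produces $r^n\kab^m\nabb_3\nabb_4\Psi$ together with first- and zeroth-order terms. I then replace $\nabb_3\nabb_4\Psi$ by $-\Box_\g\Psi + \lapp\Psi + (-\frac12\kab + 2\omb)\nabb_4\Psi - \frac12\ka\nabb_3\Psi$, which is \eqref{formula-1-wave} read for $\Psi$ itself. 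The $r^n\kab^m\lapp\Psi$ contributions cancel, the $r^n\kab^m\Box_\g\Psi$ term appears, and the coefficients of $\nabb_4\Psi$ and $\nabb_3\Psi$ collapse to $-\lambda$ and $-\mu$ respectively, which are exactly the coefficients $\frac{m-n}{2}\kab + 2m\omb$ and $\frac{m-n}{2}\ka - 2m\om - 2m\rho\kab^{-1}$ claimed in the statement.

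It then remains to identify the zeroth-order coefficient, which is precisely $\Box_\g(r^n\kab^m)$ divided by $r^n\kab^m$, namely $-\nabb_3\mu - \lambda\mu + (-\frac12\kab + 2\omb)\mu - \frac12\ka\lambda$. Computing $\nabb_3\mu$ requires the background identities $\nabb_3\ka = -\frac12\ka\kab + 2\omb\ka + 2\rho$ from \eqref{nabb-3-ka} and $\nabb_3\rho = -\frac32\kab\rho - \kab\rhoF^2$ from \eqref{nabb-3-rho}, together with $\nabb_3\kab$ above, so that $\nabb_3(\rho\kab^{-1}) = -\rho - \rhoF^2 + 2\omb\rho\kab^{-1}$; the term $\nabb_3\om$ is left unexpanded, as it appears explicitly in the answer. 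Collecting monomials in $\ka\kab$, $\omb\ka$, $\om\kab$, $\rho$, $\rhoF^2$, $\om\omb$, $\rho\omb\kab^{-1}$ and $\nabb_3\om$ then reproduces the stated factor; for instance the $\ka\kab$-coefficient assembles to $-\frac14\big((n-m)^2 + (n-m)\big) = -\frac{n(n+1)+m(m-1)-2nm}{4}$.

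The hard part will be purely the bookkeeping of this last step: the product $\lambda\mu$ and the derivative $\nabb_3\mu$ each generate several of the eight monomials above, and the matching relies on nontrivial cancellations (most visibly in the $\rho$, $\omb\ka$, and $\rho\omb\kab^{-1}$ coefficients, each of which receives contributions from three of the four pieces). No conceptual difficulty arises beyond the systematic use of the background structure equations, but each coefficient must be checked term by term against the displayed expression.
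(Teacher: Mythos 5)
Your proposal is correct: the two transport identities follow exactly as you say from $\nabb_3 r=\frac12 r\kab$, $\nabb_4 r=\frac12 r\ka$ and the linearized equations \eqref{nabb-3-kab}, \eqref{nabb-4-kab}, and your reduction of the wave equation via \eqref{formula-1-wave} leaves precisely the zeroth-order factor $-\nabb_3\mu-\lambda\mu+(-\frac12\kab+2\omb)\mu-\frac12\ka\lambda$, whose monomial-by-monomial assembly (I checked all eight coefficients, including $\rho$, $\omb\ka$ and $\rho\omb\kab^{-1}$) reproduces the stated expression. The paper itself does not reprove this lemma (it is quoted from \cite{Giorgi4}), but your argument is the same direct Leibniz-plus-structure-equations computation used there and in the paper's own Corollaries \ref{connection-Federico} and \ref{square-psi5}, so nothing further is needed.
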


 We recall
\bea\label{commutator-DDd-Box}
(-r\DDd_1 \Box_1+\Box_0 r\DDd_1) \Phi &=& -K r\DDd_1 \Phi, \\
( -r\DDs_1 \Box_0+\Box_1 r\DDs_1)\phi&=&K r\DDs_1 \phi
\eea

We recall here the spin $+1$ Teukolsky equation for the electromagnetic component $\bF$. 
\begin{proposition}\label{Teukolsky-bF}[Spin $+1$ Teukolsky equation for $\bF$ - Proposition A.2.1 in \cite{Giorgi4}] Let $\mathcal{S}$ be a linear gravitational and electromagnetic perturbation around Reissner-Nordstr{\"o}m. Consider the (gauge-dependent) curvature component $\bF$, which is part of the solution $\mathcal{S}$. Then $\bF$ satisfies the following equation:
\beaa
\Box_\g \bF&=& -2\omb \nabb_4\bF+\left( \ka+2\om\right) \nabb_3\bF+\left(\frac 1 4 \ka\kab+\om\kab-3\omb \ka+\rhoF^2-2\nabb_4\omb\right) \bF \\
&& +\rhoF\left(2\divv \chih +4\b-2\nabb_3\xi+(\kab +8\omb)\xi \right) 
\eeaa
\end{proposition}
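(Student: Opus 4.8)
The plan is to read off the equation as a direct consequence of the linearized Maxwell and null structure equations of Section \ref{linearized-equations}, systematically discarding every quadratic (product‑of‑perturbation) term, which is legitimate since $\bF$ is itself a perturbation. The natural engine is the wave‑operator formula \eqref{formula-2-wave} applied to the one‑form $\bF$ (so $n=1$): dropping the quadratic term $2\etab^C\nabb_C\bF$ it reads $\Box_\g\bF = -\nabb_4\nabb_3\bF + \lapp_1\bF + (-\tfrac12\ka+2\om)\nabb_3\bF - \tfrac12\kab\nabb_4\bF$. This particular form is the useful one because \eqref{nabb-3-bF} is exactly a transport equation for $\nabb_3\bF$, so $\nabb_4\nabb_3\bF$ is computable by differentiating it. Thus I would substitute $\nabb_3\bF = -A\,\bF - \DDs_1(\rhoF,\sigmaF) + 2\rhoF\eta$ with $A=\tfrac12\kab-2\omb$, apply $\nabb_4$, and feed the result back into \eqref{formula-2-wave}. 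Already here the $\nabb_4\bF$ coefficient is pinned down: the wave operator contributes $-\tfrac12\kab$ and differentiating $-A\,\bF$ contributes $+A=\tfrac12\kab-2\omb$, summing to exactly $-2\omb$ as required.

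The crucial step is the treatment of $\nabb_4\DDs_1(\rhoF,\sigmaF)$, which is where the second‑order angular derivative, and hence a genuine wave equation, is produced. Writing $\nabb_4\DDs_1(\rhoF,\sigmaF)=\DDs_1(\nabb_4\rhoF,\nabb_4\sigmaF)+[\nabb_4,\DDs_1](\rhoF,\sigmaF)$ and using the Maxwell equations \eqref{nabb-4-rhoF} and \eqref{nabb-4-sigmaF} to replace $\nabb_4\rhoF=\divv\bF-\ka\rhoF$ and $\nabb_4\sigmaF=\curll\bF-\ka\sigmaF$, the leading piece becomes $\DDs_1(\divv\bF,\curll\bF)=\DDs_1\DDd_1\bF$, which by the elliptic identity \eqref{angular-operators} equals $(-\lapp_1+K)\bF$. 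Tracking signs, this $+\lapp_1\bF$ cancels the $\lapp_1\bF$ of the wave operator, leaving the zeroth‑order contribution $K\bF$ (substituted later through the Gauss relation \eqref{Gauss}). The remaining $-\ka\,\DDs_1(\rhoF,\sigmaF)$, together with the trace part $-\tfrac12\ka\,\DDs_1(\rhoF,\sigmaF)$ that the commutation lemma for $[\nabb_4,\DDs_1]$ supplies, is re‑expressed through \eqref{nabb-3-bF} in terms of $\nabb_3\bF$ and $\bF$; this is precisely what assembles the transport coefficient into $\ka+2\om$, combining the wave‑operator value $-\tfrac12\ka+2\om$ with the $+\ka$ and $+\tfrac12\ka$ generated by these re‑expressions.

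It then remains to collect the inhomogeneity. The term $2\nabb_4(\rhoF\eta)$ splits as $2(\nabb_4\rhoF)\eta+2\rhoF\nabb_4\eta$; the first is quadratic apart from its $-\ka\rhoF\eta$ part, and for the second I would invoke the structure equation \eqref{nabb-3-xi} to write $\nabb_4\eta=\nabb_3\xi-\tfrac12\ka(\eta-\etab)-4\omb\xi-\b-\rhoF\bF$, which produces the $-2\rhoF\nabb_3\xi$, $\rhoF\omb\xi$, $\rhoF\b$ and $\rhoF^2\bF$ pieces. The Codazzi equation \eqref{Codazzi-chi} and the $\chih$‑ and $\xi$‑weighted corrections in $[\nabb_4,\DDs_1](\rhoF,\sigmaF)$ supply the $\rhoF\divv\chih$ and $\rhoF\kab\xi$ terms and the remaining $\rhoF\b$, so that everything recombines into $\rhoF\bigl(2\divv\chih+4\b-2\nabb_3\xi+(\kab+8\omb)\xi\bigr)$, while all stray $\eta$ and $\DDs_1(\rhoF,\sigmaF)$ contributions must be checked to cancel. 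Finally the scalar coefficient of $\bF$, gathered from $\nabb_4 A$, from $K$, and from the $\bF$‑terms generated above, is simplified using the explicit Reissner‑Nordstr\"om background values of $\ka,\kab,\om,\omb,\rho,\rhoF$ from Section \ref{null-frames-ricci} and the background form of \eqref{nabb-4-kab}, \eqref{nabb-4-omb} down to $\tfrac14\ka\kab+\om\kab-3\omb\ka+\rhoF^2-2\nabb_4\omb$.

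The main obstacle is not conceptual but the sheer bookkeeping, and it concentrates in two places. First, the precise commutator $[\nabb_4,\DDs_1]$ acting on the scalar pair $(\rhoF,\sigmaF)$ is what fixes the delicate numerical coefficients on the right‑hand side, for instance the $4\b$ rather than $2\b$, the exact $\kab\xi$, and the extra $\tfrac12\ka$ needed to upgrade the transport coefficient from $\tfrac12\ka+2\om$ to $\ka+2\om$; an error in that lemma would corrupt both the transport term and the source. Second, verifying that the zeroth‑order coefficient collapses exactly to the stated combination ending in $-2\nabb_4\omb$ relies on the specific Reissner‑Nordstr\"om relations among the background Ricci coefficients, and this is the step where a slip in a background derivative would most easily go unnoticed. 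Keeping a clean separation between background quantities (carrying their own transport relations) and genuine perturbations throughout is what guarantees no quadratic term is mistakenly retained.
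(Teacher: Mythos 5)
Your plan follows exactly the route that the paper relies on for this statement (which it does not reprove here but recalls from Proposition A.2.1 of \cite{Giorgi4}, and which is mirrored by the analogous computations in Appendix A, e.g.\ Proposition \ref{Teukolsky-tilde-b}): apply $\nabb_4$ to the Maxwell equation \eqref{nabb-3-bF}, substitute \eqref{nabb-4-rhoF}, \eqref{nabb-4-sigmaF} together with the elliptic identity $\DDs_1\DDd_1=-\lapp_1+K$ from \eqref{angular-operators}, insert the result into the wave-operator formula \eqref{formula-2-wave} with quadratic terms discarded, and close the system with \eqref{nabb-3-xi}, \eqref{Codazzi-chi}, \eqref{nabb-4-kab} and Gauss \eqref{Gauss}. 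Your structural checks are correct: the $\nabb_4\bF$ coefficient $-\tfrac12\kab+(\tfrac12\kab-2\omb)=-2\omb$, the upgrade of the $\nabb_3\bF$ coefficient from $-\tfrac12\ka+2\om$ to $\ka+2\om$ via the total $-\tfrac32\ka\,\DDs_1(\rhoF,\sigmaF)$ re-expressed through \eqref{nabb-3-bF}, and the splitting of $4\rhoF\b$ into two separate contributions of $2\rhoF\b$.

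There is, however, one concrete piece of bookkeeping that is misattributed and, followed literally, would fail. The terms $2\rhoF\divv\chih$ and the second $2\rhoF\b$ do \emph{not} come from ``$\chih$- and $\xi$-weighted corrections in $[\nabb_4,\DDs_1](\rhoF,\sigmaF)$'': at linear order the $\chih$-weighted commutator correction multiplies $\nabb_A\rhoF$, which is itself first order (the background $\rhoF=Q/r^2$ is spherically symmetric), so that correction is quadratic and must be dropped. The actual source is a Leibniz term you omitted in the substitution step written as $\DDs_1(\nabb_4\rhoF,\nabb_4\sigmaF)=\DDs_1\DDd_1\bF-\ka\,\DDs_1(\rhoF,\sigmaF)$: since $\nabb_A\ka$ is first order while $\rhoF$ has a nonvanishing background, one has $\DDs_1(\ka\rhoF,\ka\sigmaF)=\ka\,\DDs_1(\rhoF,\sigmaF)+\rhoF\,\DDs_1(\ka,0)+\mbox{quadratic}$, and it is $\rhoF\,\DDs_1(\ka,0)$, rewritten through Codazzi \eqref{Codazzi-chi} as $\rhoF\left(\ka\ze-2\divv\chih-2\b+2\rhoF\bF\right)$, that produces $2\rhoF\divv\chih$, the second $2\rhoF\b$, the $-2\rhoF^2\bF$ that cancels the $+2\rhoF^2\bF$ coming from \eqref{nabb-3-xi} (so that the final $+\rhoF^2$ coefficient is exactly the one from $K$ via Gauss), and the $\ka\rhoF\ze$ needed for the stray $\eta$, $\ze$, $\etab$ terms to cancel against the genuine commutator corrections. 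The $\xi$-weighted commutator correction does legitimately give $\kab\rhoF\xi$, since $\nabb_3\rhoF=-\kab\rhoF$ on the background. Once this Leibniz term is restored the plan goes through; note also that no explicit Reissner--Nordstr\"om values are needed for the zeroth-order coefficient --- the linearized \eqref{nabb-4-kab} and \eqref{Gauss} suffice, yielding $\left(-\tfrac14-\tfrac14+\tfrac34\right)\ka\kab+\om\kab-3\omb\ka+\rhoF^2-2\nabb_4\omb$ as stated.
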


We recall here the fundamental Lemma in \cite{Giorgi4} to derive the wave equation for a quantity of the form $\underline{P}(\Psi)$. 

\begin{lemma}[Lemma B.0.2 in \cite{Giorgi4}]\label{generalwavePhi} 
Let $\Psi$ be a $p$-tensor which verifies a wave equation of the form: 
\bea\label{wave-eq-ABC}
\Box_\g \Psi &=& A\underline{P}^{-1}(\Psi)+B\Psi+C \underline{P}(\Psi) +M,
\eea
where $\underline{P}^{-1}(\Psi)$ is a $p$-tensor such that $P\underline{P}^{-1}(\Psi)=\Psi$, and $A$, $B$, $C$ are coefficients of the wave equation. Then the $p$-tensor $\underline{P}(\Psi)$ verifies the wave equation:
\beaa
\Box_\g(\underline{P} \Psi)&=& \left( \kab^{-1}r\nabb_3(A)+ r A\right)\underline{P}^{-1}(\Psi)+\left(\kab^{-1}r\nabb_3(B)+ r B+ A+\frac 1 2 r\rho+r\rhoF^2  \right)\ \Psi \\
&&+\left( B +\kab^{-1}r\nabb_3(C)+ r C+\frac 1 2 \ka\kab-4\rho-2\rhoF^2\right)\ \underline{P}(\Psi)+ \left( C+\frac{1}{r}\left( -\ka\kab+2\rho\right) \right)\ \underline{P}(\underline{P}(\Psi))\\
&&+\kab^{-1}r\nabb_3 M  +\frac 3 2r M
\eeaa
\end{lemma}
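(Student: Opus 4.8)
The statement is a commutation lemma, so the plan is to commute $\Box_\g$ through the operator $\underline{P}(\cdot)=\kab^{-1}\nabb_3(r\,\cdot)$ and to reorganize everything in terms of the tensors $\underline{P}^{-1}\Psi,\ \Psi,\ \underline{P}\Psi,\ \underline{P}\,\underline{P}\Psi$. First I would record the elementary Leibniz rule for $\underline{P}$: for any scalar $A$ and any $S$-tensor $\Xi$, using $\nabb_3 r=\tfrac12 r\kab$ one gets $\underline{P}(A\Xi)=\kab^{-1}r\,\nabb_3(A)\,\Xi+A\,\underline{P}(\Xi)$. Applied termwise to the right-hand side of the hypothesis \eqref{wave-eq-ABC}, this single rule already produces every coefficient of the form $\kab^{-1}r\nabb_3(\,\cdot\,)$ in the conclusion, together with the contribution $A\,\underline{P}(\underline{P}^{-1}\Psi)=A\Psi$ (by the defining relation for $\underline{P}^{-1}$), the term $B\,\underline{P}\Psi$, the term $C\,\underline{P}\,\underline{P}\Psi$, and $\underline{P}(M)=\kab^{-1}r\nabb_3 M+\tfrac12 rM$.

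The core of the argument is then the single commutation identity $\Box_\g(\underline{P}\Psi)=\underline{P}(\Box_\g\Psi)+[\Box_\g,\underline{P}]\Psi$, where I claim
\begin{align*}
[\Box_\g,\underline{P}]\Psi = r\,\Box_\g\Psi + \big(\tfrac12 r\rho + r\rhoF^2\big)\Psi + \big(\tfrac12\ka\kab - 4\rho - 2\rhoF^2\big)\underline{P}\Psi + \tfrac1r\big(-\ka\kab + 2\rho\big)\underline{P}\,\underline{P}\Psi .
\end{align*}
Granting this, the proof is bookkeeping: I substitute $\Box_\g\Psi = A\underline{P}^{-1}\Psi+B\Psi+C\underline{P}\Psi+M$ into the leading term $r\,\Box_\g\Psi$, which yields exactly the contributions $rA\,\underline{P}^{-1}\Psi$, $rB\,\Psi$, $rC\,\underline{P}\Psi$, $rM$, and I combine these with the Leibniz output of $\underline{P}(\Box_\g\Psi)$. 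Collecting by tensor type reproduces each coefficient in the statement; in particular the $M$-terms assemble to $\kab^{-1}r\nabb_3 M+\tfrac32 rM$, the $\underline{P}\,\underline{P}\Psi$-coefficient to $C+\tfrac1r(-\ka\kab+2\rho)$, and so on.

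To establish the commutation identity itself I would peel off the $r$ and $\kab^{-1}$ weights using the already-available rescaling Lemma \ref{wave-rescaled} (with $(n,m)=(1,0)$ to handle $r\Psi$ and with $(n,m)=(0,-1)$ to handle the outer $\kab^{-1}$), which reduces the problem to the pure commutator $[\Box_\g,\nabb_3]$ acting on the $p$-tensor $r\Psi$. Writing $\Box_\g$ in the two forms \eqref{formula-1-wave}--\eqref{formula-2-wave} from Lemma \ref{square-RN} and commuting $\nabb_3$ past $\nabb_3\nabb_4$, past the angular Laplacian $\lapp_p$, and past the coefficients $\ka,\kab,\om,\omb$, this reduces further to the elementary commutators $[\nabb_3,\nabb_4]$ and $[\nabb_3,\lapp_p]$ on $S$-tensors in the Reissner--Nordstr\"om background (where $\eta=\etab=\ze=\chih=\chibh=0$), together with the $\nabb_3$-transport equations for the background quantities, namely \eqref{nabb-3-kab}, \eqref{nabb-3-ka}, \eqref{nabb-4-omb}, the Gauss relation \eqref{Gauss}, and $\nabb_3\rho$, $\nabb_3\rhoF$ from \eqref{nabb-3-rho}, \eqref{nabb-3-rhoF}. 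It is precisely \eqref{nabb-4-omb} and \eqref{Gauss} that convert the raw curvature output of the commutator into the stated combinations of $\rho$ and $\rhoF^2$.

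The hardest and most error-prone step is this last one: carrying out $[\Box_\g,\nabb_3]$ with the correct numerical coefficients, and in particular correctly tracking the charge contributions $\rhoF^2$, which are absent in the Schwarzschild analogue in \cite{DHR} and enter only through the charged structure equations (the $\rhoF^2$ in \eqref{nabb-4-omb} and \eqref{Gauss}). Since every coefficient is ultimately a fixed function of $r$, a natural consistency check I would run is to set $\rhoF=0$ and $\rho=-2M/r^3$ and verify that the identity collapses to the commutation lemma of \cite{DHR}; a second check is that the leading-order (transport) part reproduces the Leibniz rule above, isolating the genuinely second-order curvature corrections.
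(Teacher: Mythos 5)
Two preliminary remarks. First, the paper itself never proves this lemma: it is imported verbatim as Lemma B.0.2 of \cite{Giorgi4}, so your proposal can only be judged on its own terms, not against a proof printed here. Second, the skeleton of your argument is correct and I can verify it completely: the Leibniz rule $\underline{P}(A\Xi)=\kab^{-1}r\nabb_3(A)\,\Xi+A\,\underline{P}(\Xi)$ follows from $\nabb_3 r=\tfrac12 r\kab$, and substituting the hypothesis \eqref{wave-eq-ABC} into your claimed commutator identity reproduces every coefficient of the conclusion exactly, including the combination $\kab^{-1}r\nabb_3 M+\tfrac32 rM$. Note that in doing so you use $\underline{P}(\underline{P}^{-1}\Psi)=\Psi$, whereas the hypothesis literally defines $\underline{P}^{-1}$ by $P\underline{P}^{-1}(\Psi)=\Psi$; your reading is the one needed for the term $A\Psi$ to appear (and is how the lemma is actually applied to the chains $\psi_5,\psi_6$ and $\psi_0,\psi_1,\psi_2$), so the ``$P$'' in the statement is evidently a typo which you have, in effect, silently corrected.

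The genuine weakness is that the entire mathematical content of the lemma sits in the commutator identity, which you assert but do not derive. Taking $A=B=C=0$ and $M=\Box_\g\Psi$ in the lemma shows your identity is \emph{exactly equivalent} to the lemma itself, so as written the proposal is a correct reduction plus a consistency check, not yet a proof: the coefficients $\tfrac12 r\rho+r\rhoF^2$, $\tfrac12\ka\kab-4\rho-2\rhoF^2$ and $\tfrac1r(-\ka\kab+2\rho)$ are precisely what the deferred computation must produce. Your proposed route (peel off the weights with Lemma \ref{wave-rescaled}, then compute $[\Box_\g,\nabb_3]$ from Lemma \ref{square-RN} and the background transport equations) is the right one, but your sketch omits the one structural point that makes the lemma possible at all: why the final coefficients are independent of the rank $p$. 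Commuting $\nabb_3$ past $\lapp_p$ and $\nabb_4$ produces, in general, curvature terms contracted into $\Psi$ which depend on $p$; in the Reissner--Nordstr\"om background these all vanish, because the curvature components of type $(e_3,e_A,e_B,e_C)$ and $(e_3,e_4,e_A,e_B)$ have Weyl parts proportional to $\bb$ and $\sigma$ (zero in the background) and Ricci parts that vanish since $\R_{3A}=\R_{4A}=0$. This leaves $[\nabb_3,\nabb_A]\Psi=-\tfrac12\kab\nabb_A\Psi$ and $[\nabb_3,\nabb_4]\Psi=2\omb\nabb_4\Psi-2\om\nabb_3\Psi$ for tensors of any rank, hence $\lapp(\underline{P}\Psi)-\underline{P}(\lapp\Psi)=r\lapp\Psi$, which is exactly the angular part of your $r\Box_\g\Psi$ term. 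With that supplied, your outline does close; as supporting evidence that your identity is the correct target, it checks out on $\Psi=1$ (both sides equal $-\tfrac12 r\rho-\tfrac14 r\ka\kab$) and on $\Psi=r$ (both sides equal $-\tfrac32 r^2\ka\kab-\tfrac32 r^2\rho-r^2\rhoF^2$, which in particular exercises the charge terms), and your Schwarzschild degeneration check against \cite{DHR} is a sensible further safeguard.
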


\subsection{Spin $+1$ Teukolsky-type equation for $\tilde{\b}$}

We derive here the spin $+1$ Teukolsky equation for the gauge invariant quantity $\tilde{\b}$.

\begin{proposition}\label{Teukolsky-tilde-b}[Generalized spin $+1$ Teukolsky equation in $l=1$ for $\tilde{\b}$] Let $\mathcal{S}$ be a linear gravitational and electromagnetic perturbation around Reissner-Nordstr{\"o}m. Consider the gauge-invariant curvature component $\tilde{\b}=2\rhoF \b-3\rho \bF$. Then $\tilde{\b}$ satisfies the following equation:
\beaa
\Box_\g (r^3 \tilde{\b})&=& -2\omb \nabb_4(r^3\tilde{\b})+\left(\ka+2\om\right) \nabb_3(r^3\tilde{\b})+\left(\frac {1}{4} \ka \kab-3\omb \ka+\om\kab-2\rho+3\rhoF^2-8\om\omb+2\nabb_3 \om\right) r^3 \tilde{\b}\\
&&-2r^3 \kab\rhoF^2 \left(\nabb_4 \bF+ (\frac 3 2 \ka+2\om)\bF-2\rhoF \xi \right)+8r^3 \rhoF^2 \divv \ff
\eeaa
\end{proposition}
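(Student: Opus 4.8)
The plan is to obtain the equation for the combination $\tilde{\b} = 2\rhoF\b - 3\rho\bF$ by first producing a Teukolsky-type wave equation for the curvature one-form $\b$ on its own, and then combining it with the equation for $\bF$ already recorded in Proposition \ref{Teukolsky-bF}. Since $\rhoF = Q/r^2$ and $\rho = -2M/r^3 + 2Q^2/r^4$ are fixed background functions of $r$, the rescaled quantity reads $r^3\tilde{\b} = 2Qr\,\b + (6M - 6Q^2/r)\bF$, so it suffices to compute $\Box_\g(r^n\b)$ and $\Box_\g(r^n\bF)$ for $n \in \{1,0,-1\}$ and assemble. For this last step I would invoke Lemma \ref{wave-rescaled} with $m=0$, which expresses $\Box_\g(r^n\Psi)$ in terms of $r^n\Box_\g\Psi$, the transport derivatives $r^n\nabb_3\Psi$ and $r^n\nabb_4\Psi$, and a zeroth-order term, so that no genuinely new computation is needed once the base equations for $\b$ and $\bF$ are in hand.

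First I would derive the wave equation for $\b$ by mimicking the derivation of Proposition \ref{Teukolsky-bF}. Using the form \eqref{formula-2-wave} of the wave operator from Lemma \ref{square-RN}, the only nontrivial ingredients are $\nabb_4\nabb_3\b$ and $\lapp\b$. The mixed second derivative is produced by applying $\nabb_4$ to the Bianchi identity \eqref{nabb-3-b}; the resulting $\nabb_4\eta$ and $\nabb_4\xi$ are eliminated through \eqref{nabb-3-xi} and the Maxwell equations, while the commutator $[\nabb_4,\nabb_3]$ contributes only lower-order curvature factors on the Reissner-Nordstr{\"o}m background. The angular Laplacian $\lapp\b$ is handled by observing that the source of \eqref{nabb-3-b} is built from $\DDs_1$ of the scalars $\rho,\sigma,\rhoF,\sigmaF$; applying a second angular operator and using the commutation relations \eqref{commutator-DDd-Box} together with the div--curl Bianchi and Maxwell equations \eqref{nabb-3-rho}, \eqref{nabb-3-sigma} converts these into $\divv\a$- and $\divv\bF$-type terms. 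Collecting everything yields $\Box_\g\b$ as a transport expression in $\nabb_3\b$, $\nabb_4\b$, $\b$ plus an $\rhoF$-weighted electromagnetic source and a $\divv\a$ source.

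With $\Box_\g\b$ and $\Box_\g\bF$ both available, I would substitute into the assembled expression coming from Lemma \ref{wave-rescaled} and simplify, inserting the explicit background values of $\rho$, $\rhoF$ and their derivatives. The decisive step is to show that the gauge-dependent sources reorganize into a single gauge-invariant term. Here the Codazzi equation \eqref{Codazzi-chi}, which relates $\divv\chih$ to $\b$, $\ze$, $\DDs_1(\ka,0)$ and $\rhoF\bF$, is used to trade the $\divv\a$ coming from $\b$ against the $\divv\chih$ coming from $\bF$; together with the identity $\divv\ff = \divv\DDs_2\bF + \rhoF\,\divv\chih$ (valid since $\rhoF$ is constant on each sphere) this collapses all remaining non-invariant angular sources into $8r^3\rhoF^2\divv\ff$, leaving only the explicit electromagnetic transport term $-2r^3\kab\rhoF^2\big(\nabb_4\bF + (\tfrac32\ka + 2\om)\bF - 2\rhoF\xi\big)$ untouched, exactly as claimed.

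The main obstacle is this final reorganization in the third step: it is precisely the computation encoding why $2\rhoF\b - 3\rho\bF$, rather than $\b$ or $\bF$ separately, is the correct gauge-invariant variable, and it requires that the various $\DDs_1$, $\DDs_2$ and $\divv$ source terms -- produced with different $r$-weights from the two constituent equations -- cancel down to a multiple of the gauge-invariant $\divv\ff$. Keeping the coefficient algebra consistent after substituting $\rho$ and $\rhoF$, and correctly tracking the first-order $\nabb_3$, $\nabb_4$ terms generated by Lemma \ref{wave-rescaled}, is the error-prone part; everything else is a mechanical, if lengthy, application of the structure equations of Section \ref{linearized-equations}.
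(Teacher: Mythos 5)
Your overall strategy (derive a standalone wave equation for $\b$, combine it with Proposition \ref{Teukolsky-bF} for $\bF$, and assemble via Lemma \ref{wave-rescaled}) is a legitimate alternative to the paper's route, which never writes down an equation for $\b$ alone: the paper instead derives the first-order transport equation $\nabb_4\tilde{\b} = -(3\ka+2\om)\tilde{\b} + 2\rhoF\divv\a + (2\rhoF^2-3\rho)\big(\nabb_4\bF + (\tfrac32\ka+2\om)\bF - 2\rhoF\xi\big)$ directly from \eqref{nabb-4-b}, \eqref{nabb-4-rhoF}, \eqref{nabb-4-rho}, applies $\nabb_3$ to it, and invokes Proposition \ref{Teukolsky-bF} only to evaluate $\nabb_4\nabb_3\bF$. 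However, your decisive third step contains a genuine error. The Codazzi equation \eqref{Codazzi-chi} reads $\divv\chih = \tfrac12\ka\ze - \tfrac12\DDs_1(\ka,0) - \b + \rhoF\bF$: it contains no $\a$ whatsoever, so it cannot be used to ``trade the $\divv\a$ coming from $\b$ against the $\divv\chih$ coming from $\bF$.'' Worse, invoking it would inject the gauge-dependent quantities $\ze$ and $\DDs_1(\ka,0)$, which appear nowhere in the target equation and for which no cancellation mechanism exists; note also that the final answer is \emph{not} free of $\divv\chih$ --- it sits inside $\divv\ff = \DDd_2\DDs_2\bF + \rhoF\divv\chih$ --- so eliminating $\divv\chih$ via Codazzi points in the wrong direction entirely.

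The two mechanisms that actually make the reorganization work are different. First, the $\divv\a$ terms cancel identically in the combination defining $\tilde{\b}$: in the paper's computation the coefficient of $\rhoF\divv\a$ collects to $-2(2\kab-4\omb) + 3\kab + (\kab - 8\omb) = 0$, and this algebraic cancellation (not a substitution through another equation) is precisely what singles out $2\rhoF\b - 3\rho\bF$ as the gauge-invariant variable; in your scheme the analogous statement is that the $\divv\a$ source of $\Box_\g(r\b)$, weighted by $2Q$, must vanish against nothing, so you would need to verify this cancellation internally when deriving your equation for $\b$ and choosing how its first-order terms are expressed. Second, the conversion of the residual angular terms into $\divv\ff$ does not use Codazzi but the elliptic identity \eqref{angular-operators}, $\DDd_2\DDs_2 = -\tfrac12\lapp_1 - \tfrac12 K$, which gives exactly \eqref{ff-in-terms-lapl-bF}: $-2(\lapp_1+K)\bF + 4\rhoF\divv\chih = 4\DDd_2\DDs_2\bF + 4\rhoF\DDd_2\chih = 4\divv\ff$. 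Without these two ingredients --- and with the Codazzi substitution in their place --- the computation you outline would not close, so as written the proposal has a gap at the step you yourself identify as the crux.
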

\begin{proof} We compute the following, using \eqref{nabb-4-rhoF}, \eqref{nabb-4-b}, \eqref{nabb-4-rho}:
\beaa
\nabb_4 \tilde{\b} +(3\ka+2\om) \tilde{\b}&=& \nabb_4 \left(2\rhoF \b-3\rho \bF\right) +(3\ka+2\om) (2\rhoF \b-3\rho \bF)\\
&=&  2\nabb_4\rhoF \b+2\rhoF \nabb_4\b-3\nabb_4\rho \bF-3\rho \nabb_4\bF +(3\ka+2\om) (2\rhoF \b-3\rho \bF)\\
&=&  2(-\ka \rhoF) \b+2\rhoF (-(2\ka +2\om)\b +\divv \a+3\rho \xi +\rhoF (\nabb_4 \bF+2\om\bF-2\rhoF \xi))\\
&&-3(-\frac 3 2 \ka \rho - \ka \rhoF^2) \bF-3\rho \nabb_4\bF+(3\ka+2\om) (2\rhoF \b-3\rho \bF)\\
&=&2\rhoF \divv \a +\left(2\rhoF^2 -3\rho\right)\left(\nabb_4 \bF+ \left(\frac 3 2 \ka+2\om\right)\bF -2\rhoF \xi\right)
\eeaa
We therefore obtain
\bea
\nabb_4 \tilde{\b} &=&-(3\ka+2\om) \tilde{\b}+2\rhoF \divv \a +\left(2\rhoF^2 -3\rho\right)\left(\nabb_4 \bF+ \left(\frac 3 2 \ka+2\om\right)\bF -2\rhoF \xi\right), \label{nabb-4-tilde-b-general}
\eea
We apply $\nabb_3$ to \eqref{nabb-4-tilde-b-general} to derive $\nabb_3\nabb_4 \tilde{\b}$. We consider the following pieces:
\beaa
\nabb_3\left(-(3\ka+2\om) \tilde{\b} \right)&=&-3 \nabb_3\ka \tilde{\b}-2 \nabb_3\om \tilde{\b}-(3\ka+2\om)  \nabb_3\tilde{\b}=\left(\frac 3 2 \ka \kab-6\omb \ka-6\rho-2\nabb_3 \om\right) \tilde{\b}-\left(3\ka+2\om\right)  \nabb_3\tilde{\b}, \\
\nabb_3 \left( 2\rhoF \divv \a\right)&=& 2\nabb_3\rhoF \divv \a+2\rhoF \nabb_3\divv \a\\
&=&-2\kab\rhoF\divv \a+2\rhoF ( -\left( \kab-4\omb \right)\divv\a-2 \divv\DDs_2\, \b-3\rho \divv\chih \\
&&-2\rhoF \ \left(\divv\DDs_2\bF +\rhoF\divv\chih \right))\\
&=&2\rhoF\left(-\left( 2\kab-4\omb \right)\divv\a+ (\lapp_1+ K) \b+(-3\rho-2\rhoF^2) \divv\chih +\rhoF \ ( \lapp_1+ K)\bF\right)
\eeaa
In the latter expression, we can eliminate $\b$ by writing $2\rhoF \b=\tilde{\b}+3\rho \bF$, which gives 
\beaa
\nabb_3 \left( 2\rhoF \divv \a\right)&=&2\rhoF\left(-\left( 2\kab-4\omb \right)\divv\a+(-3\rho-2\rhoF^2) \divv\chih +\rhoF \ ( \lapp_1+ K)\bF\right)\\
&&+ (\lapp_1+ K) (\tilde{\b}+3\rho \bF)\\
&=&2\rhoF\left(-\left( 2\kab-4\omb \right)\divv\a+(-3\rho-2\rhoF^2) \divv\chih \right)+ (\lapp_1+ K) \tilde{\b}\\
&&+\left(2\rhoF^2+3\rho \right)  ( \lapp_1+ K)\bF
\eeaa
The derivative $\nabb_3$ applied to the last term of\eqref{nabb-4-tilde-b-general} gives two terms. The first one is:
\beaa
&&\nabb_3\left(2\rhoF^2 -3\rho\right)\left(\nabb_4 \bF+ \left(\frac 3 2 \ka+2\om\right)\bF -2\rhoF \xi\right)\\
&=& \left(-\kab\rhoF^2 +\frac 9 2 \kab \rho\right)(\nabb_4 \bF+ (\frac 3 2 \ka+2\om)\bF-2\rhoF \xi )\\
&=& -\kab\rhoF^2 \left(\nabb_4 \bF+ (\frac 3 2 \ka+2\om)\bF-2\rhoF \xi \right)+ \frac 3 2 \kab (3\rho)\left(\nabb_4 \bF+ (\frac 3 2 \ka+2\om)\bF-2\rhoF \xi \right)
\eeaa
We simplify the term multiplied by $\rho$, writing $\left( 3\rho\right)\left(\nabb_4 \bF+ \left(\frac 3 2 \ka+2\om\right)\bF -2\rhoF \xi\right)=-\nabb_4 \tilde{\b} -(3\ka+2\om) \tilde{\b}+2\rhoF \divv \a +\left(2\rhoF^2 \right)\left(\nabb_4 \bF+ \left(\frac 3 2 \ka+2\om\right)\bF -2\rhoF \xi\right)$, so it gives 
\beaa
\nabb_3\left(2\rhoF^2 -3\rho\right)\left(\nabb_4 \bF+ \left(\frac 3 2 \ka+2\om\right)\bF -2\rhoF \xi\right)&=& 2\kab\rhoF^2 \left(\nabb_4 \bF+ (\frac 3 2 \ka+2\om)\bF-2\rhoF \xi \right)\\
&& -\frac 3 2 \kab\nabb_4 \tilde{\b} +\left(-\frac 9 2 \ka\kab-3\om\kab\right) \tilde{\b}+\rhoF\left(3 \kab \divv \a \right)
\eeaa
The second term is given by
\beaa
&&\left(2\rhoF^2 -3\rho\right)\nabb_3\left(\nabb_4 \bF+ \left(\frac 3 2 \ka+2\om\right)\bF -2\rhoF \xi\right)\\
&=&(2\rhoF^2 -3\rho)(\nabb_3\nabb_4 \bF+ \frac 3 2 \nabb_3\ka\bF+2\nabb_3\om \bF+ (\frac 3 2 \ka+2\om) \nabb_3\bF -2\nabb_3 \rhoF \xi -2\rhoF \nabb_3 \xi)\\
&=&(2\rhoF^2 -3\rho)(\nabb_4\nabb_3 \bF+2\omb \nabb_4\bF+  (-\frac 3 4 \ka \kab+3\omb \ka+3\rho+2\nabb_3\om)\bF+ \frac 3 2 \ka \nabb_3\bF+2\kab \rhoF \xi -2\rhoF \nabb_3 \xi )
\eeaa
By Proposition \ref{Teukolsky-bF}, we have that
\beaa
\nabb_4\nabb_3 \bF&=&\left(-\frac 1 4 \ka\kab-\om\kab+3\omb \ka-\rhoF^2+2\nabb_4\omb\right) \bF+\left(-\frac 1 2 \kab+2\omb\right) \nabb_4\bF-\frac 3 2 \ka \nabb_3 \bF  + \lapp_1\bF\\
&& +\rhoF\left(-2\divv \chih -4\b+2\nabb_3\xi-\kab \xi-8\omb\xi \right)
\eeaa
 and writing $2\rhoF \b=\tilde{\b}+3\rho \bF$ as above, we have 
 \beaa
\nabb_4\nabb_3 \bF&=&\left(-\frac 1 4 \ka\kab-\om\kab+3\omb \ka-6\rho-\rhoF^2+2\nabb_4\omb\right) \bF+\left(-\frac 1 2 \kab+2\omb\right) \nabb_4\bF-\frac 3 2 \ka \nabb_3 \bF  + \lapp_1\bF\\
&& +\rhoF\left(-2\divv \chih+2\nabb_3\xi-\kab \xi-8\omb\xi \right)-2\tilde{\b}
\eeaa
We therefore obtain:
\beaa
&&\left(2\rhoF^2 -3\rho\right)\nabb_3\left(\nabb_4 \bF+ \left(\frac 3 2 \ka+2\om\right)\bF -2\rhoF \xi\right)\\
&=&(2\rhoF^2 -3\rho)\Big[\left(- \ka\kab-\om\kab+6\omb \ka-3\rho-\rhoF^2+2\nabb_4\omb+2\nabb_3\om\right) \bF+\left(-\frac 1 2 \kab+4\omb\right) \nabb_4\bF  + \lapp_1\bF\\
&& +\rhoF\left(-2\divv \chih+\kab \xi-8\omb\xi \right)-2\tilde{\b}  \Big]
\eeaa
The term $(2\rhoF^2 -3\rho)\left(-\frac 1 2 \kab+4\omb\right) \nabb_4\bF =\left(-\frac 1 2 \kab+4\omb\right)(2\rhoF^2 -3\rho)\nabb_4\bF$ can be simplified by using $\left(2\rhoF^2 -3\rho\right)\nabb_4 \bF=\nabb_4 \tilde{\b} +(3\ka+2\om) \tilde{\b}-2\rhoF \divv \a-\left(2\rhoF^2 -3\rho\right)\left((\frac 3 2 \ka+2\om)\bF -2\rhoF \xi\right) $, which finally gives, using \eqref{nabb-4-omb}
\beaa
&&\left(2\rhoF^2 -3\rho\right)\nabb_3\left(\nabb_4 \bF+ \left(\frac 3 2 \ka+2\om\right)\bF -2\rhoF \xi\right)\\
&=&(2\rhoF^2 -3\rho)\Big[\left(- \ka\kab-\om\kab+6\omb \ka-3\rho-\rhoF^2+2\nabb_4\omb+2\nabb_3\om\right) \bF+ \lapp_1\bF\\
&& +\rhoF\left(-2\divv \chih+\kab \xi-8\omb\xi \right)-2\tilde{\b}  \Big]\\
&&+\left(-\frac 1 2 \kab+4\omb\right)(\nabb_4 \tilde{\b} +(3\ka+2\om) \tilde{\b}-2\rhoF \divv \a-\left(2\rhoF^2 -3\rho\right)\left((\frac 3 2 \ka+2\om)\bF -2\rhoF \xi\right))\\
&=&(2\rhoF^2 -3\rho)\Big[\left(- \frac 1 4 \ka\kab-\rho+\rhoF^2\right) \bF+ \lapp_1\bF +\rhoF\left(-2\divv \chih \right)-2\tilde{\b}  \Big]\\
&&+\left(-\frac 1 2 \kab+4\omb\right)\nabb_4 \tilde{\b} +\left(-\frac 3 2 \ka\kab-\om\kab+12\omb\ka+8\om\omb\right) \tilde{\b}+\rhoF((\kab-8\omb) \divv \a)
\eeaa
We can finally put all this together and compute $\nabb_3\nabb_4\tilde{\b}$, using \eqref{Gauss}:
\beaa
\nabb_3\nabb_4\tilde{\b}&=& \left(\frac 3 2 \ka \kab-6\omb \ka-6\rho-2\nabb_3 \om\right) \tilde{\b}-\left(3\ka+2\om\right)  \nabb_3\tilde{\b}\\
&&+2\rhoF\left(-\left( 2\kab-4\omb \right)\divv\a+(-3\rho-2\rhoF^2) \divv\chih \right)+ (\lapp_1+ K) \tilde{\b}+\left(2\rhoF^2+3\rho \right)  ( \lapp_1+ K)\bF\\
&&+2\kab\rhoF^2 \left(\nabb_4 \bF+ (\frac 3 2 \ka+2\om)\bF-2\rhoF \xi \right)-\frac 3 2 \kab\nabb_4 \tilde{\b} +\left(-\frac 9 2 \ka\kab-3\om\kab\right) \tilde{\b}+\rhoF\left(3 \kab \divv \a \right)\\
&&(2\rhoF^2 -3\rho)\Big[( \lapp_1+K)\bF +\rhoF\left(-2\divv \chih \right)-2\tilde{\b}  \Big]\\
&&+\left(-\frac 1 2 \kab+4\omb\right)\nabb_4 \tilde{\b} +\left(-\frac 3 2 \ka\kab-\om\kab+12\omb\ka+8\om\omb\right) \tilde{\b}+\rhoF((\kab-8\omb) \divv \a)\\
&=& \left(-\frac {19}{4} \ka \kab+6\omb \ka-4\om\kab-\rho-3\rhoF^2+8\om\omb-2\nabb_3 \om\right) \tilde{\b}-\left(3\ka+2\om\right)  \nabb_3\tilde{\b}+\left(-2 \kab+4\omb\right)\nabb_4 \tilde{\b}+ \lapp_1 \tilde{\b}\\
&&+2\rhoF^2\Big[2 ( \lapp_1+ K)\bF-4\rhoF\divv\chih +\kab \left(\nabb_4 \bF+ (\frac 3 2 \ka+2\om)\bF-2\rhoF \xi \right)\Big]
\eeaa
We can therefore compute the wave equation using \eqref{formula-1-wave} :
\beaa
\Box_\g\tilde{\b}&=& - \nabb_{3}\nabb_{4} \tilde{\b} +\lapp_1 \tilde{\b}+\left(-\frac 12 \kab+2\omb\right) \nabb_4\tilde{\b}-\frac 1 2 \ka\nabb_3\tilde{\b}\\
&=& \left(\frac {19}{4} \ka \kab-6\omb \ka+4\om\kab+\rho+3\rhoF^2-8\om\omb+2\nabb_3 \om\right) \tilde{\b}+\left(\frac 5 2 \ka+2\om\right)  \nabb_3\tilde{\b}+\left(\frac 3 2  \kab-2\omb\right)\nabb_4 \tilde{\b}\\
&&+2\rhoF^2\Big[-2 ( \lapp_1+ K)\bF+4\rhoF\divv\chih -\kab \left(\nabb_4 \bF+ (\frac 3 2 \ka+2\om)\bF-2\rhoF \xi \right)\Big]
\eeaa
Using Lemma \ref{wave-rescaled} for $n=3$, $m=0$, we compute $\Box_\g (r^3 \tilde{\b})$: 
\beaa
\Box_\g (r^3 \tilde{\b})&=& \big( -3\ka\kab -3 \rho\big)r^3  \tilde{\b}+r^3 \Box_\g(\tilde{\b})-\frac 3 2 \kab r^3\nabb_4( \tilde{\b})-\frac {3}{ 2} \ka  r^3\nabb_3(\tilde{\b})\\
&=& \left(\frac {7}{4} \ka \kab-6\omb \ka+4\om\kab-2\rho+3\rhoF^2-8\om\omb+2\nabb_3 \om\right) r^3 \tilde{\b}+\left(\ka+2\om\right) r^3  \nabb_3\tilde{\b}-2\omb r^3\nabb_4  \tilde{\b}\\
&&+2r^3 \rhoF^2\Big[-2 ( \lapp_1+ K)\bF+4\rhoF\divv\chih -\kab \left(\nabb_4 \bF+ (\frac 3 2 \ka+2\om)\bF-2\rhoF \xi \right)\Big]
\eeaa
and using \eqref{rescaled-derivatives-r-kab}, we obtain
\beaa
\Box_\g (r^3 \tilde{\b})&=& -2\omb \nabb_4(r^3\tilde{\b})+\left(\ka+2\om\right) \nabb_3(r^3\tilde{\b})+\left(\frac {1}{4} \ka \kab-3\omb \ka+\om\kab-2\rho+3\rhoF^2-8\om\omb+2\nabb_3 \om\right) r^3 \tilde{\b}\\
&&+2r^3 \rhoF^2\Big[-2 ( \lapp_1+ K)\bF+4\rhoF\divv\chih -\kab \left(\nabb_4 \bF+ (\frac 3 2 \ka+2\om)\bF-2\rhoF \xi \right)\Big]
\eeaa 
Consider the term $-2 ( \lapp_1+ K)\bF+4\rhoF\divv\chih$. Using \eqref{angular-operators} and recalling the definition of the symmetric traceless $2$-tensor $\ff=\DDs_2 \bF+\rhoF \chih$ in \cite{Giorgi4}, we have
\bea\label{ff-in-terms-lapl-bF}
\begin{split}
-2 ( \lapp_1+ K)\bF+4\rhoF\divv\chih&= -2 ( -2\DDd_2 \DDs_2-K+ K)\bF+4\rhoF\DDd_2\chih\\
&= 4\DDd_2 \DDs_2 \bF+4\rhoF \DDd_2 \chih=4\divv \ff
\end{split}
\eea
which gives the desired expression.
\end{proof}

\begin{corollary}\label{square-psi5} The derived quantity $\psi_5=r^4\kab \tilde{\b}$ verifies the following wave equation:
\beaa
\Box_\g \psi_5&=& \left(-\frac{ 1}{ 4} \ka\kab+5 \rhoF^2\right)\psi_5+\frac 1 r \left(\ka\kab-2\rho\right) \psi_6\\
&&+ 2r^4 \kab\rhoF^2\Big[4\divv \ff -\kab \left(\nabb_4 \bF+ (\frac 3 2 \ka+2\om)\bF-2\rhoF \xi \right)\Big]
\eeaa
\end{corollary}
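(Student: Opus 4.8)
The plan is to obtain the wave equation for $\psi_5 = r^4\kab\tilde{\b}$ directly from the Teukolsky equation of Proposition \ref{Teukolsky-tilde-b} by a single rescaling, and then to trade the one surviving $\nabb_3$-derivative for the derived quantity $\psi_6 = \underline{P}(\psi_5)$. First I would write $\psi_5 = (r\kab)\,(r^3\tilde{\b})$ and apply Lemma \ref{wave-rescaled} with base tensor $\Psi = r^3\tilde{\b}$ and exponents $n=m=1$. This expresses $\Box_\g\psi_5$ as a combination of $r\kab\,\Box_\g(r^3\tilde{\b})$, $r\kab\,\nabb_4(r^3\tilde{\b})$, $r\kab\,\nabb_3(r^3\tilde{\b})$ and $\psi_5$, with the explicit zeroth-order coefficients the lemma produces at $n=m=1$: the $\ka\kab$ contribution vanishes, the coefficient multiplying $\psi_5$ is $-\om\kab+\rho+2\rhoF^2+\omb\ka+8\om\omb+4\rho\omb\kab^{-1}-2\nabb_3\om$, and the derivative coefficients are $2\omb$ (for $\nabb_4$) and $-2\om-2\rho\kab^{-1}$ (for $\nabb_3$).

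Next I would substitute the Teukolsky equation of Proposition \ref{Teukolsky-tilde-b} for $\Box_\g(r^3\tilde{\b})$. Two simplifications then occur. The $\nabb_4(r^3\tilde{\b})$ contributions — $+2\omb\,r\kab\,\nabb_4(r^3\tilde{\b})$ from the rescaling and $-2\omb\,r\kab\,\nabb_4(r^3\tilde{\b})$ from the Teukolsky right-hand side — cancel identically, while the two $\nabb_3$-terms combine into $(\ka\kab-2\rho)\,r\,\nabb_3(r^3\tilde{\b})$. Simultaneously the electromagnetic source $-2r^3\kab\rhoF^2\big(\nabb_4\bF+(\tfrac32\ka+2\om)\bF-2\rhoF\xi\big)+8r^3\rhoF^2\divv\ff$ is merely multiplied through by $r\kab$, yielding $2r^4\kab\rhoF^2\big[4\divv\ff-\kab(\nabb_4\bF+(\tfrac32\ka+2\om)\bF-2\rhoF\xi)\big]$, which is exactly the bracketed right-hand side in the claim.

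To turn $(\ka\kab-2\rho)\,r\,\nabb_3(r^3\tilde{\b})$ into the stated $\tfrac1r(\ka\kab-2\rho)\psi_6$, I would invoke the definition $\psi_6=\underline{P}(\psi_5)=\kab^{-1}\nabb_3(r\psi_5)$ from \eqref{quantities-1} and \eqref{operators}. Expanding $\nabb_3(r\psi_5)=\nabb_3\!\big(r^2\kab\cdot r^3\tilde{\b}\big)$ with the background transport identities $\nabb_3 r=\tfrac12 r\kab$ and $\nabb_3\kab=-\tfrac12\kab^2-2\omb\kab$ — the latter being the linearization of \eqref{nabb-3-kab} after discarding the quadratic term $\divv\xib$ — gives $\psi_6=r^2\nabb_3(r^3\tilde{\b})+\big(\tfrac12 r^2\kab-2\omb r^2\big)(r^3\tilde{\b})$, hence $r\,\nabb_3(r^3\tilde{\b})=\tfrac1r\psi_6-\big(\tfrac12-2\omb\kab^{-1}\big)\psi_5$. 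Thus $(\ka\kab-2\rho)\,r\,\nabb_3(r^3\tilde{\b})$ contributes the desired $\tfrac1r(\ka\kab-2\rho)\psi_6$ together with the additional $\psi_5$-piece $\big(-\tfrac12\ka\kab+\rho+2\omb\ka-4\rho\omb\kab^{-1}\big)\psi_5$.

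Finally I would collect the coefficient of $\psi_5$ from its three sources — the rescaling coefficient above, the Teukolsky potential $\tfrac14\ka\kab-3\omb\ka+\om\kab-2\rho+3\rhoF^2-8\om\omb+2\nabb_3\om$ carried in through $r\kab\,\Box_\g(r^3\tilde{\b})$, and the $\psi_5$-piece just extracted — and check term by term that the $\om\kab$ ($-1+1$), $\rho$ ($1-2+1$), $\omb\ka$ ($1-3+2$), $\om\omb$ ($8-8$), $\rho\omb\kab^{-1}$ ($4-4$) and $\nabb_3\om$ ($-2+2$) contributions all cancel, while the $\ka\kab$ coefficient reduces to $\tfrac14-\tfrac12=-\tfrac14$ and the $\rhoF^2$ coefficient to $2+3=5$, leaving precisely $-\tfrac14\ka\kab+5\rhoF^2$. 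The main obstacle is exactly this last cancellation: the computation hinges on every $\om$-, $\omb$- and $\nabb_3\om$-dependent zeroth-order term disappearing, which is what collapses the potential to the clean Fackerell-Ipser form and is the delicate part to verify without arithmetic slips.
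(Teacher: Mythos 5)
Your proposal is correct and follows essentially the same route as the paper's proof: apply Lemma \ref{wave-rescaled} with $n=m=1$ to $\psi_5=r\kab\,(r^3\tilde{\b})$, substitute the Teukolsky equation of Proposition \ref{Teukolsky-tilde-b} (cancelling the $\nabb_4$ terms and combining the $\nabb_3$ terms into $(\ka\kab-2\rho)\,r\,\nabb_3(r^3\tilde{\b})$), and then trade that derivative for $\psi_6$ via $r\kab\,\nabb_3(r^3\tilde{\b})=\tfrac1r\kab\,\psi_6+\left(-\tfrac12\kab+2\omb\right)\psi_5$, after which the zeroth-order coefficients collapse to $-\tfrac14\ka\kab+5\rhoF^2$ exactly as you computed. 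The only cosmetic difference is that you derive the $\psi_6$-substitution identity by hand from $\nabb_3 r=\tfrac12 r\kab$ and the background equation for $\nabb_3\kab$, whereas the paper invokes \eqref{rescaled-derivatives-r-kab}; the resulting identity and all coefficient bookkeeping are identical.
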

\begin{proof} Using Lemma \ref{wave-rescaled} for $\psi_5=r\kab (r^3 \tilde{\b})$ with $n=m=1$, we obtain 
\beaa
\Box_\g (\psi_5)&=& \big( -\om\kab+ \rho+2 \rhoF^2+ \omb\ka+8 \om\omb+4\rho\omb \kab^{-1}-2 \nabb_3\om\big)\psi_5+r\kab\Box_\g( r^3 \tilde{\b})\\
&&+\left(2\omb\right)r\kab\nabb_4(r^3\tilde{\b})+\left(-2\om-2\rho\kab^{-1}\right) r\kab\nabb_3(r^3\tilde{\b})\\
&=& \big( -\om\kab+ \rho+2 \rhoF^2+ \omb\ka+8 \om\omb+4\rho\omb \kab^{-1}-2 \nabb_3\om\big)\psi_5\\
&&+r\kab\Big[-2\omb \nabb_4(r^3\tilde{\b})+\left(\ka+2\om\right) \nabb_3(r^3\tilde{\b})+\left(\frac {1}{4} \ka \kab-3\omb \ka+\om\kab-2\rho+3\rhoF^2-8\om\omb+2\nabb_3 \om\right) r^3 \tilde{\b}\\
&&+2r^3 \rhoF^2\Big[4\divv \ff -\kab \left(\nabb_4 \bF+ (\frac 3 2 \ka+2\om)\bF-2\rhoF \xi \right)\Big]\Big]\\
&&+\left(2\omb\right)r\kab\nabb_4(r^3\tilde{\b})+\left(-2\om-2\rho\kab^{-1}\right) r\kab\nabb_3(r^3\tilde{\b})\\
&=& \left(\frac {1}{4} \ka \kab-  \rho+5 \rhoF^2-2 \omb\ka+4\rho\omb \kab^{-1}\right)\psi_5+r(\kab\ka-2\rho) \nabb_3(r^3\tilde{\b}) \\
&&+ 2r^4 \kab\rhoF^2\Big[4\divv \ff -\kab \left(\nabb_4 \bF+ (\frac 3 2 \ka+2\om)\bF-2\rhoF \xi \right)\Big]
\eeaa
and writing $r \kab \nabb_3(r^3\tilde{\b})=\nabb_3(\psi_5)+2 \omb \psi_5=\frac 1 r \kab \psi_6+(-\frac 1 2  \kab+2\omb) \psi_5$, we obtain
\beaa
\Box_\g (\psi_5)&=& \big(\frac{ 1}{ 4} \ka\kab - \rho+5 \rhoF^2-2 \omb\ka+4\rho\omb \kab^{-1}\big)\psi_5+(\ka-2\rho \kab^{-1}) (\frac 1 r \kab \psi_6+(-\frac 1 2  \kab+2\omb) \psi_5)\\
&&+ 2r^4 \kab\rhoF^2\Big[4\divv \ff -\kab \left(\nabb_4 \bF+ (\frac 3 2 \ka+2\om)\bF-2\rhoF \xi \right)\Big]\\
&=& \left(-\frac{ 1}{ 4} \ka\kab+5 \rhoF^2\right)\psi_5+\frac 1 r \left(\ka\kab-2\rho\right) \psi_6\\
&&+ 2r^4 \kab\rhoF^2\Big[4\divv \ff -\kab \left(\nabb_4 \bF+ (\frac 3 2 \ka+2\om)\bF-2\rhoF \xi \right)\Big]
\eeaa
as desired.
\end{proof}

\section{Derivation of the generalized Fackerell-Ipser equation for $\pf$ in $\ell=1$}\label{derivation-fackerell}
Recall that $\pf=\psi_6=\underline{P}(\psi_5)$. We make use of Lemma \ref{generalwavePhi} and Corollary \ref{square-psi5} to derive here the generalized Fackerell-Ipser equation for $\pf$.

\begin{proposition}\label{wave-eq-pf}[Generalized Fackerell-Ipser equation for $\pf$ in $\ell=1$] Let $\mathcal{S}$ be a linear gravitational and electromagnetic perturbation around Reissner-Nordstr{\"o}m. Consider the gauge-invariant derived quantity $\pf$. Then $\pf$ satisfies the following equation:
\beaa
\Box_\g\pf+\left(\frac 1 4 \ka\kab-5\rhoF^2\right)\pf&=&8r^2 \rhoF^2\divv(\qf^\F)
\eeaa
\end{proposition}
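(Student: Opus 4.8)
The plan is to derive the equation for $\pf = \psi_6 = \underline{P}(\psi_5)$ by feeding the wave equation for $\psi_5 = r^4\kab\tilde{\b}$ from Corollary \ref{square-psi5} into the general transformation Lemma \ref{generalwavePhi}. Matching Corollary \ref{square-psi5} against the template \eqref{wave-eq-ABC}, I would read off
\[
A = 0, \qquad B = -\tfrac 1 4\ka\kab + 5\rhoF^2, \qquad C = \tfrac 1 r\left(\ka\kab - 2\rho\right),
\]
and the inhomogeneity $M = 8r^4\kab\rhoF^2\divv\ff - 2r^4\kab^2\rhoF^2\, G$, where I abbreviate $G := \nabb_4\bF + (\tfrac 3 2\ka + 2\om)\bF - 2\rhoF\xi$, the (gauge-dependent) one-form appearing in both the Teukolsky equation for $r^3\tilde{\b}$ and in Corollary \ref{square-psi5}. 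Lemma \ref{generalwavePhi} then expresses $\Box_\g\pf$ as a combination of $\underline{P}^{-1}(\psi_5)$, $\psi_5$, $\pf$, $\underline{P}(\pf)$ and the transformed inhomogeneity $\kab^{-1}r\nabb_3 M + \tfrac 3 2 rM$.

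Two features make the output a Fackerell-Ipser rather than a Teukolsky equation. Since $A = 0$ the $\underline{P}^{-1}(\psi_5)$ term is absent; and --- the structural heart of the transformation --- the coefficient of the third-order term $\underline{P}(\pf) = \underline{P}^2(\psi_5)$ is $C + \tfrac 1 r(-\ka\kab + 2\rho) = 0$, so no derivative of $\pf$ beyond $\Box_\g$ survives. I would then evaluate the coefficient of $\pf$, namely $B + \kab^{-1}r\nabb_3 C + rC + \tfrac 1 2\ka\kab - 4\rho - 2\rhoF^2$; computing $\nabb_3 C$ with the background transport equations \eqref{nabb-3-ka}, \eqref{nabb-3-kab}, \eqref{nabb-3-rho}, \eqref{nabb-3-rhoF}, this collapses to $-\tfrac 1 4\ka\kab + 5\rhoF^2$, i.e. exactly the potential $-(\tfrac 1 4\ka\kab - 5\rhoF^2)$. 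A parallel short computation gives the coefficient of $\psi_5$ as $\kab^{-1}r\nabb_3 B + rB + \tfrac 1 2 r\rho + r\rhoF^2 = -4r\rhoF^2$.

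The remaining task, which I expect to be the main obstacle, is to show that this leftover $-4r\rhoF^2\psi_5$ together with $\kab^{-1}r\nabb_3 M + \tfrac 3 2 rM$ collapses to the single term $8r^2\rhoF^2\divv\qf^\F$. Splitting $M$ into its $\divv\ff$ and $G$ parts, the $G$-part yields $-2r^5\kab\rhoF^2\big(\nabb_3 G + \tfrac 1 2(\kab - 8\omb)G\big)$, so the crux is the transport identity
\[
\nabb_3 G + \tfrac 1 2(\kab - 8\omb)G = -2\divv\ff - 2\tilde{\b}.
\]
I would prove this by differentiating $G$, eliminating $\nabb_3\nabb_4\bF$ through the spin $+1$ Teukolsky equation for $\bF$ (Proposition \ref{Teukolsky-bF}), using \eqref{nabb-4-omb} and the substitution $2\rhoF\b = \tilde{\b} + 3\rho\bF$, and finally invoking the Gauss equation \eqref{Gauss} to cancel the residual $\bF$-terms and the identity \eqref{ff-in-terms-lapl-bF} in the form $\lapp_1\bF - 2\rhoF\divv\chih = -2\divv\ff - K\bF$. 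Since $\psi_5 = r^4\kab\tilde{\b}$, the $\tilde{\b}$-term cancels $-4r\rhoF^2\psi_5$ precisely and the $G$-part reduces to $4r^5\kab\rhoF^2\divv\ff$. Adding the $\divv\ff$-part of $M$ and using $\qf^\F = \underline{P}(r^2\kab\ff)$ together with the background commutator $[\divv,\nabb_3] = \tfrac 1 2\kab\divv$ on symmetric traceless two-tensors --- which gives $\divv\qf^\F = \underline{P}(r^2\kab\divv\ff) + \tfrac 1 2 r^3\kab\divv\ff$ --- the total assembles into $8r^2\rhoF^2\divv\qf^\F$. The points needing the most care are the exact constants in the identity for $G$ and the precise form of the $\divv$--$\nabb_3$ commutator, since the final coefficient $8$ is sensitive to both.
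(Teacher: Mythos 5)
Your proposal is correct and follows essentially the same route as the paper: it feeds Corollary \ref{square-psi5} into Lemma \ref{generalwavePhi} with the same $A$, $B$, $C$, $M$, obtains the same coefficients $-\frac14\ka\kab+5\rhoF^2$ for $\pf$ and $-4r\rhoF^2$ for $\psi_5$, and your transport identity $\nabb_3 G+\frac12(\kab-8\omb)G=-2\divv\ff-2\tilde{\b}$ is exactly the computation the paper performs (inside the proof of Proposition \ref{Teukolsky-tilde-b}) via the Teukolsky equation for $\bF$, \eqref{nabb-4-omb}, the substitution $2\rhoF\b=\tilde{\b}+3\rho\bF$, Gauss, and \eqref{ff-in-terms-lapl-bF}. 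The only (valid) deviation is in the final assembly: the paper substitutes the transport equation for $\ff$ to reach the explicit formula $\qf^\F=-r^3\DDs_2\DDs_1(\rhoF,\sigmaF)-\frac12 r^3\rhoF(\kab\chih+\ka\chibh)$, whereas you identify $\divv\qf^\F$ directly from the definitional relation $\qf^\F=\underline{P}(r^2\kab\ff)$ and the commutator $\nabb_3\divv\ff=\divv\nabb_3\ff-\frac12\kab\divv\ff$ (which the paper itself uses), an equivalent bookkeeping of the same terms.
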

\begin{proof} By Corollary \ref{square-psi5}, we have that $\psi_5$ verifies a wave equation of the form \eqref{wave-eq-ABC} with $A=0$, $B=-\frac{ 1}{ 4} \ka\kab+5 \rhoF^2$, $C=\frac 1 r \left(\ka\kab-2\rho\right)$ and $M$ is given by 
\beaa
M&=& 2r^4 \kab\rhoF^2\Big[4\divv \ff -\kab \left(\nabb_4 \bF+ (\frac 3 2 \ka+2\om)\bF-2\rhoF \xi \right)\Big]
\eeaa
By Lemma \ref{generalwavePhi}, we compute
\beaa
&&\Box_\g(\pf)=\left(\kab^{-1}r\nabb_3(-\frac{ 1}{ 4} \ka\kab+5 \rhoF^2)+ r (-\frac{ 1}{ 4} \ka\kab+5 \rhoF^2)+\frac 1 2 r\rho+r\rhoF^2  \right)\ \psi_5 \\
&&+\left( -\frac{ 1}{ 4} \ka\kab+5 \rhoF^2+\kab^{-1}r\nabb_3(\frac 1 r \left(\ka\kab-2\rho\right))+ r \frac 1 r \left(\ka\kab-2\rho\right)+\frac 1 2 \ka\kab-4\rho-2\rhoF^2\right)\ \pf\\
&&+ \left( \frac 1 r \left(\ka\kab-2\rho\right)+\frac{1}{r}\left( -\ka\kab+2\rho\right) \right)\ \underline{P}\pf+\kab^{-1}r\nabb_3 M  +\frac 3 2r M\\
&=&\left(\kab^{-1}r(\frac 1 4 \ka\kab^2-\frac 1 2 \kab \rho-10\kab \rhoF^2)+ r (-\frac{ 1}{ 4} \ka\kab+5 \rhoF^2)+\frac 1 2 r\rho+r\rhoF^2  \right)\ \psi_5 \\
&&+\left( -\frac{ 1}{ 4} \ka\kab+5 \rhoF^2-\frac 1 2 (\ka\kab-2\rho)+\kab^{-1}(-\ka\kab^2+5\kab\rho+2\kab\rhoF^2)+ r \frac 1 r \left(\ka\kab-2\rho\right)+\frac 1 2 \ka\kab-4\rho-2\rhoF^2\right)\ \pf\\
&&+\kab^{-1}r\nabb_3 M  +\frac 3 2r M
\eeaa
which gives 
\beaa
\Box_\g\pf&=&\left( -\frac{ 1}{ 4} \ka\kab+5 \rhoF^2\right)\ \pf+\left(-4r\rhoF^2  \right)\ \psi_5+\kab^{-1}r\nabb_3 M  +\frac 3 2r M
\eeaa
We compute now the right hand side $\kab^{-1}r\nabb_3(M)+\frac 3 2 r M$. Recalling that $\rhoF=\frac{Q}{r^2}$, we write $M$ as
\beaa
M&=&Q^2\kab\Big( 8\divv \ff\Big) -2\kab^2Q^2 \left(\nabb_4 \bF+ \left(\frac 3 2 \ka+2\om\right)\bF -2\rhoF \xi\right) \\
&=& Q^2 (M_1+M_2)
\eeaa
with 
\beaa
M_1&=& 8\kab \ \divv \ff, \\
M_2&=&-2\kab^2 \left(\nabb_4 \bF+ \left(\frac 3 2 \ka+2\om\right)\bF -2\rhoF \xi\right) \\
\eeaa
We compute separately:
\beaa
\kab^{-1}r\nabb_3(M_1)+\frac 3 2 r M_1&=&\kab^{-1}r\nabb_3(8\kab \ \divv \ff)+12 r \kab \ \divv \ff=8\kab^{-1}r\nabb_3(\kab) \ \divv \ff+8\kab^{-1}r\kab \ \nabb_3(\divv \ff)+12 r \kab \ \divv \ff\\
&=&8\kab^{-1}r(-\frac 1 2 \kab^2-2\omb\kab) \ \divv \ff+8\kab^{-1}r\kab \ (\divv \nabb_3 \ff-\frac 1 2 \kab \divv \ff)+12 r \kab \ \divv \ff
\eeaa
Recall that\footnote{Equation (238) in \cite{Giorgi4}} 
\beaa
\nabb_3(\ff)+\left( \kab -2 \omb\right)\ff&=&  -\DDs_2\DDs_1(\rhoF, \sigmaF) -\frac 12\rhoF \left(  \kab \chih + \ka \chibh \right)
\eeaa
therefore we obtain
\beaa
\kab^{-1}r\nabb_3(M_1)+\frac 3 2 r M_1&=&r(4 \kab-16\omb) \ \divv \ff+8r \ \divv (-\left( \kab -2 \omb\right)\ff -\DDs_2\DDs_1(\rhoF, \sigmaF) -\frac 12\rhoF \left(  \kab \chih + \ka \chibh \right))\\
&=&-4 r\kab \ \divv \ff -8 r \DDd_2\DDs_2\DDs_1(\rhoF, \sigmaF) -4r\rhoF \left(  \kab \divv\chih + \ka \divv\chibh \right)
\eeaa
We now compute the second term:
\beaa
\kab^{-1}r\nabb_3(M_2)+\frac 3 2 r M_2&=& \kab^{-1}r\nabb_3(-2\kab^2   \left(\nabb_4 \bF+ \left(\frac 3 2 \ka+2\om\right)\bF -2\rhoF \xi \right))\\
&&+\frac 3 2 r (-2\kab^2   \left(\nabb_4 \bF+ \left(\frac 3 2 \ka+2\om\right)\bF -2\rhoF \xi\right))\\
&=& -4r\nabb_3(\kab)   \left(\nabb_4 \bF+ \left(\frac 3 2 \ka+2\om\right)\bF -2\rhoF \xi \right)\\
&&-2\kab  r\nabb_3  \left(\nabb_4 \bF+ \left(\frac 3 2 \ka+2\om\right)\bF -2\rhoF \xi\right)\\
&&+\frac 3 2 r (-2\kab^2   \left(\nabb_4 \bF+ \left(\frac 3 2 \ka+2\om\right)\bF -2\rhoF \xi \right))
\eeaa
which gives, using computations in Proposition \ref{Teukolsky-tilde-b}, 
\beaa
\kab^{-1}r\nabb_3(M_2)+\frac 3 2 r M_2&=& -4r(-\frac 1 2 \kab^2-2\omb\kab)   \left(\nabb_4 \bF+ \left(\frac 3 2 \ka+2\om\right)\bF -2\rhoF \xi \right)\\
&&-2\kab  r  \Big[\left(- \ka\kab-\om\kab+6\omb \ka-3\rho-\rhoF^2+2\nabb_4\omb+2\nabb_3\om\right) \bF+\left(-\frac 1 2 \kab+4\omb\right) \nabb_4\bF \\
&& + \lapp_1\bF+\rhoF\left(-2\divv \chih+\kab \xi-8\omb\xi \right)-2\tilde{\b}  \Big]\\
&&+\frac 3 2 r (-2\kab^2   \left(\nabb_4 \bF+ \left(\frac 3 2 \ka+2\om\right)\bF -2\rhoF \xi \right))\\
&=&-2\kab  r  \Big[ (\lapp_1+K)\bF+\rhoF\left(-2\divv \chih \right)-2\tilde{\b}  \Big]
\eeaa
Using again \eqref{ff-in-terms-lapl-bF}, we obtain
\beaa
\kab^{-1}r\nabb_3(M_2)+\frac 3 2 r M_2&=&4\kab  r  \left( \divv \ff+\tilde{\b}  \right)
\eeaa
Putting the two pieces together, we obtain 
\beaa
\kab^{-1}r\nabb_3(M)+\frac 3 2 r M&=& Q^2(-4 r\kab \ \divv \ff -8 r \DDd_2\DDs_2\DDs_1(\rhoF, \sigmaF) -4r\rhoF \left(  \kab \divv\chih + \ka \divv\chibh \right))\\
&&+ Q^2(4\kab  r  \left( \divv \ff+\tilde{\b}  \right))\\
&=& Q^2( 8r \divv(- \DDs_2\DDs_1(\rhoF, \sigmaF) -\frac 1 2 \rhoF \left(  \kab \chih + \ka \chibh \right)))+ Q^2(4\kab  r \tilde{\b}  )
\eeaa
Using that 
\beaa
\qf^\F&=& -r^3\DDs_2\DDs_1(\rhoF, \sigmaF) -\frac 12 r^3\rhoF \left(  \kab \chih + \ka \chibh \right)
\eeaa
we can finally write 
\beaa
\kab^{-1}r\nabb_3(M)+\frac 3 2 r M&=& \frac{Q^2}{r^2}( 8\divv(\qf^\F))+ Q^2(4\kab  r \tilde{\b}  )=8r^2 \rhoF^2\divv(\qf^\F)+4 r\rhoF^2 \psi_5
\eeaa
Finally the equation simplifies to 
\beaa
\Box_\g\pf&=&\left( -\frac{ 1}{ 4} \ka\kab+5 \rhoF^2\right)\ \pf+\left(-4r\rhoF^2  \right)\ \psi_5+8r^2 \rhoF^2\divv(\qf^\F)+4 r\rhoF^2 \psi_5\\
&=&\left( -\frac{ 1}{ 4} \ka\kab+5 \rhoF^2\right)\ \pf+8r^2 \rhoF^2\divv(\qf^\F)
\eeaa
as desired. 
\end{proof}

\begin{corollary}\label{wave-scalar-pf} The scalar quantity supported in $\ell=1$ $(\divv \pf)_{\ell=1}$ verifies the wave equation
\beaa
\Box_\g\left(r\divv\pf\right)_{\ell=1}-\left(\rho+4\rhoF^2\right)\left(r\divv\pf\right)_{\ell=1}&=&0
\eeaa
\end{corollary}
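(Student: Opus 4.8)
The plan is to obtain the scalar equation for $(r\divv\pf)_{\ell=1}$ directly from the one-tensor Fackerell-Ipser equation of Proposition \ref{wave-eq-pf} by applying the angular operator $r\DDd_1$, commuting it through the wave operator, and then projecting onto the $\ell=1$ spherical harmonics. Writing that equation as $\Box_\g\pf=-\left(\frac14\ka\kab-5\rhoF^2\right)\pf+8r^2\rhoF^2\divv\qf^\F$, I would first apply $r\DDd_1$ to both sides. Since the potential $\frac14\ka\kab-5\rhoF^2$ and the factor $8r^2\rhoF^2$ are functions of $r$ alone, hence constant on each sphere $S_{u,v}$, they commute with the purely angular operator $r\DDd_1$; in particular, reading off the first (divergence) component, $r\divv\left(8r^2\rhoF^2\divv\qf^\F\right)=8r^3\rhoF^2\divv\divv\qf^\F$.

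Next I would invoke the commutation identity \eqref{commutator-DDd-Box}, which rearranges to $\Box_0\, r\DDd_1\Phi = r\DDd_1\Box_1\Phi - K\, r\DDd_1\Phi$, applied with $\Phi=\pf$. Taking the divergence component yields
\beaa
\Box_\g(r\divv\pf) = -\left(\tfrac14\ka\kab-5\rhoF^2+K\right)(r\divv\pf)+8r^3\rhoF^2\divv\divv\qf^\F.
\eeaa
The coefficient is then collapsed using the linearized Gauss equation \eqref{Gauss}, $K=-\frac14\ka\kab-\rho+\rhoF^2$: substituting $-\frac14\ka\kab=K+\rho-\rhoF^2$ gives $-\left(\frac14\ka\kab-5\rhoF^2+K\right)=\rho+4\rhoF^2$, so that $\Box_\g(r\divv\pf)=\left(\rho+4\rhoF^2\right)(r\divv\pf)+8r^3\rhoF^2\divv\divv\qf^\F$.

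Finally I would project onto the $\ell=1$ mode. On the left, the projection commutes with the wave operator by \eqref{commute-wave}. On the right, recall that $\qf^\F$ is a symmetric traceless two-tensor (being $\underline{P}$ applied to $r^2\kab\ff$, with $\ff$ itself symmetric traceless); by the Hodge-decomposition properties recalled in Section \ref{spherical-harmonics}, the scalar $\divv\divv\qf^\F$ is supported on $\ell\geq2$. Multiplication by the radial factor $8r^3\rhoF^2$, constant on each sphere, preserves this support, so the whole source term has vanishing $\ell=1$ projection. This yields precisely $\Box_\g(r\divv\pf)_{\ell=1}-\left(\rho+4\rhoF^2\right)(r\divv\pf)_{\ell=1}=0$, and the $r\curll$ statement follows identically from the curl component.

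Once the commutation identity is in hand the argument is essentially algebraic; the only points requiring care are the bookkeeping of the potential — ensuring the Gauss equation is used to reduce $\frac14\ka\kab-5\rhoF^2+K$ exactly to $-(\rho+4\rhoF^2)$ — and the structural observation that the source is a double divergence of a symmetric traceless tensor and therefore drops out in $\ell=1$. This latter fact is exactly the mechanism that renders the $\ell=1$ Fackerell-Ipser equation homogeneous, which is what makes the subsequent scalar-wave estimates applicable.
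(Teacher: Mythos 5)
Your proposal is correct and takes essentially the same route as the paper's own proof: apply the commutation identity \eqref{commutator-DDd-Box} to $\pf$, project to $\ell=1$ using \eqref{commute-wave} so that the source $8r^3\rhoF^2\divv\divv\qf^\F$ (supported on $\ell\geq 2$ since $\qf^\F$ is symmetric traceless) drops out, and collapse the potential $K+\frac14\ka\kab-5\rhoF^2$ to $-(\rho+4\rhoF^2)$ via the Gauss equation \eqref{Gauss}. The paper states these steps in two lines; your write-up supplies exactly the intermediate algebra.
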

\begin{proof}
Using \eqref{commutator-DDd-Box} and \eqref{commute-wave}, we obtain
\beaa
\Box_\g(r\divv\pf)_{\ell=1}+(K+\frac 1 4 \ka\kab-5\rhoF^2)(r\divv\pf)_{\ell=1}&=&0
\eeaa
and using Gauss equation, we finally have the desired formula.
\end{proof}

\vspace{1cm}

\begin{flushleft}
\small{DEPARTMENT OF MATHEMATICS, COLUMBIA UNIVERSITY} \\
\textit{E-mail address}: {egiorgi@math.columbia.edu}
\end{flushleft}

\end{document}